\newtheorem{thr}{Theorem}
\numberwithin{equation}{section}
\numberwithin{thr}{section}
\numberwithin{chr}{section}
\numberwithin{df}{section}
\newcommand{\ket}[1]{\lvert\, #1\,\rangle}
\newcommand{\su}{\text{su(2)}}
\newcommand{\SU}{\text{SU(2)}}
\newcommand{\SUq}{\text{SU}_q\text{(2)}}
\newcommand{\hatE}{\widehat{E}}
\newcommand{\Eu}{E^{(\text{u})}}
\newcommand{\hatEu}{\widehat{E}^{(\text{u})}}
\newcommand{\Ed}{E^{(\text{d})}}
\newcommand{\hatEd}{\widehat{E}^{(\text{d})}}
\newcommand{\Eud}{E^{\text{u+d}}}
\newcommand{\hatEud}{\widehat{E}^{(\text{u+d})}}
\newcommand{\ju}{j^{\text{u}}}
\newcommand{\jd}{j^{\text{d}}}
\newcommand{\jud}{j^{\text{u}+\text{d}}}
\newcommand{\Hp}{{\mathcal{H}(p)}}
\newcommand{\Hpjujd}{{\mathcal{H}^{(\ju,\jd)}(p)}}
\newcommand{\Hpjj}{{\mathcal{H}^{(j,j)}(p)}}
\newcommand{\Hpju}{{\mathcal{H}^{(j,0)}(p)}}
\newcommand{\Hpjd}{{\mathcal{H}^{(0,j)}(p)}}
\newcommand{\Hej}{\Hpju}
\newcommand{\HEj}{\mathcal{H}^{(j)}_{e'}}
\newcommand{\Heju}{\mathcal{H}^{(\ju,0)}}
\newcommand{\Hejd}{\mathcal{H}^{(0,\jd)}}
\newcommand{\Hejej}{\Hpjujd}
\newcommand{\Heej}{\Hpjj}
\newcommand{\pij}[2]{#2}
\newcommand{\W}{\widehat{\mathcal{W}}}
\renewcommand{\d}{\text{d}}
\renewcommand{\i}{\romannumeral 1}
\newcommand{\ii}{\romannumeral 2}
\newcommand{\floor}[1]{\lfloor #1 \rfloor}
\newcommand{\abs}[1]{\lvert #1 \rvert}
\newcommand{\snsUnchanged}{\parbox[m]{1.2cm}{\includegraphics[width=1.2cm]{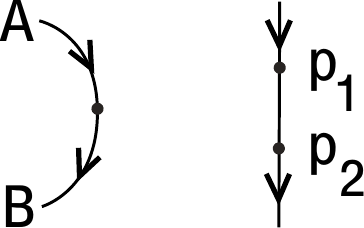}}}
\newcommand{\snsUpper}{\parbox[m]{1.2cm}{\includegraphics[width=1.2cm]{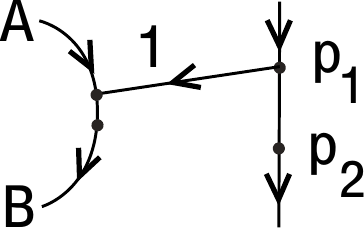}}}
\newcommand{\snsLower}{\parbox[m]{1.2cm}{\includegraphics[width=1.2cm]{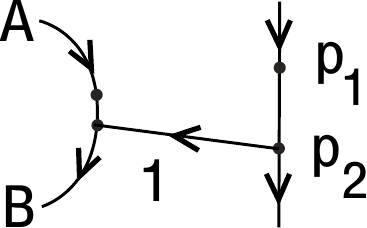}}}
\newcommand{\snsBoth}{\parbox[m]{1.2cm}{\includegraphics[width=1.2cm]{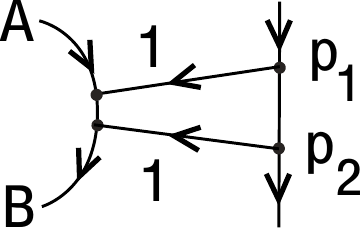}}}
\newcommand{\snsSingleDeformed}{\parbox[m]{1.2cm}{\includegraphics[width=1.2cm]{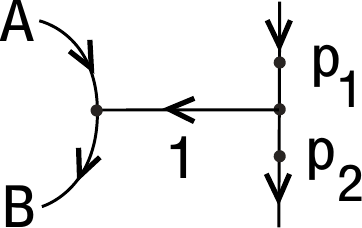}}}
\newcommand{\snsBothDeformed}{\parbox[m]{1.2cm}{\includegraphics[width=1.2cm]{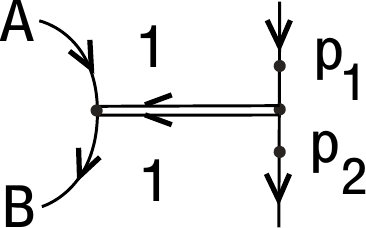}}}
\newcommand{\ABhj}{\parbox[m]{1.0cm}{\includegraphics[width=1.0cm]{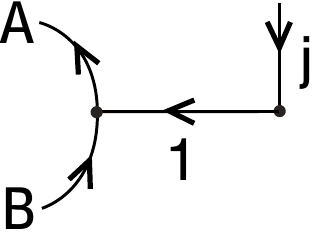}}}
\newcommand{\CBhj}{\parbox[m]{1.0cm}{\includegraphics[width=1.0cm]{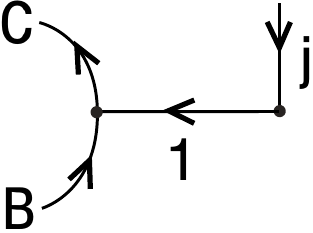}}}
\newcommand{\ABhjj}{\parbox[m]{1.0cm}{\includegraphics[width=1.0cm]{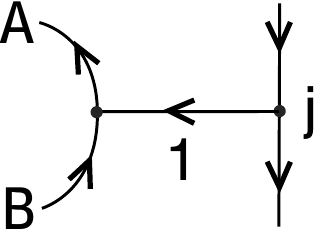}}}
\newcommand{\ABhhj}{\parbox[m]{1.0cm}{\includegraphics[width=1.0cm]{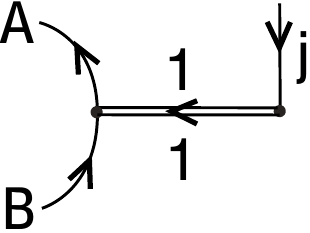}}}
\newcommand{\ABhhjj}{\parbox[m]{1.0cm}{\includegraphics[width=1.0cm]{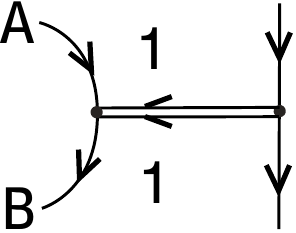}}}
\newcommand{\jstate}{\parbox[t]{0.15cm}{\includegraphics[width=0.15cm]{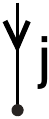}}}
\newcommand{\jjstate}{\parbox[m]{0.15cm}{\includegraphics[width=0.15cm]{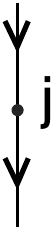}}}
\DeclareMathOperator{\id}{id}
\DeclareMathOperator{\one}{\mathbb{1}}
\DeclareMathOperator{\tr}{tr}
\DeclareMathOperator{\re}{Re}
\begin{document}

\title{Quantum surface holonomies for loop quantum gravity and their application to black hole horizons}

\author{Hanno Sahlmann}
\email{hanno.sahlmann@gravity.fau.de}
\author{Thomas Zilker}
\email{thomas.zilker@gravity.fau.de}
\affiliation{Friedrich-Alexander-Universit\"at  Erlangen-N\"urnberg (FAU)\\ Institute for Quantum Gravity, Staudtstraße 7, 91058 Erlangen, Germany}

\begin{abstract} 
In this work we define a new type of flux operators on the Hilbert space of loop quantum gravity.  We use them to solve an equation of the form $F(A)=c\,\Sigma$ in loop quantum gravity. This equation, which relates the curvature of a connection $A$ with its canonical conjugate $\Sigma=*E$, plays an important role for spherically symmetric isolated horizons, and, more generally, for maximally symmetric geometries and for the Kodama state. If the equation holds, the new flux operators can be interpreted as a quantization of surface holonomies from higher gauge theory. Also, they represent a kind of quantum deformation of SU(2). We investigate their properties and discuss how they can be used to define states that satisfy the isolated horizon boundary condition in the quantum theory. 
\end{abstract} 
      
\maketitle


\tableofcontents

\section{Introduction} 
\label{se_intro}
The classical boundary conditions on a spatial slice $\mathcal{H}$ of a spherically symmetric isolated horizon can be expressed \cite{Ashtekar:1997yu} by the very natural boundary condition \cite{Smolin:1995vq} 
\begin{equation}
\label{eq:ih_bc}
\iota_{\mathcal{H}}^{*}F(A)=C\, \iota_{\mathcal{H}}^{*} \,(*E) \, . 
\end{equation}
Here, $A$ and $E$ are canonically conjugate Ashtekar-Barbero variables \cite{Ashtekar:1986yd, Barbero:1994ap}, an \SU~connection and the corresponding electric field. We will take $E$ to be $\su$-valued, using the Cartan-Killing metric on $\su$.  $*E$ denotes the 2-form 
\begin{equation}
(*E)_{ab}= \epsilon_{abc}E^{c} \, .
\end{equation}
Before the invention of isolated horizons, a boundary condition of the form \eqref{eq:ih_bc}  has already been studied in \cite{Smolin:1995vq}. In that prescient work, Smolin has argued that the imposition of \eqref{eq:ih_bc} in the quantum theory leaves a quantized Chern-Simons theory on the boundary, with defects at the locations where quantized gravitational excitations of the bulk touch the boundary. This picture is the foundation of all later work on the entropy of isolated horizons. In the present work, we will investigate in how far the picture of \cite{Smolin:1995vq} can be derived from an operator version of  \eqref{eq:ih_bc} in the quantum theory. 

In loop quantum gravity (LQG), there exists a well-defined operator for the parallel transport induced by $A$, but $A$ itself, and by extension its curvature $F$, are not well-defined in the quantum theory. 
If one rewrites \eqref{eq:ih_bc} in terms of holonomies of $A$, what objects will one deal with in terms of $E$? And how can one implement  \eqref{eq:ih_bc} in LQG? It is important to answer these questions if one wants to solve the boundary conditions \eqref{eq:ih_bc} from within the formalism of LQG \cite{Sahlmann:2011xu}. 

It is interesting to note that equations of the form
\begin{equation}
\label{eq:sd3}
F(A)=C\, (*E)
\end{equation}
also play a role in different contexts. An equation very similar to \eqref{eq:ih_bc} is part of a condition for spherical symmetry \cite{Beetle:2016brg}. In that case the curvature is that of a related connection -- the spin connection $\Gamma$. 
Also the equation shows up in calculations of quantum gravity amplitudes \cite{Haggard:2014xoa,Haggard:2015yda,Haggard:2015kew}, in an LQG treatment of Chern Simons theory \cite{Sahlmann:2010bd,Sahlmann:2011uh,Sahlmann:2011rv}, and in the context of the Kodama state for LQG \cite{Kodama:1990sc,Ashtekar:1988sw,Smolin:2002sz,Bodendorfer:2016pxg}. In these cases, techniques to implement  \eqref{eq:sd3} might be useful.

One can use a non-Abelian generalization of Stokes' theorem \cite{Arefeva:1980} to obtain a holonomy around the boundary $\partial S$ of a simply connected surface as a function of its curvature $F(A)$:
\begin{equation}
\label{eq:s-exp}
h_{\partial S}= \mathcal{S}\exp\int_S -\mathcal{F} \, .
\end{equation}
This is a surface-ordered exponential integral, a higher dimensional analogue of the path-ordered exponential integral expressing the holonomy  as a function of $A$ on a curve. $\mathcal{F}$ is a suitable parallel transport of $F(A)$.
\eqref{eq:ih_bc} then implies that on a spherically symmetric horizon, the holonomy can similarly be expressed as 
\begin{equation}\label{eq:s-expE}
\mathcal{W}_S:= \mathcal{S}\exp\int_S -C\mathcal{E}. 
\end{equation}
Here and in the following, pullbacks to the horizon are assumed, but not written explicitly.

One can then impose \eqref{eq:ih_bc} in LQG by looking for states $\Psi$ such that 
\begin{equation}
\widehat{\mathcal{W}_S} \Psi = \widehat{h_{\partial S}} \Psi 
\end{equation}
for surfaces $S$ on the horizon. 

We must mention that in the remarkable article \cite{Bodendorfer:2016pxg} Bodendorfer suggests a route to solving \eqref{eq:sd3} that is different from what we propose here. He points out that by modifying the canonical momentum according to $E\mapsto E+*F$, one can regard the Ashtekar-Lewandowski vacuum as a solution of \eqref{eq:sd3}. The advantage of that method is that it is very clean and straightforward. However, functions of $E$ can then not be quantized straightforwardly. Still, \cite{Bodendorfer:2016pxg} contains suggestions for volume and for the Hamiltonian constraint. Our method works with a Hilbert space in which $E$ is still represented straightforwardly. The disadvantage is that it is not straightforward to identify solutions of \eqref{eq:sd3}. We also note that \cite{Bodendorfer:2016pxg} contains an important discussion of the question in how far \eqref{eq:ih_bc} is related to the symmetry of the horizon. We note that \cite{Bodendorfer:2016pxg} makes the argument that \eqref{eq:ih_bc} holds entirely due to symmetry. 

To understand the properties of $\widehat{\mathcal{W}_S}$, it is important to realize that \eqref{eq:ih_bc} and \eqref{eq:s-expE} have a deeper mathematical meaning in the framework of higher gauge theory. This is a formalism which categorically extends the notions of gauge theory. In particular, it defines higher gauge fields and corresponding notions of parallel transport along higher dimensional objects. In this context, \eqref{eq:ih_bc} is just the statement that $A$ and $E$ together define a 2-connection, and \eqref{eq:s-expE} is the parallel transport across a surface $S$. These aspects of the problem are explained in the companion paper \cite{Zilker:2017aey}. They naturally explain the reparametrization independence and other properties of \eqref{eq:s-expE}. 

The quantization of \eqref{eq:s-expE} adds another layer of complexity and is explored in the present work. In LQG, the components of the field $E$ are somewhat singular operators, and they do not commute in the quantum theory. Therefore \eqref{eq:s-expE} presents a host of problems when trying to transfer it to the quantum theory. The non-commutativity is of the type of an $\SU$ current algebra, 
\begin{equation}
[\widehat{E}_i(x),\widehat{E}_j(y)]=\delta_{x,y} \,f_{ij}{}^k \, \widehat{E}_k(x) \, ,
\end{equation}
where $f_{ij}{}^k$ are the structure constants of $\SU$. One can use the fact that it derives from a symplectic structure on $\su^*$ to quantize the surface holonomies $\mathcal{W}_S$ \eqref{eq:s-expE} using the Duflo-Kirillov map \cite{Duflo:1977}. This gives the object special properties \cite{Sahlmann:2011rv, Sahlmann:2015dna}. In  \cite{Sahlmann:2015dna}, the action of  $\widehat{\mathcal{W}_S}$ was determined only on special states. The first result of this work is the extension of the action of this operator to a large class of LQG states. In particular, we are investigating the action on edges carrying arbitrary spin, and we are carefully defining the action at vertices. The latter is important when considering repeated application of surface holonomy operators.  

At the core of the quantization of $\mathcal{W}_S$ is the application of the Duflo-Kirillov map to a function of the form 
\begin{equation}
W= \exp(E^i T_i)
\end{equation}
with $T_I$ a basis of $\su$, and 
\begin{equation}
\{E_i,E_j\}=f_{ij}{}^k E_k. 
\end{equation}
In other words, we are looking for the Duflo-Kirillov quantization of the exponential map. The resulting object, and by extension the quantum surface holonomies $\widehat{\mathcal{W}_S}$ are operator valued matrices with non-commuting entries, 
\begin{equation}
\widehat{W}
= \begin{pmatrix}
\widehat{a}&\widehat{b}\\
-\widehat{b}^\dagger &\widehat{a}^\dagger
\end{pmatrix}.
\end{equation}
We analyze their properties and show that they still retain many properties of $\SU$ group elements. Thus, we are dealing with a kind of quantum deformation of $\SU$. The eigenvalues of traces of $\widehat{\mathcal{W}_S}$ can be expressed in terms of quantum integers, but the commutation relations between the components seem to be of a different kind than the ones described by an R-matrix. This is the second set of results of the present work. 

Coming back to the physics aspects, in the last part of the article we start to analyze what kind of states fulfill the quantum version of the isolated horizon boundary condition \eqref{eq:ih_bc}. We find that a relevant operator seems to be the determinant of $\widehat{\mathcal{W}_S}$ on the horizon. In general it is not equal to $1$, meaning that, according to \eqref{eq:s-exp}, also the holonomies must have quite non-classical properties on the horizon. However, in the holonomy-flux algebra of LQG, the holonomies $h$ all fullfill $\det h=1$. One option is thus to reject states on which $\det \widehat{\mathcal{W}_S}\neq 1$ on the basis that the quantum version of \eqref{eq:ih_bc} cannot be fulfilled. The other option is to \emph{define} the holonomies on the horizon by the $\widehat{\mathcal{W}_S}$. We consider the implications of this identification for very simple states with only two punctures and find that again $\det \widehat{\mathcal{W}_S}$ is relevant for the question whether a state can reasonably be said to solve the IH boundary conditions.

\section{Surface holonomies and the isolated horizon boundary condition}
\label{sec:ClassSurfHol}

In this chapter, we will explain the classical setting and introduce some of our conventions and notation (those related to the quantum theory will be introduced in the next chapter).\\
As already mentioned in the introduction, the basic variables used in LQG are not the Ashtekar-Barbero variables $A$ and $E$ directly, but rather certain smearings of those. For the connection $A$, these smearings are so-called holonomies, which are given explicitly by
\begin{equation}
\begin{split}
h_\alpha[A]&=\mathcal{P}\exp \left( -\int_\alpha A\right)\\
&=\one +\sum_{n=1}^\infty (-1)^n \int_0^1 \text{d}t_1\int_0^{t_1}\text{d}t_2\ldots \int_0^{t_{n-1}}\text{d}t_n  A_{a_1} (\alpha(t_1))\dot{\alpha}^{a_1}(t_1)\ldots A_{a_n} (\alpha(t_n))\dot{\alpha}^{a_n}(t_n) \, .
\end{split}
\end{equation}
Note that $\alpha(t)$ can be any parametrization of the path $\alpha$ and $h_{\alpha}$ will not depend on it. We now want to write down a similar formula for the surface-ordered exponential from equation \eqref{eq:s-expE}. However, in contrast to paths, two-dimensional surface are a priori not equipped with a natural order. In order to have a chance of defining the surface-ordered exponential, we would therefore need to add an ordering of the surface $S$ as an additional structure to the data on which the surface holonomy depends. For example, in \citep{Arefeva:1980} lexicographical ordering is used with respect to some given parametrization of the surface. However, instead of using an ordered surface as label for the surface holonomies, we will be guided by insights from higher gauge theory \citep{Baez:2005qu, Schreiber:2011, Schreiber:2013, Martins:2007uki, Martins:2008, Pfeiffer:2003je, Girelli:2003ev} (see also \cite{Baez:2010ya} for an excellent review). From the perspective of higher gauge theory, the isolated horizon boundary condition just states that, on the horizon surface $\mathcal{H}$, the LQG variables $A$ and $C(*E)$ form a 2-connection \citep{Zilker:2017aey}. The surface holonomies also show up in higher gauge theory, although their definition is rather abstract in this context. However, the main message from higher gauge theory is that surface holonomies are group elements that are actually not associated to surfaces, but to homotopies!\footnote{More precisely, they only depend on equivalence classes of homotopies with respect to thin homotopy. This property is analogous to the parametrization independence of ordinary (path) holonomies.}\\
Let us briefly recall the definition of a homotopy. Consider two paths $\alpha$ and $\beta$ with the same starting and end points. A homotopy $h:\alpha \Rightarrow \beta$ from $\alpha$ to $\beta$ is a continuous map 
\begin{equation}
h:[0,1]\times [0,1] \rightarrow \Sigma
\end{equation}
such that
\begin{align}
h(0,t)&=\alpha(t) & h(s,0)&=\alpha(0)=\beta(0)\\
h(1,t)&=\beta(t) & h(s,1)&=\alpha(1)=\beta(1) \, .
\end{align}
Homotopies can be composed in two distinct ways. Given homotopies  $h_1:\alpha_1 \Rightarrow \beta_1$, $h_2:\alpha_2 \Rightarrow \beta_2$ with $\alpha_2(0)=\alpha_1(1)$  and $\beta_2(0)=\beta_1(1)$, there is a natural composition called horizontal composition $\circ_h$ of 2-morphisms in the path 2-groupoid $\mathcal{P}_2(\Sigma)$ yielding a homotopy from $\alpha_2\circ\alpha_1\Rightarrow\beta_2\circ\beta_1$.  Explicitly, 
\begin{equation}
(h_2\circ_h h_1)(s,t)=\begin{cases} (\id_{h_2(0,0)}\circ h_1(2s,\cdot))(t) & \text{ for } s\in[0,\sfrac{1}{2}]\\
(h_2(2s-1,\cdot)\circ h_1(1,\cdot))(t) & \text{ for } s\in[\sfrac{1}{2},1]
\end{cases}\, .
\end{equation}
The second type of composition in $\mathcal{P}_2(\Sigma)$ called vertical composition, and it is defined for homotopies $h_1:\alpha_1 \Rightarrow \beta_1$ and $h_2:\alpha_2 \Rightarrow \beta_2$ if $\beta_1=\alpha_2$. In this case, vertical composition works just like path composition in the $s$-parameter of homotopies, i.e.
\begin{equation}
(h_2\circ_v h_1)(s,t)=\begin{cases}h_1(2s,t)& \text{ for } s\in[0,\sfrac{1}{2}]\\
h_2(2s-1,t)& \text{ for } s\in[\sfrac{1}{2},1]
\end{cases}\, .
\end{equation}
At this point, we could define abstract classical surface holonomies as 2-functors from the path 2-groupoid to a 2-group as is done in higher gauge theory. On the level of 2-morphisms, these associate group elements to equivalence classes of homotopies. However, we want to give an explicit formula for those surface holonomies and, in order for this formula to be well-defined, we need the homotopies to satisfy certain additional requirements.
For every homotopy $H$, we define a corresponding surface $S_H$ as the interior of the image of $H$. In order for the surface-ordered exponential integral over these $S_H$ to be well-defined, they need to be equipped with an order. If we assume the homotopies $H$ to be one-to-one, they will induce a surface ordering by choosing lexicographical ordering on the parameter space $[0,1]\times [0,1]$. Note that the one-to-one assumption can be violated on measure-zero sets without changing the value of the integral.\footnote{Since we consider only homotopies with fixed end points, the homotopies themselves can actually never be one-to-one maps. However, the only problematic points in this regard are the end points of the paths $H_{s}(t) := H(s,t)$ and they definitely form a subet of measure zero.} We will also require our homotopies to be differentiable because we want to use them as parametrizations for the surfaces $S_H$ in the following.
Now, given a homotopy $H$, we define canonical paths $\alpha_x$ from $x_0 := H(0,1)$ to any point $x = H(s_x,t_x)$ in the surface $S_H$ via
\begin{equation}
\alpha_x (t) =H(s_x,1 - (1-t_x)t) \, ,
\end{equation}
and for every 2-form $B$ we introduce the notation
\begin{equation}
\mathcal{B}(x)= h^{-1}_{\alpha_x} B(x)h_{\alpha_x}^{\vphantom{-1}}
\end{equation}
that has already been used in the introduction. This allows us to write the surface-ordered exponential as
\begin{equation}
\begin{split}
\mathcal{W}_H[A,B]&= \mathcal{S}\exp \left( -\int_{S_H} \mathcal{B} \right)\\
&:=\one + \sum_{n=1}^\infty (-1)^n \underset{\small\begin{array}{c} S_H\times \ldots \times S_H\\ p_1\geq\ldots\geq p_n\end{array}}{\int\ldots \int}  \mathcal{B}(p_1)\ldots\mathcal{B}(p_n) \\
&=\one + \sum_{n=1}^\infty (-1)^n \int_0^1 \text{d}s_1 \int_0^1 \text{d}t_1 \int_0^{s_1}\text{d}s_2  \int_0^1 \text{d}t_2\ldots \\
&\qquad\qquad\qquad\qquad\ldots \int_0^{s_{n-1}}\text{d}s_n  \int_0^1 \text{d}t_n\, (\mathcal{B}_{a_1b_1} H^{a_1}_{,s}H^{b_1}_{,t})(s_1,t_1)\ldots(\mathcal{B}_{a_nb_n} H^{a_n}_{,s}H^{b_n}_{,t})(s_n,t_n) \, .
\end{split}
\label{eqn:surf_hol_explicit_formula}
\end{equation}
In the last line, we have used the homotopy $H$ as parametrization for the surface $S_H$ and we have ignored the ordering in the $t$-parameter since this is only relevant on subsets of measure zero. This surface-ordered integral was first defined in \cite{Arefeva:1980}, where it was used to prove a non-abelian version of Stokes' theorem. In our notation, the non-abelian Stokes' theorem can be written as
\begin{equation}
\mathcal{W}_H[A,F(A)]=h_{H(1,\cdot)}[A] \, ,
\end{equation}
where $H$ is assumed to be a homotopy from the constant path $\id_{x_0}$ to the path given by the boundary $\partial S_H$ which starts and ends at $x_0 \in \partial S_H$. From this point onward, we will always consider homotopies to be of this type. This will ensure that any two homotopies can be horizontally composed, if they have the same distinguished point $x_0$. Furthermore, the resulting homotopy will again be of this form with the same distinguished point.\\

Let us now have a look at the boundary condition for spherically spherically symmetric isolated horizons
\begin{equation}
\iota^{~\,a}_{\mathcal{H},\alpha} \, \iota^{~\,b}_{\mathcal{H},\beta} F(A)_{ab}^i=C \, \iota_{\mathcal{H},\alpha}^{~\,a} \, \iota_{\mathcal{H},\beta}^{~\,b} \, \epsilon_{abc}\,\kappa^{ij} E^c_j \, ,
\label{eqn:IHBC}
\end{equation}
where $\iota_{\mathcal{H}}$ is an embedding of the two-dimensional intersection $\mathcal{H}$ of the isolated horizon and the spatial 3-manifold $\Sigma$ into the latter and
\begin{equation}
C=\frac{4\pi (1-\beta^2)}{a_\mathcal{H}} \, ,
\end{equation}
with $a_\mathcal{H}$ denoting the area of $\mathcal{H}$ \citep{Engle:2009vc}. Equation \eqref{eqn:IHBC} is the same condition that was already stated in the introduction as \eqref{eq:ih_bc}, but here we have explicitly written down all the indices involved. Applying the surface-ordered exponential integral on both sides leads us to
\begin{equation}
\mathcal{W}_H[A,C \, (*E)] = \mathcal{W}_H[A,F(A)]  = h_{H(1,\cdot)}[A] \, .
\label{eqn:expIHBC}
\end{equation}
The trace of this exponentiated and integrated condition has already been studied in \citep{Sahlmann:2011rv, Sahlmann:2011uh}. In a companion paper \citep{Zilker:2017aey}, we actually proof the following theorem:
\begin{thr}
The following are equivalent (using the notation introduced above): 
\begin{enumerate}[(i)]
\item $\iota^{~\,a}_{\mathcal{H},\alpha} \, \iota^{~\,b}_{\mathcal{H},\beta} F(A)_{ab}^i (x) = C\iota^{~\,a}_{\mathcal{H},\alpha} \, \iota^{~\,b}_{\mathcal{H},\beta} \epsilon_{abc}\kappa^{ij} E^c_j(x) \qquad \forall \text{ } x \in \mathcal{H}$\,.
\item $\mathcal{W}_H[A,C \, (*E)] = h_{H(1,\cdot)}[A] \qquad \forall \text{ homotopies } H\text{, s.t. } S_{H} \subset \mathcal{H}$\,.
\end{enumerate}
\end{thr}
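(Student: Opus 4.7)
The plan is to prove the two implications separately. The direction (i)~$\Rightarrow$~(ii) is an immediate consequence of the non-abelian Stokes theorem recalled in the text: given the pointwise identity $F(A)=C\,(*E)$ on $\mathcal{H}$, the 2-forms appearing in $\mathcal{W}_H[A,F(A)]$ and $\mathcal{W}_H[A,C\,(*E)]$ coincide at every point of $S_H\subset\mathcal{H}$; the parallel-transport prefactors $h_{\alpha_x}$ hidden in the definition of $\mathcal{B}$ depend only on $A$ and on $H$, so they are the same for both choices of 2-form. Hence the surface-ordered series \eqref{eqn:surf_hol_explicit_formula} agree term by term, and the common value equals $h_{H(1,\cdot)}[A]$ by Stokes.

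For the converse direction (ii)~$\Rightarrow$~(i), applying Stokes to the hypothesis gives $\mathcal{W}_H[A,C\,(*E)] = \mathcal{W}_H[A,F(A)]$ for every admissible homotopy $H$ with $S_H\subset\mathcal{H}$. To extract a pointwise statement I would localize. For any $x\in\mathcal{H}$ and any pair of tangent vectors $u,v\in T_x\mathcal{H}$, I would choose $x_0=x$ as distinguished basepoint and construct a one-parameter family $H_\epsilon$ of admissible homotopies whose image $S_{H_\epsilon}$ is a small parallelogram with one corner at $x$ and edges approximately $\epsilon u$ and $\epsilon v$. Expanding \eqref{eqn:surf_hol_explicit_formula} in $\epsilon$, only the $n=1$ term contributes at the leading nontrivial order; the parallel-transport prefactors in $\mathcal{B}$ reduce to the identity in the limit because all canonical paths $\alpha_x$ contract to the point $x$. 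A short calculation then yields
\begin{equation*}
\mathcal{W}_{H_\epsilon}[A,B]=\one - \epsilon^2\, B_{ab}(x)\, u^a v^b + O(\epsilon^3)
\end{equation*}
for any smooth 2-form $B$. Subtracting the expansions for $B=F(A)$ and $B=C\,(*E)$ and reading off the $\epsilon^2$ coefficient gives $[F(A)-C\,(*E)]_{ab}(x)\, u^a v^b = 0$; since $x$, $u$ and $v$ are arbitrary with $u,v$ tangent to $\mathcal{H}$, this is precisely the pullback identity (i).

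The main obstacle is the second direction, and specifically two technical points in the localization step. First, one must verify that for each infinitesimal parallelogram there actually is an admissible homotopy of the required type, i.e.\ one starting from $\id_{x_0}$, ending on $\partial S_{H_\epsilon}$, and one-to-one on the interior of $[0,1]^2$ away from a set of measure zero; this is arranged by placing $x_0=x$ at a corner so that $\partial S_{H_\epsilon}$ is a loop through $x$ and by choosing a smooth collapse of three sides of the parameter square to $x$. Second, one must check that the higher-order terms in the surface-ordered series are genuinely of order $\epsilon^3$ or smaller, which follows because the $n$-th term in \eqref{eqn:surf_hol_explicit_formula} involves an integral over an $n$-fold product of parameter squares and scales at least as $\epsilon^{2n}$. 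Beyond these technicalities, the argument reduces to a Taylor expansion on top of the non-abelian Stokes theorem.
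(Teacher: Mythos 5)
Your proposal is correct and follows essentially the same route as the paper's own argument: the forward direction is immediate from the non-abelian Stokes theorem, and the converse is obtained by shrinking a family of admissible homotopies around an arbitrary point $x\in\mathcal{H}$, dividing by the area (your $\epsilon^{2}$), and observing that only the $n=1$ term of the surface-ordered series survives while the parallel transports trivialize. Your parallelogram construction with tangent vectors $u,v$ is just a slightly more explicit way of extracting the components of the 2-form than the paper's division by $Ar(S_{H_t})$, but the underlying localization argument is the same.
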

We already mentioned in the introduction that there are well-defined quantum operators associated to path holonomies in LQG. The following chapters will thus be devoted to finding a quantization of the surface holonomies appearing in condition \eqref{eqn:expIHBC}, to analyzing the properties of those quantum surface holonomy operators and to solving the quantum version of \eqref{eqn:expIHBC} on the LQG Hilbert space.
\section{Quantization of surface holonomies} 
\label{se_quant}
The aim of this chapter is to define quantum operators for the surface holonomies from the previous chapter on the LQG Hilbert space. In order to do, we will first introduce some further notation from LQG. Let $\Psi_{\gamma}$ denote a spin network state associated to the graph $\gamma$. The action of the $E$-field on such a state can formally be written as
\begin{equation}
\widehat{E}^a_k(x)\Psi_\gamma=8\pi G\hbar \, \beta i\,  \sum_{e\in\gamma} e^a(x) \widehat{E}^{(e)}_k(x)\Psi_\gamma \, .
\label{eqn:E-field_distributional_action}
\end{equation}
Here, the factor $e^a(x)$ makes sure that the action of the operator is concentrated on the graph $\gamma$. It is explicitly given by
\begin{equation}
e^a(x)=\int \dot{e}^a(t) \,\delta^{(3)}(x,e(t))\, \d t \, .
\end{equation}
The $\widehat{E}^{(e)}_k(x)$ obey the commutation relation
\begin{equation}
[\widehat{E}_{i}^{(e)}(p) , \widehat{E}_{j}^{(e')}(p') ]=\delta_{e,e'}\delta_{p,p'} f_{ij}^{~~k}\widehat{E}_k^{(e)}(p) \, ,
\end{equation}
with $f_{ij}^{~~k}$ denoting the structure constants of \su ~in a specific basis $T_i$ satisfying
\begin{equation*}
[T_i,T_j]= f_{ij}{}^k\, T_k \, ,
\end{equation*}
and they act in the representation space associated to the corresponding edge $e$. Note that they behave like genuine $\su$ elements, i.e. without the additional factor $i$ that is typically used in physics when dealing with angular momentum operators.\\
As already indicated above, however, expression \eqref{eqn:E-field_distributional_action} is merely formal in the sense that $\widehat{E}^a_k(x)$ is not an operator but an operator-valued distribution. Therefore, an appropriate smearing is required and in LQG one usually considers the flux operators
\begin{equation}
\begin{split}
\widehat{E}_{S} := \int_S\widehat{E}^a_k(x)\epsilon_{abc}\, \d x^b \d x^c \, \Psi_\gamma &= 8\pi G\hbar \, \beta i\,  \sum_{p} \sum_{e \text{ at } p} \kappa(e,S) \widehat{E}_{k}^{(e)}(p) \, \Psi_\gamma \\
&= 8\pi G\hbar \, \beta i\,  \sum_{p} \left[ \hatEu_{k}(p) - \hatEd_{k}(p) \right]\, \Psi_\gamma \\
&=: 8\pi G\hbar \, \beta i\, \sum_p \widehat{E}_k(p)\, \Psi_\gamma \, ,
\end{split}
\label{eqn:LQG_flux}
\end{equation}
where the sum over $p$ runs over all punctures of the spin network graph $\gamma$ with the surface $S$ and
\begin{equation}
\label{eqn:kappa}
\kappa(e,S)=\begin{cases}
+1 & \text{ if $e$ lies above $S$ }\\
-1 & \text{ if $e$ lies below $S$}\\
0 & \text{ otherwise}
\end{cases}
\end{equation}
encodes the relative orientation of $S$ with respect to each edge $e$ in $\gamma$. In the last line of \eqref{eqn:LQG_flux}, we have defined
\begin{equation}
\widehat{E}_k(p) = \hatEu_{k}(p) - \hatEd_{k}(p)
\end{equation}
in terms of the operators
\begin{equation}
\hatEu_{k}(p)= \sum_{\small\begin{array}{c} e \text{ at } p\\ e \text{ above } S\end{array}} \widehat{E}_{k}^{(e)}(p)
\qquad \qquad \text{and} \qquad\qquad
\hatEd_{k}(p)= \sum_{\small\begin{array}{c} e \text{ at } p\\ e \text{ below } S\end{array}} \widehat{E}_{k}^{(e)}(p) \, ,
\end{equation}
which naturally showed up in the second line. Eventually, let us define
\begin{equation*}
\widehat{E}(p) := \kappa^{ij}\,T_i\widehat{E}_j(p) \, ,
\end{equation*}
where $\kappa^{ij}$ are the components of the inverse of the Cartan-Killing metric
\begin{equation*}
\kappa_{ij} = \tr \left( \operatorname{ad}_{T_i} \operatorname{ad}_{T_j} \right) \, .
\end{equation*}

We can now start evaluating the surface-ordered exponential as defined in \eqref{eqn:surf_hol_explicit_formula}. Consider a surface $S_H$ defined by a homotopy $H$, and  a fixed graph $\gamma$. Denote by $\mathcal{H}_\gamma$ the Hilbert space of cylindrical functions with respect to this graph and let $N$ be the number of punctures of $\gamma$ with $S_H$. The punctures $p_1,\ldots,p_N$ are labeled such that $p_1\leq\ldots\leq p_N$ with respect to the order on $S_H$ induced by $H$. Using
\begin{equation}
c:=-8\pi G\hbar \, \beta i\, C \, ,
\label{eqn:def_c}
\end{equation}
we then obtain
\begin{equation}
\begin{split}
\left.\widehat{\mathcal{W}}_H\right\rvert_{\mathcal{H}_\gamma}
&=\one + \sum_{n=1}^\infty c^{\,n} 
\underset{\small\begin{array}{c} S_H\times \ldots \times S_H\\ x_1\leq\ldots\leq x_n\end{array}}{\int\ldots \int} 
\left. (\ast\widehat{\mathcal{E}})(x_n)\ldots(\ast\widehat{\mathcal{E}})(x_1)\right\rvert_{\mathcal{H}_\gamma}\\
&= \one + \sum_{n=1}^{\infty}c^{\,n} \underset{k_1+\ldots + k_N=n}{\sum_{k_1, \ldots, k_N =  0}}
\frac{1}{k_1!\ldots k_N!}  
\left[h^{-1}_{\alpha_{p_N}} \widehat{E}(p_N) h_{\alpha_{p_N}}\right]^{k_N} \ldots
\left[h^{-1}_{\alpha_{p_1}} \widehat{E}(p_1) h_{\alpha_{p_1}}\right]^{k_1} \\
&= \one + \sum_{n=1}^{\infty}c^{\,n} \underset{k_1+\ldots + k_N=n}{\sum_{k_1, \ldots, k_N =  0}}
\frac{1}{k_1!\ldots k_N!}
\left(h^{-1}_{\alpha_{p_N}} T_{i_{n-k_N+1}}\ldots T_{i_{n}} h_{\alpha_{p_N}}\right) \ldots
\left(h^{-1}_{\alpha_{p_1}}T_{i_1}\ldots T_{i_{k_1}} h_{\alpha_{p_1}}\right)  \times \\
&\qquad\qquad\qquad \qquad\qquad\qquad \times 
\kappa^{i_1j_1}\ldots\kappa^{i_nj_n} 
\left[\widehat{E}_{j_{n-k_N+1}}(p_N)\ldots\widehat{E}_{j_{n}}(p_N) \right]
\ldots
\left[\widehat{E}_{j_1}(p_1)\ldots\widehat{E}_{j_{k_1}}(p_1) \right] \, .
\end{split}
\end{equation}
Obviously, the factors within each of the square brackets do not commute, which implies that there is an ordering ambiguity. Following \citep{Sahlmann:2011rv,Sahlmann:2011uh}, we will use the Duflo-Kirillov map $Q_{DK}$ to resolve this ambiguity. We will make this ordering choice explicit in the notation by writing
\begin{equation}
\label{eqn:surf_hol_Npunct}
\begin{split}
\left.\widehat{\mathcal{W}}_H\right\rvert_{\mathcal{H}_\gamma}
&= \one + \sum_{n=1}^{\infty}c^{\,n} \underset{k_1+\ldots + k_N=n}{\sum_{k_1, \ldots, k_N =  0}^{n}}
\frac{1}{k_1!\ldots k_N!} \left(h^{-1}_{\alpha_{p_N}} T_{i_{n-k_N+1}}\ldots T_{i_{n}} h_{\alpha_{p_N}}\right) \ldots 
\left(h^{-1}_{\alpha_{p_1}}T_{i_1}\ldots T_{i_{k_1}} h_{\alpha_{p_1}}\right)\times\\
&\qquad\qquad\qquad \qquad\qquad\qquad \times 
\kappa^{i_1j_1}\ldots\kappa^{i_nj_n} 
Q_{\text{DK}}\left[{E}_{j_{n-k_N+1}}(p_N)\ldots {E}_{j_{n}}(p_N) \right]
\ldots
Q_{\text{DK}}\left[{E}_{j_1}(p_1)\ldots{E}_{j_{k_1}}(p_1) \right] \, .
\end{split}
\end{equation}
Recall that
\begin{equation}
{E}_{k}(p) = \Eu_{k}(p) - \Ed_{k}(p) = \sum_{\small\begin{array}{c} e \text{ at } p\\ e \text{ above} S\end{array}} {E}_{k}^{(e)}(p) - \sum_{\small\begin{array}{c} e \text{ at } p\\ e \text{ below } S\end{array}} {E}_{k}^{(e)}(p) \, ,
\end{equation}
and while $\hatEu_{k}(p) = Q_{DK} \left( \Eu_{k}(p) \right)$, $\hatEd_{k}(p)$ and $\widehat{E}_{k}^{(e)}(p)$ all behave like \su ~elements, $\hatE_{k}(p)$ does not! Therefore, we will have to decide whether we consider $\Eu_{k}(p)$ and $\Ed_{k}(p)$ as basic quantities and only order these using the Duflo-Kirillov map or whether we apply $Q_{DK}$ to ${E}_{k}^{(e)}(p)$ for all $e$ independently. While the latter approach sounds more fundamental, the first option enables the explicit calculations in the next chapter and we will therefore stick to it throughout this paper.\\
Specializing to the case of a single puncture, equation \eqref{eqn:surf_hol_Npunct} becomes
\begin{equation}
\label{eqn:surfhol_singlepunct}
\begin{split}
\left.\widehat{\mathcal{W}}_H\right\rvert_{\mathcal{H}_\gamma}
&= \one + \sum_{n=1}^{\infty}c^{\,n} \frac{1}{n!} 
\left(h^{-1}_{\alpha_{p}}T_{i_1}\ldots T_{i_n} h_{\alpha_{p}}\right) 
\kappa^{i_1j_1}\ldots\kappa^{i_nj_n}Q_{\text{DK}}\left[{E}_{j_1}(p)\ldots{E}_{j_{n}}(p) \right]\\
&=h^{-1}_{\alpha_{p}}\, Q_{\text{DK}}\left[  \exp \left( c \, T_{i} \kappa^{ij} E_{j}(p) \right)  \right]  h_{\alpha_{p}}\\
&=: h^{-1}_{\alpha_{p}}\, Q_{\text{DK}}\left[  W_{p}  \right]  h_{\alpha_{p}} \, .
\end{split}
\end{equation}
In the last line, the notation $Q_{\text{DK}}\left[  W_{p}  \right]$ indicates that when the resulting operator acts on a spin network state, the result only depends on the edges that start or end at the puncture $p$. However, information about the surface $S_H$ is still present in the splitting $\widehat{E}_i = \hatEu_i - \hatEd_i$, where the co-normal to $S_H$ at $p$ determines which edges contribute to $\hatEu_i$  and $\hatEd_i$, respectively.\\
We can use these explicit formulas for the quantum surface holonomies to prove the following theorem:
\begin{thr}
\label{theorem:factorization}
Consider a graph $\gamma$, a homotopy $H$ and homotopies $H_1,\ldots, H_m$ such that  
\begin{equation}
\label{eqn:horiz_comp}
H=H_m\circ_h\ldots\circ_h H_1
\end{equation}
where $S_{H_i}$ is punctured by $\gamma$ at most once, and $\partial S_{H_i}\cap \gamma=\emptyset$. As mentioned before, we still assume all homotopies starting from the trivial path! Then 
\begin{equation}
\left.\widehat{\mathcal{W}}_H\right\rvert_{\mathcal{H}_\gamma}
=\left.\widehat{\mathcal{W}}_{H_m}\right\rvert_{\mathcal{H}_\gamma} \ldots
\left.\widehat{\mathcal{W}}_{H_1}\right\rvert_{\mathcal{H}_\gamma} \, .
\label{eqn:factorization}
\end{equation}
\end{thr}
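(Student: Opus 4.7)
The plan is to compare the explicit $N$-puncture formula \eqref{eqn:surf_hol_Npunct} for $\widehat{\mathcal{W}}_H|_{\mathcal{H}_\gamma}$ with the proposed product by (i) factorising the multi-index series into a product over punctures, and (ii) identifying each per-puncture factor with a single-puncture surface holonomy $\widehat{\mathcal{W}}_{H_{\sigma(k)}}|_{\mathcal{H}_\gamma}$ via a comparison of canonical paths in $H$ and in $H_{\sigma(k)}$. As a preliminary observation, since $\partial S_{H_i}\cap\gamma=\emptyset$ each of the $N$ punctures $p_1\leq\ldots\leq p_N$ of $\gamma$ with $S_H$ lies in the interior of a unique sub-surface $S_{H_{\sigma(k)}}$, and the $s$-ordering of punctures refines the strip decomposition of horizontal composition, so $\sigma$ is strictly increasing; factors with $j\notin\mathrm{Im}\,\sigma$ reduce to the identity by the zero-puncture case of \eqref{eqn:surfhol_singlepunct}, so the theorem reduces to establishing $\widehat{\mathcal{W}}_H|_{\mathcal{H}_\gamma}=\widehat{\mathcal{W}}_{H_{\sigma(N)}}|_{\mathcal{H}_\gamma}\cdots\widehat{\mathcal{W}}_{H_{\sigma(1)}}|_{\mathcal{H}_\gamma}$.

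\textbf{Step 1: Factorisation of the series.} In the summand of \eqref{eqn:surf_hol_Npunct}, $Q_{\text{DK}}$ is already applied separately to the fluxes at each puncture, and each index contraction $\kappa^{ij}$ is internal to a single per-puncture bracket, so the summand factorises across punctures. Converting the sum over $n$ together with the constrained multi-index sum over $(k_1,\ldots,k_N)$ into an unconstrained product of independent sums yields
\[
\left.\widehat{\mathcal{W}}_H\right\rvert_{\mathcal{H}_\gamma}
=h^{-1}_{\alpha_{p_N}^{H}}\,Q_{\text{DK}}[W_{p_N}]\,h_{\alpha_{p_N}^{H}}\;\cdots\;h^{-1}_{\alpha_{p_1}^{H}}\,Q_{\text{DK}}[W_{p_1}]\,h_{\alpha_{p_1}^{H}},
\]
in decreasing $k$ as dictated by the surface-ordering. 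This already matches the right-hand side of \eqref{eqn:factorization} up to the replacement of the global canonical path $\alpha_{p_k}^{H}$ by the canonical path $\alpha_{p_k}^{H_{\sigma(k)}}$ inside the relevant sub-homotopy required by \eqref{eqn:surfhol_singlepunct}.

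\textbf{Step 2: Canonical paths from horizontal composition.} It remains to show $h_{\alpha_{p_k}^{H}}=h_{\alpha_{p_k}^{H_{\sigma(k)}}}$. For $s$ in the $s$-strip of $S_{H_{\sigma(k)}}$, an inductive unwinding of the recursive definition of $\circ_h$ shows that $H(s,\cdot)$ is constant at $x_0$ above the $H_{\sigma(k)}$-strip in $\tau$ and is a monotone reparameterisation of $H_{\sigma(k)}(s',\cdot)$ inside that strip; it never enters the auxiliary boundary strips $\partial S_{H_j}$ for $j<\sigma(k)$ on the $\tau$-range $[t_{p_k},1]$ that $\alpha_{p_k}^{H}$ actually traverses. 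Consequently, $\alpha_{p_k}^{H}$ differs from $\alpha_{p_k}^{H_{\sigma(k)}}$ only by a constant prefix at $x_0$ and a reparameterisation, and thus carries the same holonomy. Substituting into the product of Step 1 yields \eqref{eqn:factorization}.

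\textbf{Main obstacle.} The delicate part is Step 2: one has to verify, by induction on $m$ (or equivalently by direct unpacking of the nested formula for $\circ_h$), that the $s_{p_k}$-line stays clear of every boundary strip associated with preceding factors $H_j$, $j<\sigma(k)$, on the full $\tau$-range $[t_{p_k},1]$, regardless of the chosen parenthesisation of the horizontal composition. Step 1 is algebraically transparent, since the quantisation prescription in \eqref{eqn:surf_hol_Npunct} already treats the different punctures independently.
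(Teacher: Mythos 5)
Your proof is correct and follows essentially the same route as the paper's: reduce to one puncture per factor, and identify the expanded product of single-puncture operators \eqref{eqn:surfhol_singlepunct} with the $N$-puncture series \eqref{eqn:surf_hol_Npunct} by re-sorting the constrained multi-index sum. Your Step 2 --- checking that the canonical paths of the composite homotopy restrict, up to a constant prefix at $x_0$ and reparametrization, to the canonical paths of the sub-homotopies --- is a detail the paper's one-line proof leaves implicit, and it is a worthwhile addition.
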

\begin{proof}
We can assume without loss of generality that each $H_i$ contains precisely one puncture, because homotopies without puncture contribute just the identity operator, and therefore effectively reduce the number of homotopies in \eqref{eqn:horiz_comp}. With this assumption, every factor on the right hand side just takes the form \eqref{eqn:surfhol_singlepunct}. Multiplying them and sorting with respect to the number of Lie algebra generators, it is straightforward to see that this leads to \eqref{eqn:surf_hol_Npunct}. 
\end{proof}
This theorem allows us to express surface holonomies as products of surface holonomies acting on single punctures, provided we can find a suitable decomposition of the homotopy labeling the surface holonomy. In the following chapters, we will therefore focus our attention on the single puncture case. We will later come back to the case of multiple punctures again.

\newpage
\section{Explicit action of surface holonomy operators on single puncture states} 
\label{sec:quant}
\begin{wrapfigure}{r}{6cm}
\includegraphics[scale=0.55]{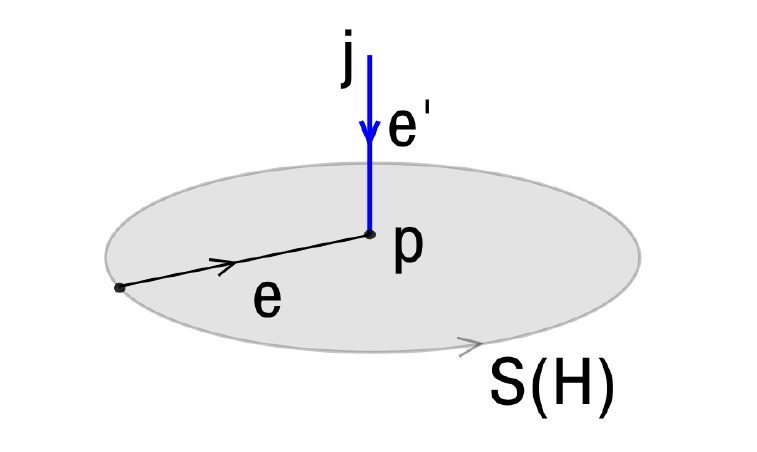}\newline
\includegraphics[scale=0.55]{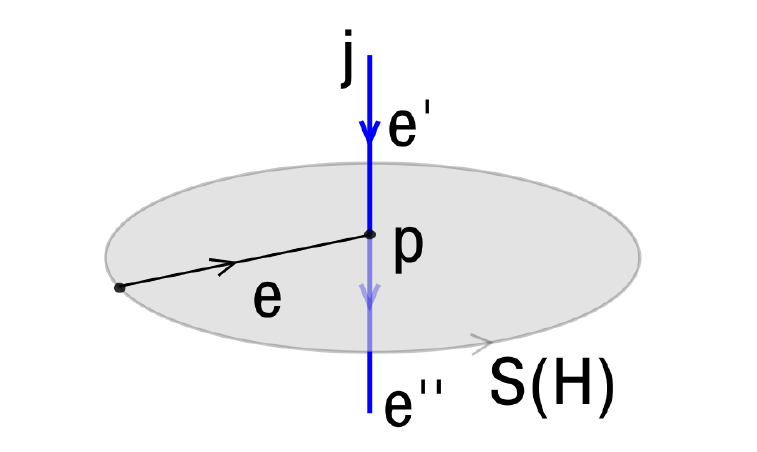}
\caption{The single puncture intersection of a holonomy with a surface}
\label{fig:single_puncture_case}
\end{wrapfigure}
In the following, we will explicitly calculate the action of the previously defined quantum surface holonomy operators on quantum states that are represented by a spin network graph having a single intersection with the surface associated to the homotopy labeling the surface holonomy (see figure~\ref{fig:single_puncture_case} for an illustration). To this end, let us introduce some further notation.\\
We first define the relevant Hilbert spaces. All of these are defined relative to a given homotopy $H$, but to keep things simple, we will not indicate this dependency in the notation. Let
\begin{equation}
\Hp = \text{ span } \{ \text{spin nets with single puncture at $p$} \}.    
\end{equation}
This space decomposes into a direct sum 
\begin{equation}
\Hp = \bigoplus_{\ju,\jd}  \Hpjujd
\end{equation}
under the action of $\widehat{E}(p)$ in the following sense: $\Hpjujd$ is an infinite direct sum of spaces on which 
\begin{equation*}
\widehat{E}(p)= \hatEu(p)- \hatEd(p)
\end{equation*}
acts irreducibly, with $\hatEu$ acting in the $\ju$-irrep of \su, and  $\hatEd$ in the $\jd$-irrep. Due to the additional holonomies in $\widehat{\mathcal{W}_H}$, its components mix these sub-sectors of $\Hpjujd$, but leave  $\Hpjujd$ invariant. 

Given $\ju,\jd$, we call $\Hpjujd$ the state space of a \emph{one-sided puncture} if either $\ju=0$ or $\jd=0$. Otherwise we call it the state space of a \emph{two-sided puncture}. We should remind the reader that the Duflo-quantization for the two-sided puncture in section \ref{ssec:two-edge-punct-calc} was calculated for a state in which $E^{(\jud)}=0$. In the quantum theory, this implies -- among other things -- that $\ju=\jd$ and that states should be in the gauge-invariant subspace of $\Hpjujd$. In the following, for the two-sided puncture we will therefore restrict to the space $\Hpjj$ in which $\ju=\jd$. We will sometimes also display the action on the non-gauge-invariant part of that space. 
\subsection{Action on one-sided puncture state}
In the case of a single puncture, the quantum operator associated to a surface holonomy was given in \eqref{eqn:surfhol_singlepunct}. Two of the three factors in this expression are path holonomies, whose action on the Hilbert space of LQG is well-understood. We will therefore focus on the remaining part, $Q_{\text{DK}}\left[  W_{p}  \right] $. In the following, we will explicitely calculate the action of this operator on a certain class of spin network states $\Psi_\gamma$ in the LQG Hilbert space. Namely, we will assume $\gamma$ to contain only a single edge that intersects $S_H$ at $p$. 
Without loss of generality, we can assume this edge to puncture the surface from above. This effectively leads to 
\begin{equation}
\widehat{E}_k(p)=\hatEu_k(p) \, ,
\end{equation}
and therefore $\widehat{E}_k(p)$ itself satisfies \su ~commutation relations. 
This case was already investigated in earlier work \cite{Sahlmann:2015dna}. However, in this earlier work we used a different convention for the $\kappa$ factor defined in equation \eqref{eqn:kappa}, which made the result appear more general. At the time, we were only able to give an explicit expression for the action of the surface holonomy operator on punctures carrying spin \textonehalf. In the following, we will now generalize this calculation to spin network punctures labeled by arbitrary spin $j$.\\

Recall, from the previous section, the definition 
\begin{equation}
W_{p}=\exp \left( c \, T_{i} \kappa^{ij} E_{j}(p) \right) \, .
\end{equation}
From now on, we will drop the label $p$ indicating the puncture. Throughout this section, the $E$ are understood to be evaluated at the puncture $p$! This actually implies that we will only consider non-trivial representations for the quantum operators corresponding to the $E$, since a puncture with spin label 0 is equivalent to no puncture in the LQG Hilbert space. Therefore, a quantum surface holonomy will always act as the identity operator on a puncture where $j=0$. We will also choose a specific basis 
\begin{equation}
T_i=\tau_i=-\frac{i}{2}\sigma_i
\end{equation}
of \su, where $\sigma_i$ are the Pauli matrices. In this basis, the components of the Cartan-Killing metric become
\begin{equation}
\kappa_{ij}=-2\,\delta_{ij} \, . 
\end{equation}
We can then write
\begin{equation}
\begin{split}
W_{p} &=  \operatorname{cosh} \left( \frac{c}{2\sqrt{2}} ||E|| \right)\; \mathbb{1}_{2} + \frac{\operatorname{sinh} \left( \frac{c}{2\sqrt{2}} ||E|| \right)}{\frac{c}{2\sqrt{2}} ||E||} \, c \, \kappa^{ij} E_{i} \; \tau_{j} \\
&= \sum_{k=0}^{\infty} \frac{1}{(2k)!} \left( \frac{c}{2\sqrt{2}} \right)^{2k}  ||E||^{2k} \; \mathbb{1}_{2} + c \, \sum_{k=0}^{\infty} \frac{1}{(2k+1)!} \left( \frac{c}{2\sqrt{2}} \right)^{2k} \kappa^{ij}  ||E||^{2k} E_{i} \; \tau_{j} \, .
\end{split}
\label{eqn:surf_hol_expansion}
\end{equation}
We already showed in \cite{Sahlmann:2015dna} that
\begin{equation}
\begin{split}
j^{\sfrac{1}{2}} (\partial) \left[ ||E||^{2k} E_{i} \right] &= \sum_{N=0}^{k} \frac{1}{(2N+1)!} \frac{1}{8^{N}} \frac{(2k+1)!}{(2k-2N+1)!} \frac{2k+3}{2k-2N+3} \, ||E||^{2(k-N)} E_{i}\\
&= \sum_{N=0}^{k} \frac{1}{8^{N}} \binom{2k+4}{2N+1} \frac{2k-2N+2}{(2k+2)(2k+4)} \, ||E||^{2(k-N)} E_{i}
\end{split}
\end{equation}
and
\begin{equation}
Q_{S} \left[ ||E||^{2k} E_{i} \right] = \frac{Q_{S} \left[ ||E||^{2(k+1)} \right]}{\Delta_{SU(2)}} \, Q_{S} \left[ E_{i} \right] \, .
\end{equation}
Combining these two expression with the fact that the Laplacian of \SU ~evaluates to
\begin{equation}
\Delta_{\SU} \Biggl\vert_{\Hej} = \frac{j(j+1)}{2} \id_{\Hej}
\end{equation}
on a single edge carrying spin $j$, we obtain
\begin{equation}
\begin{split}
Q_{DK} &\left[ ||E||^{2k} E_{i} \right] \Biggl\vert_{\Hej} = \left[ Q_{S} \circ j^{\sfrac{1}{2}} (\partial) \right] \left( ||E||^{2k} E_{i} \right) \Biggl\vert_{\Hej} \\
&= \sum_{N=0}^{k} \frac{1}{8^{N}} \binom{2k+4}{2N+1} \frac{2k-2N+2}{(2k+2)(2k+4)} Q_{S} \left[ ||E||^{2(k-N)} E_{i} \right] \Biggl\vert_{\Hej} \\
&= - \frac{1}{8^{k+1}} \frac{2}{j(j+1)} \sum_{N=0}^{k} \binom{2k+4}{2N+1} \frac{2k-2N+2}{(2k+2)(2k+4)}~\times \\ &~~~ \sum_{m=0}^{2(k-N+1)} \binom{2(k-N)+3}{m} B_{m} \left( 2^{m} - 2 \right) \left[ 2j + 1 \right]^{2(k-N+1) - m} \pi^{(j)} \left[ \widehat{E}_{i} \right] \, \\
&= - \frac{1}{8^{k+1}} \frac{2}{j(j+1)} \sum_{p=0}^{k} \binom{2k+4}{2p+3} \frac{2p+2}{(2k+2)(2k+4)}~\times \\ &~~\quad \sum_{m=0}^{2(p+1)} \binom{2p+3}{m} B_{m} \left( 2^{m} - 2 \right) \left[ 2j + 1 \right]^{2(p+1) - m} \pi^{(j)} \left[ \widehat{E}_{i} \right] \, .
\end{split}
\label{eqn:Duflo_on_odd_powers_j-rep}
\end{equation}
After simplifying this expression (see appendix \ref{app:one-edge-calc} for details) we end up with
\begin{equation}
\begin{split}
Q_{DK} &\left[ ||E||^{2k} E_{i} \right] \Biggl\vert_{\Hej} \\
&= \frac{8}{2^{k}} \, \frac{2k+3}{2k+2} \, \frac{1}{2j(2j+1)(2j+2)} \, \left[ j \left( \frac{2j+1}{2} \right)^{2k+2} - \sum_{l=1}^{\frac{2j-1}{2}} l^{2k+2} \right] \, \pi^{(j)} \left[ \widehat{E}_{i} \right] \, .
\end{split}
\end{equation}
We can now use this result in combination with equation \eqref{eqn:surf_hol_expansion} to get
\begin{equation}
\begin{split}
Q_{DK} \left[ W_p \right] \Biggl\vert_{\Hej} &= \sum_{n=0}^{\infty} \frac{1}{(2n)!} \left( \frac{c}{2\sqrt{2}} \right)^{2n} Q_{DK} \left[ ||E||^{2n} \right] \Biggl\vert_{\Hej} \otimes \,\mathbb{1}_{2} \\
&\quad + \sum_{n=0}^{\infty} \frac{1}{(2n+1)!} \left( \frac{c}{2\sqrt{2}} \right)^{2n} c \, \kappa^{il} Q_{DK} \left[ ||E||^{2n} E_{i} \right] \Biggl\vert_{\Hej} \otimes \,\tau_{l} \\
&= \sum_{n=0}^{\infty} \frac{1}{(2n)!} \left( \frac{c}{2\sqrt{2}} \right)^{2n} \frac{1}{8^{n}} \left( 2j+1 \right)^{2n} \id_{\Hej} \otimes \,\mathbb{1}_{2} \\ 
&\quad + \sum_{n=0}^{\infty} \frac{1}{(2n+1)!} \left( \frac{c}{2\sqrt{2}} \right)^{2n} c \, \frac{8}{2^{n}} \, \frac{2n+3}{2n+2} \, \frac{1}{2j(2j+1)(2j+2)} \, \times \\
&\qquad \left[ j \left( \frac{2j+1}{2} \right)^{2n+2} - \sum_{l=1}^{\frac{2j-1}{2}} l^{2n+2} \right] \, \kappa^{il} \pi^{(j)} \left[ \widehat{E}_{i} \right] \otimes \,\tau_{l} \\
&= \operatorname{cosh} \left( \frac{(2j+1)c}{8} \right) \, \id_{\Hej} \otimes \,\mathbb{1}_{2} + \frac{128}{c} \, \frac{\kappa^{il} \pi^{(j)} \left[ \widehat{E}_{i} \right] \otimes \,\tau_{l}}{2j(2j+1)(2j+2)} \times \\
& \qquad \sum_{n=0}^{\infty} \frac{2n+3}{(2n+2)!} \left( \frac{c}{4} \right)^{2n+2} \, \left[ j \left( \frac{2j+1}{2} \right)^{2n+2} - \sum_{l=1}^{\frac{2j-1}{2}} l^{2n+2} \right] \, .
\end{split}
\label{eqn:W_S_as_sum}
\end{equation}
Simplifying once more (for details, see again appendix \ref{app:one-edge-calc}) and defining
\begin{equation}
\label{eqn:W_structure_single}
Q_{DK} \left[ W_p \right] \Biggl\vert_{\Hej} =: \xi_c(j) \, \id_{\Hej} \otimes \mathbb{1}_{2} + i\,\xi_s(j) \, \kappa^{im} \pi^{(j)} \left[ \widehat{E_{i}} \right] \otimes \tau_{m}
\end{equation}
we arrive at
\begin{equation}
\xi_c(j) = \operatorname{cosh} \left( \frac{(2j+1)c}{8} \right)
\end{equation}
and
\begin{equation}
\begin{split}
\xi_s(j) &= \frac{-128i}{2j(2j+1)(2j+2)}\\ 
& \quad \times \frac{1}{c} \, \frac{\operatorname{d}}{\operatorname{d}c} \left[  jc \cosh{\left( \frac{(2j+1)c}{8} \right)} - \frac{c}{2} - c \, \frac{\sinh{\frac{(2j-1)c}{16}}}{\sinh{\frac{c}{8}}} \cosh{\left( \frac{(2j+1)c}{16} \right)} \right]
\end{split}
\end{equation}
for the function $\xi_c(j)$ and $\xi_s(j)$. In the expression for $\xi_s(j)$, the derivative with respect to $c$ can still be carried out, leading to
\begin{equation}\label{eq:xis}
\begin{split}
\xi_s(j) &= \frac{-8i}{2j(2j+1)(2j+2)} \times \\
&\qquad \left[ 2j(2j+1) \frac{\cosh{\left( \frac{(2j+1)c}{8} \right)}}{\frac{(2j+1)c}{8}} + 2j(2j+1) \sinh{\left( \frac{(2j+1)c}{8} \right)} \right. \\
& \qquad \left. - \frac{1}{\sinh{\left( \frac{c}{8} \right)}} \left( 2j \cosh{\left( \frac{2jc}{8} \right)} + 2j \frac{\sinh{\left( \frac{2jc}{8} \right)}}{\frac{2jc}{8}} - \sinh{\left( \frac{2jc}{8} \right)} \coth{\left( \frac{c}{8} \right)} \right) \right] \, .
\end{split}
\end{equation}

\subsection{Action on two-sided puncture state}
\label{ssec:two-edge-punct-calc}

In order to perform the same calculation for the case of a two-sided puncture, we start again from the series expansion as given in \eqref{eqn:surf_hol_expansion}:
\begin{equation}
\begin{split}
W_{p} &=  \operatorname{cosh} \left( \frac{c}{2\sqrt{2}} ||E|| \right)\; \mathbb{1}_{2} + \frac{\operatorname{sinh} \left( \frac{c}{2\sqrt{2}} ||E|| \right)}{\frac{c}{2\sqrt{2}} ||E||} \, c \, \kappa^{ij} E_{i} \; \tau_{j} \\
&= \sum_{k=0}^{\infty} \frac{1}{(2k)!} \left( \frac{c}{2\sqrt{2}} \right)^{2k}  ||E||^{2k} \; \mathbb{1}_{2} + c \, \sum_{k=0}^{\infty} \frac{1}{(2k+1)!} \left( \frac{c}{2\sqrt{2}} \right)^{2k} \kappa^{ij}  ||E||^{2k} E_{i} \; \tau_{j} \, .
\end{split}
\end{equation}
When acting on a two-edge puncture state, we now have to distinguish several cases. Assuming that neither of the two edges is tangential to the surface, there are two main scenarios: the two edges can either lie on the same side of the surface $S_H$, or they can lie on different sides. In the first case, however, we can consider the quantity $\Eu_i = E^{(e)}_i + E^{(e')}_i$, which again behaves like an element of \su. This case can thus be treated as in the previous subsection. In the following, we will therefore focus on the case where one edge, $e$, lies above the surface and the other edge, $e'$, lies below $S_H$. In other words, we now have
\begin{equation}
\hatE_i = \hatEu_i - \hatEd_i \, ,
\label{eqn:E_decomp}
\end{equation}
where $\hatEu_i = \hatE^{(e)}_i$ and $\hatEd_i = \hatE^{(e')}_i$. Thus, $\hatEu_{i}$ inserts a generator of $\SU$ into the holonomy associated to the edge $e$ and $\hatEd_{i}$ acts analogously on $e'$. Since the combination \eqref{eqn:E_decomp} does no longer behave as an element of $\su$, we will have to order the quantities $\hatEu_{i}$ and $\hatEd_{i}$ individually. We can write
\begin{equation}
\begin{split}
||E||^{2} &= \kappa^{ij} E_{i} E_{j}  \\
&= \kappa^{ij} \left( \Eu_{i} - \Ed_{i} \right) \left( \Eu_{j} - \Ed_{j} \right) \\
&= ||\Eu||^{2} + ||\Ed||^{2} - 2 \kappa^{ij} \Eu_{i} \Ed_{j} \, .
\end{split}
\end{equation}
We thus see that, if we want to order both the $\Eu_{i}$ and $\Ed_{i}$ separately using the Duflo-Kirillov map, we need to evaluate said map on terms of the form
\begin{equation}
||\Eu||^{2k} \Eu_{i_{1}} \ldots \Eu_{i_{n}}
\end{equation}
and, unfortunately, we don't have a formula for this. In order to circumvent this problem, we will use the relation
\begin{equation}
||\Eud||^{2} = ||\Eu||^{2} + ||\Ed||^{2} + 2 \kappa^{ij} \Eu_{i} \Ed_{j}
\end{equation}
to obtain
\begin{equation}
||E||^{2} = 2 \, ||\Eu||^{2} + 2 \, ||\Ed||^{2} - ||\Eud||^{2} \, ,
\end{equation}
where
\begin{equation}
\Eud_{i} = \Eu_{i} + \Ed_{i} \, .
\end{equation}
Unfortunately, we cannot quantize $\Eu$, $\Ed$ and $\Eud$ independently, since, e.g., $\hatEu_{i}$ does not commute with $||\hatEud||^{2}$. However, if we focus on the sector of the quantum theory invariant under $\SU$ gauge transformations, $\hatEu$ and $\hatEd$ must couple to the trivial representation in the absence of transversal edges. We will therefore assume
\begin{equation}
||E^{(u+d)}||^{2} = 0
\end{equation}
already on the classical side.
The expression for $||E||^{2}$ then simplifies to
\begin{equation}
||E||^{2} = 2 \, ||\Eu||^{2} + 2 \, ||\Ed||^{2}
\end{equation}
and we can write arbitrary powers of this term as
\begin{equation}
\begin{split}
||E||^{2k} &= 2^{k} \, \left[ ||\Eu||^{2} + ||\Ed||^{2} \right]^{k} \\
&= 2^{k} \, \sum_{m=0}^{k} \binom{k}{m} \, ||\Eu||^{2m} \, ||\Ed||^{2(k-m)} \, .
\end{split}
\end{equation}
Inserting this expression into equation \eqref{eqn:surf_hol_expansion}, we then obtain
\begin{equation}
\begin{split}
W_{p} &= \sum_{k=0}^{\infty} \frac{1}{(2k)!} \left( \frac{c}{2} \right)^{2k} \, \sum_{m=0}^{k} \binom{k}{m} \, ||\Eu||^{2m} \, ||\Ed||^{2(k-m)} \; \mathbb{1}_{2} \\ &+ c \, \sum_{k=0}^{\infty} \frac{1}{(2k+1)!} \left( \frac{c}{2} \right)^{2k} \kappa^{ij} \, \sum_{m=0}^{k} \binom{k}{m} \, ||\Eu||^{2m} \, ||\Ed||^{2(k-m)} \, \left[ \Eu_i - \Ed_i \right] \; \tau_{j} 
\end{split}
\end{equation}
and applying the Duflo-Kirillov map leaves us with
\begin{equation}
\begin{split}
Q_{DK} &\left[ W_{p} \right] \Biggl\vert_{\Hejej} \\
&= \sum_{k=0}^{\infty} \frac{1}{(2k)!} \left( \frac{c}{2} \right)^{2k} \, \sum_{m=0}^{k} \binom{k}{m} \, Q_{DK} \left[ ||\Eu||^{2m} \right] \Biggl\vert_{\Heju} \, Q_{DK} \left[ ||\Ed||^{2(k-m)} \right] \Biggl\vert_{\Hejd} \; \otimes \mathbb{1}_{2} \\
&+ c \, \sum_{k=0}^{\infty} \frac{1}{(2k+1)!} \left( \frac{c}{2} \right)^{2k} \kappa^{ij} \, \sum_{m=0}^{k} \binom{k}{m} \, Q_{DK} \left[ ||\Eu||^{2m} \Eu_i \right] \Biggl\vert_{\Heju} \, Q_{DK} \left[ ||\Ed||^{2(k-m)} \right] \Biggl\vert_{\Hejd} \; \otimes \tau_{j} \\ 
&-c \, \sum_{k=0}^{\infty} \frac{1}{(2k+1)!} \left( \frac{c}{2} \right)^{2k} \kappa^{ij} \, \sum_{m=0}^{k} \binom{k}{m} \, Q_{DK} \left[ ||\Eu||^{2m} \right] \Biggl\vert_{\Heju} \, Q_{DK} \left[ ||\Ed||^{2(k-m)} \Ed_i \right] \Biggl\vert_{\Hejd} \; \otimes \tau_{j} \, .
\end{split}
\label{eqn:quantum_surf_hol_expanded_before_Duflo_evaluation}
\end{equation}
Note that we have calculated the action of the Duflo-Kirillov map on both types of terms showing up in this expression already in the previous subsection. If $E$ is associated to an edge labeled by spin $j$, this action is given by
\begin{equation}
\label{eqn:DK-map_gauge-inv_terms}
Q_{DK} \left[ ||E||^{2k} \right] \Biggl\vert_{\Hej} = \left( Q_{DK} \left[ ||E||^{2} \right] \Biggl\vert_{\Hej} \right)^{k} = \left[ \Delta_{\SU} \Biggl\vert_{\Hej} + \frac{1}{8} \, \id_{\Hej} \right]^{k} = \left[ \frac{(2j+1)^{2}}{8} \right]^{k} \, \id_{\Hej}
\end{equation}
and
\begin{equation}
Q_{DK} \left[ ||E||^{2k} E_{i} \right] \Biggl\vert_{\Hej} = \frac{2}{8^{k}} \frac{1}{2j(2j+1)(2j+2)} \frac{2k+3}{2k+2} \, \left[ j \left( 2j+1 \right)^{2k+2} - \sum_{l=1}^{\lfloor j \rfloor} \left( 2l \right)^{2k+2} \right] \, \pi^{(j)}(\hatE_{i}) \, ,
\label{eqn:DK-map_non-gauge-inv_terms}
\end{equation}
respectively, with $\lfloor j \rfloor$ denoting the floor function of $j$. Now, inserting these expressions into equation \eqref{eqn:quantum_surf_hol_expanded_before_Duflo_evaluation} and writing the result as
\begin{equation}
\begin{split}
Q_{DK} \left[ W_{p} \right] \Biggl\vert_{\Hejej} &= \chi_{c}(j^{u},j^{d}) \, \id_{\Hejej} \otimes \,\mathbb{1}_{2}\\
&+ i \chi_{s}(\ju,\jd) \, \kappa^{mn} \pi^{(j^{u})}(\hatEu_{m}) \otimes \id_{\Hejd} \otimes \tau_{n}\\
&- i \chi_{s}(\jd,\ju) \, \kappa^{mn} \id_{\Heju} \otimes \pi^{(j^{d})}(\hatEd_{m}) \otimes \tau_{n}\, ,
\end{split}
\end{equation}
the functions $\chi_{c}(\ju,\jd)$ and $\chi_{s}(\ju,\jd)$ take the forms
\begin{equation}
\chi_{c}(\ju,\jd) = \cosh \left( \frac{c}{2} \sqrt{\frac{(2\ju+1)^{2}}{8} + \frac{(2\jd+1)^{2}}{8}} \right)
\end{equation}
and
\begin{equation}
\begin{split}
\chi_{s}(\ju,\jd) &= -\frac{2i}{\ju + 1} \left[ \frac{\cosh \left( \frac{c}{2} \sqrt{\frac{(2\ju + 1)^2}{8} + \frac{(2\jd + 1)^2}{8}} \right) - \cosh \left( \frac{(2\jd + 1)c}{4\sqrt{2}} \right) }{\frac{(2\ju+1)c}{8}} + \frac{2\ju + 1}{2} \frac{\sinh \left( \frac{c}{2} \sqrt{\frac{(2\ju + 1)^2}{8} + \frac{(2\jd + 1)^2}{8}} \right)}{\sqrt{\frac{(2\ju + 1)^2}{8} + \frac{(2\jd + 1)^2}{8}}} \right]\\
&+ \frac{8i}{\ju  (\ju + 1)(2\ju + 1)} \sum_{k=1}^{\floor{\ju}} \left[ \frac{\cosh \left( \frac{c}{2} \sqrt{\frac{(\jd + 1)^2}{8} + \frac{k^2}{2}} \right) - \cosh \left( \frac{(2\jd + 1)c}{4\sqrt{2}} \right)}{\frac{c}{2}} + \frac{\frac{k^2}{2} \sinh \left( \frac{c}{2} \sqrt{\frac{(\jd + 1)^2}{8} + \frac{k^2}{2}} \right)}{\sqrt{\frac{(\jd + 1)^2}{8} + \frac{k^2}{2}}} \right] \, ,
\end{split}
\end{equation}
respectively. The details of the calculation can be found in appendix \ref{app:two_edge_calc}. Specializing to the gauge-invariant case\footnote{Recall that we have already imposed gauge-invariance partially on the classical side by demanding that $||\Eud||^{2} = 0$. The result for $\chi_{s}$ will probably change without this assumption.} where $j^{u} = j^{d} = j$, we end up with
\begin{equation}
\label{eqn:W_structure_double}
Q_{DK} \left[ W_{p} \right] \Biggl\vert_{\Heej} = \chi_{c}(j) \, \id_{\Heej} \otimes \,\mathbb{1}_{2} + i \chi_{s}(j) \, \kappa^{mn} \left[ \pi^{(j)}(\hatEu_{m}) \otimes \id_{\HEj} - \id_{\Hej} \otimes \pi^{(j)}(\hatEd_{m}) \right] \otimes \tau_{n}\, ,
\end{equation}
where now
\begin{equation}
\chi_{c}(j) = \cosh \left( \frac{(2j+1)c}{4} \right)
\end{equation}
and
\begin{equation}\label{eq:chis}
\begin{split}
\chi_{s}(j) &= - \frac{2i}{j+1} \left[  \frac{\cosh \left( \frac{(2j+1)c}{4} \right) - \cosh \left( \frac{(2j+1)c}{4\sqrt{2}} \right)}{\frac{(2j+1)c}{8}} + \sinh \left( \frac{(2j+1)c}{4} \right) \right]\\
&+ \frac{8i}{j(j+1)(2j+1)} \sum_{k=1}^{\floor{j}} \left[ \frac{\cosh \left( \frac{c}{2} \sqrt{\frac{(2j+1)^2}{8} + \frac{k^2}{2}} \right) - \cosh \left( \frac{(2j+1)c}{4\sqrt{2}} \right)}{\frac{c}{2}} + \frac{\frac{k^2}{2} \sinh \left( \frac{c}{2} \sqrt{\frac{(2j+1)^2}{8} + \frac{k^2}{2}} \right)}{\sqrt{\frac{(2j+1)^2}{8} + \frac{k^2}{2}}} \right] \, .
\end{split}
\end{equation}

\section{Properties of quantum surface holonomy operators} 
\label{se_quantum_group}

In this chapter we focus on the properties of the  holonomy operators just calculated. These properties are important since they determine the existence and the properties of solutions to the quantised isolated horizon boundary condition.

\subsection{Behavior under gauge transformations}
Gauge transformations $g:\Sigma \rightarrow \SU$ act as unitary operators $U_g$ on the LQG Hilbert space. They transform the basic field operators as 
\begin{equation}
U_g\, h_e \,U_g^\dagger = g(t(e)) \,h_e \, g(s(e))^{-1}, \qquad U_g\,E_k^{(e)}(p) \,U_g^\dagger = \pi_1(g(p)^{-1})^j{}_k \,E_j^{(e)}(p). 
\end{equation}
As a consequence, using the equations  \eqref{eqn:surfhol_singlepunct}, \eqref{eqn:W_structure_single}, \eqref{eqn:W_structure_double}, \eqref{eqn:factorization}  that define $\widehat{\mathcal{W}}_H$ in terms of $E$ and holonomies $h$, we find that it transforms as  
\begin{equation}
\label{eqn_gaugetrafo_W}
U_g\, \widehat{\mathcal{W}}_H\, U_g^\dagger= g(x_0)\,\widehat{\mathcal{W}}_H\,g(x_0)^{-1}, 
\end{equation}
where $x_0 \in \partial S_H$ denotes the special point on the boundary of $S_H$. Thus $\widehat{\mathcal{W}}_H$ transforms exactly as a holonomy beginning and ending in $x_0$.

\subsection{Matrix elements}
\label{se:matrix_elements}
The quantum surface holonomy operators $\widehat{\mathcal{W}}$ are operator-valued matrices. In the following, we will consider their components.  In particular, we will take a look at the adjointness and commutation relations between matrix elements of $Q_{DK}[W_p]$ and $\mathcal{W}_H$ and compare them to those from known quantum group deformations of $SU(2)$. We will always assume that the holonomies act on single puncture states. We will distinguish the case of a one-sided and a two-sided puncture. We also assume a relative orientation between the surface $S$ and the intersecting edge as in figure \ref{fig:single_puncture_case}. Changing the orientation of $S$ will change the sign of the second term in \eqref{eq:w_action} and \eqref{eq:w_action2}, and hence some signs in the equations following them. 

Let us first consider the operator $\widehat{W}_p$ on a one-sided puncture. We explictly consider only the action on $\Hpju$. The action on $\Hpjd$ just differs by a factor of $-1$ in $\widehat{E}(p)$. In the previous chapter, we found 
\begin{equation}
\begin{split}
Q_{DK}[W_p] \Biggl\vert_{\Hpju}  &= \xi_{c}(j) \id_{\Hpju} \otimes \,\mathbb{1}_{2} + i \xi_{s}(j) \, \kappa^{mn} \, \pij{j}{\widehat{E_{m}}} \otimes \,\tau_{n} \\
&= \begin{pmatrix}
\xi_{c}(j) \id_{\Hpju}  - \frac{1}{4} \xi_{s}(j) \pij{j}{\widehat{E_{3}}} & -\frac{1}{4} \xi_{s}(j) \left( \pij{j}{\widehat{E_{1}}} - i \pij{j}{\widehat{E_{2}}} \right)\\
-\frac{1}{4} \xi_{s}(j) \left( \pij{j}{\widehat{E_{1}}} + i \pij{j}{\widehat{E_{2}}} \right) & \xi_{c}(j) \id_{\Hpju}  + \frac{1}{4} \xi_{s}(j) \pij{j}{\widehat{E_{3}}}
\end{pmatrix}\\
&=\begin{pmatrix}
\xi_{c}(j) \id_{\Hpju}  - \frac{1}{4} \xi_{s}(j) \pij{j}{\widehat{E_{3}}} & -\frac{1}{4} \xi_{s}(j) \pij{j}{\widehat{E_{-}}}\\
-\frac{1}{4} \xi_{s}(j) \pij{j}{\widehat{E_{+}}}  & \xi_{c}(j) \id_{\Hpju}  + \frac{1}{4} \xi_{s}(j) \pij{j}{\widehat{E_{3}}}
\end{pmatrix}\, ,
\label{eqn:W_S_matrixform}
\end{split}
\end{equation} 
where we have now introduced the notation 
\begin{equation}
\widehat{E_{\pm}} := \widehat{E_{1}} \pm i \widehat{E_{2}}.
\end{equation}
Using the fact that the $\widehat{E}_i$ are skew-adjoint, we can write  
\begin{equation}
Q_{DK}[W_p] \Biggl\vert_{\Hpju} 
= \begin{pmatrix}
\widehat{a}&\widehat{b}\\
-\widehat{b}^\dagger &\widehat{a}^\dagger
\end{pmatrix} \, ,
\label{eqn:adjointness}
\end{equation}
with 
\begin{align}
\widehat{a}&=\xi_{c}(j) \id_{\Hpju}  - \frac{1}{4} \xi_{s}(j) \pij{j}{\widehat{E_{3}}},\\
\widehat{b}&=-\frac{1}{4} \xi_{s}(j) \pij{j}{\widehat{E_{-}}}.
\end{align}
For the double puncture, the structure is similar:
\begin{equation}
\begin{split}
Q_{DK} &\left[ W_{p} \right] \Biggl\vert_{\Hpjj} = \chi_{c}(j) \, \id_{\Hpjj} \otimes \,\mathbb{1}_{2} + i \chi_{s}(j) \, \kappa^{mn} \left[ \pij{j}{\hatEu_{m}}  - \pij{j}{\hatEd_{m}} \right] \otimes \,\tau_{n}\\
&= \begin{pmatrix}
\chi_{c}(j) \, \id_{\Hpjj} - \frac{1}{4} \chi_{s}(j) \, \pij{j}{\widehat{E_{3}^{(u)}} - \widehat{E_{3}^{(d)}}} 
& -\frac{1}{4} \chi_{s}(j) \left( \pij{j}{\widehat{E_{-}^{(u)}} - \widehat{E_{-}^{(d)}}} \right)\\
-\frac{1}{4} \chi_{s}(j) \left( \pij{j}{\widehat{E_{+}^{(u)}} - \widehat{E_{+}^{(d)}}} \right) 
& \chi_{c}(j) \id_{\Hpjj}  + \frac{1}{4} \chi_{s}(j) \left( \pij{j}{\widehat{E_{3}^{(u)}} - \widehat{E_{3}^{(d)}}} \right)
\end{pmatrix}\\
&=: \begin{pmatrix}
\widehat{a}&\widehat{b}\\
-\widehat{b}^\dagger &\widehat{a}^\dagger
\end{pmatrix} \, 
\label{eqn:W_S_matrixform_2}
\end{split}
\end{equation}
with 
\begin{align}
\widehat{a}&=\chi_{c}(j) \, \id_{\Hpjj} - \frac{1}{4} \chi_{s}(j) \, \left(\pij{j}{\widehat{E_{3}^{(u)}} - \widehat{E_{3}^{(d)}}}\right) \\
\widehat{b}&=-\frac{1}{4} \chi_{s}(j) \left( \pij{j}{\widehat{E_{-}^{(u)}} - \widehat{E_{-}^{(d)}}} \right) \, .
\end{align}
We will now turn to the matrix elements of $\widehat{\mathcal{W}}$. Recall from \eqref{eqn:surfhol_singlepunct} that 
\begin{equation}
\label{eqn:WW}
\left.\widehat{\mathcal{W}}_H\right\rvert_{\Hpju}
= h^{-1}_{\alpha_{p}}\, Q_{\text{DK}}\left[  W_{p}  \right]  h_{\alpha_{p}} \, .
\end{equation}
We first observe that the matrix elements of $h_{\alpha_{p}}$ and $h^{-1}_{\alpha_{p}}$ commute with the $\widehat{E}_i$, and hence with $Q_{\text{DK}}\left[  W_{p}  \right]$ because $h_{\alpha_{p}}$ runs tangential to the surface and there is no intertwiner connecting $h_{\alpha_{p}}$ and the holonomy of the puncture. 
Secondly, we also notice that products of matrices with the adjointness structure \eqref{eqn:adjointness} again have the same structure.  The matrices on the right hand side of \eqref{eqn:WW} are operator-valued, but ,as observed, the entries of the holonomies commute with those of $W_p$. We can thus conclude that 
\begin{equation}
\left.\widehat{\mathcal{W}}_H\right\rvert_{\Hpju}=\begin{pmatrix}\widehat{\mathcal{a}}&\widehat{\mathcal{b}}\\-\widehat{\mathcal{b}}^\dagger&\widehat{\mathcal{a}}^\dagger
\end{pmatrix} \, .
\end{equation}
Next, we can determine the matrix entries of $\widehat{\mathcal{W}}_H$. To this end, note the intertwiner properties
\begin{equation}
g\tau_ig^{-1}= \tau_j\pi_1(g)^{j}{}_{i}, \qquad \pi_1(g^{-1})^{n'}{}_n \,\kappa^{nm}=\kappa^{m'n'}\, \pi_1(g)^{m}{}_{m'} 
\end{equation}
of the $\tau_i$ and $\kappa$. As a consequence, we can write 
\begin{equation}
\begin{split}
\widehat{\mathcal{W}}_H \Biggl\vert_{\Hpju}  &= \xi_{c}(j) \id_{\Hpju} \otimes \,\mathbb{1}_{2} + i \xi_{s}(j) \, \kappa^{mn} \, \pij{j}{\widehat{E_{m}}} \otimes \,h^{-1}_{\alpha_{p}}\tau_{n}h_{\alpha_{p}} \\
&=\xi_{c}(j) \id_{\Hpju} \otimes \,\mathbb{1}_{2} + i  \xi_{s}(j)\, \pi_1(h^{-1}_\alpha)^{n'}{}_{n} \kappa^{mn} \, \pij{j}{\widehat{E_{m}}} \otimes \,\tau_{n'}\\
&=: \xi_{c}(j) \id_{\Hpju} \otimes \,\mathbb{1}_{2} + i  \xi_{s}(j)\, \kappa^{mn} \, \widehat{\mathcal{E}}_{m} \otimes \,\tau_{n'} \, ,
\end{split}
\label{eqn:calW_S_matrixform}
\end{equation} 
where we have introduced 
\begin{equation}
\widehat{\mathcal{E}}_{m}=h^{m'}{}_m  \, \pij{j}{\widehat{E_{m'}}}\, .
\end{equation}
Note that the last expression in \eqref{eqn:calW_S_matrixform} is of identical form as that in \eqref{eqn:W_S_matrixform}, except for the replacement of $\pij{j}{\widehat{E_{m}}}$ by $\widehat{\mathcal{E}}_{m}$. Therefore, we have
\begin{align}
\widehat{\mathcal{a}}&=\xi_{c}(j) \id_{\Hpju}  - \frac{1}{4} \xi_{s}(j) \widehat{\mathcal{E}_{3}},\\
\widehat{\mathcal{b}}&=-\frac{1}{4} \xi_{s}(j) \widehat{\mathcal{E}_{-}}\, .
\end{align}
The same reasoning applies to the case of the two-sided puncture, hence 
\begin{equation}
\left.\widehat{\mathcal{W}}_H\right\rvert_{\Hpjj}=\begin{pmatrix}\widehat{\mathcal{a}}&\widehat{\mathcal{b}}\\-\widehat{\mathcal{b}}^\dagger&\widehat{\mathcal{a}}^\dagger
\end{pmatrix}
\end{equation}
with 
\begin{align}
\widehat{\mathcal{a}}&=\chi_{c}(j) \, \id_{\Hpjj} - \frac{1}{4} \chi_{s}(j) \, \left(\widehat{\mathcal{E}_{3}^{(u)}} - \widehat{\mathcal{E}{_{3}^{(d)}}}\right) \\
\widehat{\mathcal{b}}&=-\frac{1}{4} \chi_{s}(j) \left( \widehat{\mathcal{E}_{-}^{(u)}} - \widehat{\mathcal{E}_{-}^{(d)}} \right)\, .
\end{align}
Let us remark that the adjointness structure of $\widehat{W}$ and $\widehat{\mathcal{W}}$ mirrors that of an SU(2) element in the defining representation. The remaining condition on the matrix components of an SU(2) is given by the requirement that the determinant equals unity. We will turn to this requirement in the next subsection. Here, we will demonstrate that we are far from classical SU(2), by calculating the commutators of matrix elements. 

Let us first consider the case of the one-sided puncture. Using the fact that the $\widehat{E}_i$ have su(2) commutators in this case, we find 
\begin{align}
[ \widehat{a} , \widehat{b} ] &= - \frac{i\xi_{s}(j)}{4} \, \widehat{b} & [ \widehat{a} , \widehat{b}^\dagger ] &= \frac{i\xi_{s}(j)}{4} \, \widehat{b}^\dagger \\
[ \widehat{a} , \widehat{a}^\dagger ] &= 0 & [ \widehat{b} , \widehat{b}^\dagger ] &= \frac{i\xi_{s}(j)}{4} \left( \widehat{a} - \widehat{a}^\dagger \right) \\
[ \widehat{b} , \widehat{a}^\dagger ] &= [ \widehat{a} , \widehat{b} ] & [ \widehat{b}^\dagger , \widehat{a}^\dagger ] &= [ \widehat{a} , \widehat{b}^\dagger ] \, .
\end{align}
Using the fact that the holonomies $h_\alpha$ in the surface commute with the $\widehat{E}_i$, and that 
$ \pi_1(h_\alpha)$ is an orthogonal matrix, one can show that also the $\widehat{\mathcal{E}}_{m}$ satisfy 
su(2) commutation relations, and hence in complete analogy
\begin{align}
\label{eq:one_sided_commutator}
[ \widehat{\mathcal{a}} , \widehat{\mathcal{b}} ] &= - \frac{i\xi_{s}(j)}{4} \, \widehat{\mathcal{b}} & [ \widehat{\mathcal{a}} , \widehat{\mathcal{b}}^\dagger ] &= \frac{i\xi_{s}(j)}{4} \, \widehat{\mathcal{b}}^\dagger \\
[ \widehat{\mathcal{a}} , \widehat{\mathcal{a}}^\dagger ] &= 0 & [ \widehat{\mathcal{b}} , \widehat{\mathcal{b}}^\dagger ] &= \frac{i\xi_{s}(j)}{4} \left( \widehat{\mathcal{a}} - \widehat{\mathcal{a}}^\dagger \right) \\
[ \widehat{\mathcal{b}} , \widehat{\mathcal{a}}^\dagger ] &= [ \widehat{\mathcal{a}} , \widehat{\mathcal{b}} ] & [ \widehat{\mathcal{b}}^\dagger , \widehat{\mathcal{a}}^\dagger ] &= [ \widehat{\mathcal{a}} , \widehat{\mathcal{b}}^\dagger ] \, .
\end{align}
For the double sided puncture, the reasoning is again analogous. Note however, that in contrast to the sum of two angular momenta the difference of two angular momenta is not again an angular momentum operator in the sense of commutation relations. This holds in particular for $\widehat{E^{(u)}}-\widehat{E^{(d)}}$ and $\widehat{\mathcal{E}^{(u)}}-\widehat{\mathcal{E}^{(d)}}$. For example
\begin{equation*}
\left[\widehat{E_-^{(u)}}-\widehat{E^{(d)}_-}, \widehat{E^{(u)}_+}-\widehat{E^{(d)}_+}\right]=2i\,\widehat{E^{(u+d)}_3}.
\end{equation*}
This changes the commutation relations of the matrix elements slightly. We will only give the relations for the matrix elements of the full surface holonomy, since the ones for $\widehat{W_p}$ are structurally identical. They are:
\begin{align}
[ \widehat{\mathcal{a}} , \widehat{\mathcal{b}} ] &= \frac{i\chi^2_{s}(j)}{16} \, \widehat{\mathcal{E}_-^{(u+d)}} & [ \widehat{\mathcal{a}} , \widehat{\mathcal{b}}^\dagger ] &=\frac{i\chi^2_{s}(j)}{16} \, \widehat{\mathcal{E}_+^{(u+d)}}\\
[ \widehat{\mathcal{a}} , \widehat{\mathcal{a}}^\dagger ] &= 0 & [ \widehat{\mathcal{b}} , \widehat{\mathcal{b}}^\dagger ] &= -\frac{i\chi^2_{s}(j)}{8} \widehat{\mathcal{E}_3^{(u+d)}}\\
[ \widehat{\mathcal{b}} , \widehat{\mathcal{a}}^\dagger ] &= [ \widehat{\mathcal{a}} , \widehat{\mathcal{b}} ] & [ \widehat{\mathcal{b}}^\dagger , \widehat{\mathcal{a}}^\dagger ] &= [ \widehat{\mathcal{a}} , \widehat{\mathcal{b}}^\dagger ] \, .
\label{eq:two_sided_commutator}
\end{align}
Let us finally compare these commutation relations to those appearing in standard quantum deformations of $\SU$, such as $\SUq$ (see for example \cite{masudaetal}). At least in the standard representations, the latter have a different structure. For example it would hold that $\widehat{\mathcal{a}} \widehat{\mathcal{b}}=q\widehat{\mathcal{b}} \widehat{\mathcal{a}}$ which would correspond to a commutator 
\begin{equation*}
[ \widehat{\mathcal{a}} , \widehat{\mathcal{b}} ]{\underset{\SUq}{}}=(q-1)\widehat{\mathcal{b}} \widehat{\mathcal{a}}=(q+1)\widehat{\mathcal{a}} \widehat{\mathcal{b}} \, .
\end{equation*}
By comparison, our commutators are linear in the matrix elements. Thus, we are very likely dealing with a different mathematical object. 
 
\subsection{Determinant}
In the present section, we will consider the determinant of surface holonomy operators. The determinant is especially relevant if we aim to solve the quantized isolated horizon BC by states in a representation of the standard holonomy-flux algebra: The holonomies of the HF-algebra are $SU(2)$-valued functionals and therefore their determinant is unity. 

We define 
\begin{equation}
\operatorname{det}_{\delta} \widehat{\mathcal{W}}_H 
\equiv \operatorname{det}_{\delta} \begin{pmatrix}\widehat{\mathcal{a}}&\widehat{\mathcal{b}}\\-\widehat{\mathcal{b}}^\dagger&\widehat{\mathcal{a}}^\dagger \end{pmatrix}
:=  \widehat{\mathcal{a}}\widehat{\mathcal{a}}^\dagger + \delta\,\widehat{\mathcal{b}}\widehat{\mathcal{b}}^\dagger + (1-\delta)\, \widehat{\mathcal{b}}^\dagger\widehat{\mathcal{b}}
\end{equation}
where the parameter $\delta$ labels some of the possible operator orderings.  We will first consider the transformation behaviour under gauge transformations. We parametrize a classical SU(2) element as 
\begin{equation*}
g=\begin{pmatrix} \alpha&\beta\\-\overline{\beta}&\overline{\alpha}\end{pmatrix}, \qquad 
\alpha, \beta\in \mathbb{C} \text{ with } |\alpha|^2+|\beta|^2=1. 
\end{equation*}
A tedious but straightforward calculation shows that 
\begin{equation}
\begin{split}
\operatorname{det}_{\delta}\left[ g\,\widehat{\mathcal{W}}_H\, g^{-1}\right]= \widehat{\mathcal{a}}\widehat{\mathcal{a}}^\dagger
&+ \widehat{\mathcal{b}}\widehat{\mathcal{b}}^\dagger\left[|\alpha|^2|\beta|^2 +\delta|\alpha|^4 +(1-\delta)|\beta|^4 \right]\\
&+\widehat{\mathcal{b}}^\dagger \widehat{\mathcal{b}}\left[|\alpha|^2|\beta|^2 +\delta|\beta|^4 +(1-\delta)|\alpha|^4 \right]\\
&+ (\widehat{\mathcal{a}}^\dagger \widehat{\mathcal{b}}^\dagger-  \widehat{\mathcal{b}}^\dagger \widehat{\mathcal{a}}^\dagger) (2\delta-1) \, .
\end{split}
\end{equation}
Thus 
\begin{equation}
\operatorname{det}_{\frac{1}{2}}\left[ g\,\widehat{\mathcal{W}}_H\, g^{-1}\right]=\operatorname{det}_{\frac{1}{2}}\left[\widehat{\mathcal{W}}_H\right]
\end{equation}
and, in view of  \eqref{eqn_gaugetrafo_W}, the symmetrically ordered determinant is gauge-invariant. This also implies that 
\begin{equation}
\operatorname{det}_{\frac{1}{2}}\left[\widehat{\mathcal{W}}_H\right]=\operatorname{det}_{\frac{1}{2}}\left[\widehat{W_p}\right] \, .
\end{equation}
Altogether, the symmetric ordering seems to be preferred, and we will often restrict consideration to this case. We start with the action on the one-sided puncture:
\begin{equation}
\begin{split}
\operatorname{det}_{\delta} \left. \widehat{\mathcal{W}}_H \right \rvert_{\Hpju} 
&= \xi_{c}(j)^{2} \, \id_{\Hpju} - \frac{\xi_{s}(j)^{2}}{16} \left( \pij{j}{\widehat{\mathcal{E}_{3}}} \right)^{2}\\
&\quad- \frac{\xi_{s}(j)^{2}}{16} \left[ \left( \pij{j}{\widehat{\mathcal{E}_{1}}} \right)^{2} + \left( \pij{j}{\widehat{\mathcal{E}_{2}}} \right)^{2} + i \delta \,[\pij{j}{\widehat{\mathcal{E}_{1}}}, \pij{j}{\widehat{\mathcal{E}_{2}}}] +i (1-\delta) [\pij{j}{\widehat{\mathcal{E}_{2}}}, \pij{j}{\widehat{\mathcal{E}_{1}}}]\right]\\
&=\xi_{c}(j)^{2} \, \id_{\Hpju} - \frac{\xi_{s}(j)^{2}}{16}\, \pij{j}{\widehat{\mathcal{E}}^2}+ i (1-2\delta)  \frac{\xi_{s}(j)^{2}}{16} \pij{j}{\widehat{\mathcal{E}_{3}}}\\
& =\left(\xi_{c}(j)^{2} + \frac{\xi_{s}(j)^{2}}{8}\,  \Delta_j \right)\, \id_{\Hpju} +i (1-2\delta) \frac{\xi_{s}(j)^{2}}{16} \pij{j}{\widehat{\mathcal{E}_{3}}}\, .
\end{split}
\end{equation}
For symmetric ordering this reduces to 
\begin{equation}
\operatorname{det}_{\frac{1}{2}} \left. \widehat{\mathcal{W}}_H \right \rvert_{\Hpju} 
= \left(\xi_{c}(j)^{2} + \frac{\xi_{s}(j)^{2}}{8} \, \Delta_j \right)\, \id_{\Hpju}. 
\end{equation}
For the two sided puncture, the determinant acts as 
\begin{equation}
\begin{split}
\operatorname{det}_{\delta} \left. \widehat{\mathcal{W}}_H \right \rvert_{\Hpjj}
&= \chi_{c}(j)^{2}\,\id_{\Hpjj} - \frac{1}{16} \, \chi_{s}(j)^{2} \, \left[ \left( \pij{j^u}{\widehat{\mathcal{E}_{3}^{(u)}}} \right)^{2} + \left( \pij{j^d}{\widehat{\mathcal{E}_{3}^{(d)}}} \right)^{2} - 2 \, \pij{j^u}{\widehat{\mathcal{E}_{3}^{(u)}}} \, \pij{j^d}{\widehat{\mathcal{E}_{3}^{(d)}}}\right] \\
&\qquad \qquad - \frac{1}{16} \, \chi_{s}(j)^{2} \, \Biggl[ \left( \pij{j^u}{\widehat{\mathcal{E}_{1}^{(u)}}} \right)^{2} + \left( \pij{j^u}{\widehat{\mathcal{E}_{2}^{(u)}}} \right)^{2} + \left( \pij{j^d}{\widehat{\mathcal{E}_{1}^{(d)}}} \right)^{2} + \left( \pij{j^d}{\widehat{\mathcal{E}_{2}^{(d)}}} \right)^{2} 
- 2 \, \pij{j^u}{\widehat{\mathcal{E}_{1}^{(u)}}} \, \pij{j^d}{\widehat{\mathcal{E}_{1}^{(d)}}} - 2 \, \pij{j^u}{\widehat{\mathcal{E}_{2}^{(u)}}} \, \pij{j^d}{\widehat{\mathcal{E}_{2}^{(d)}}}\\
&\qquad \qquad+ i(2\delta-1)\, \left[ \pij{j^u}{\widehat{\mathcal{E}_{1}^{(u)}}} , \pij{j^u}{\widehat{\mathcal{E}_{2}^{(u)}}} \right] + i(2\delta-1)\,\left[ \pij{j^d}{\widehat{\mathcal{E}_{1}^{(d)}}} , \pij{j^d}{\widehat{\mathcal{E}_{2}^{(d)}}} \right]  \Biggr]\\
&=\chi_{c}(j)^{2}\,\id_{\Hpjj} - \frac{1}{16} \, \chi_{s}(j)^{2} \, \left[ \left(\pij{j^u}{\widehat{\mathcal{E}^{(u)}}} \right)^{2} + \left( \pij{j^u}{\widehat{\mathcal{E}^{(d)}}} \right)^{2} -2\, \widehat{\mathcal{E}^{(u)}} \cdot \widehat{\mathcal{E}^{(d)}}\right]
- \frac{i}{16}(2\delta-1) \, \chi_{s}(j)^{2} \, \widehat{\mathcal{E}^{(u+d)}_3} \, .
\end{split}
\end{equation}
For the symmetric ordering, this reduces to 
\begin{equation}
\operatorname{det}_{\frac{1}{2}} \left. \widehat{\mathcal{W}}_H \right \rvert_{\Hpjj}
=\chi_{c}(j)^{2}\,\id_{\Hpjj} + \frac{1}{8} \, \chi_{s}(j)^{2} \, \left[ 2\Delta_j \,\id_{\Hpjj} + \widehat{{E}^{(u)}} \cdot \widehat{{E}^{(d)}}\right], 
\end{equation}
and on the gauge-invariant Hilbert space to 
\begin{equation}
\begin{split}
\operatorname{det}_{\frac{1}{2}} \left. \widehat{\mathcal{W}}_H \right \rvert_{\Hpjj}
&=\chi_{c}(j)^{2}\,\id_{\Hpjj} + \frac{1}{8} \, \chi_{s}(j)^{2} \, \left[ 2\Delta_j \,\id_{\Hpjj} - \left(\widehat{{E}^{(u)}}\right)^2\right]\\
&=\left(\chi_{c}(j)^{2} + \frac{1}{2} \, \chi_{s}(j)^{2} \, \Delta_j \right)\,\id_{\Hpjj}  \, .
\end{split}
\end{equation}
We see that, in general, the eigenvalues of the determinant operator differ from 1. However, there is a limit in which they get close. Recall that $\xi_{c}(j)$ and $\xi_{s}(j)$ (and similarly $\chi_{c}(j)$ and $\chi_{s}(j)$)
are both power series in the parameter $c$ introduced in \eqref{eqn:def_c}. For small $c$, \footnote{In the application to black holes, $c$ contains the area of the black hole horizon in the denominator, thus we can assume $c$ to be small in the case of macroscopic black holes, for example.}
we can consider the Taylor expansion of the eigenvalue of the determinant operator to second order. We get
\begin{equation}
\xi_{c}(j) \approx 1 + \frac{\left( 2j + 1 \right)^{2} \, c^{2}}{32} 
\end{equation}
and
\begin{equation}
\xi_{s}(j) \approx \mathcal{O}(c) \, ,
\end{equation}
and a similar result for $\chi_{c}(j)$ and $\chi_{s}(j)$. This shows that for small $c$ we are in a regime in which $\widehat{\mathcal{W}}_H$ is close to a classical $\SU$-element.  

Another regime in which the determinant is close to 1 can be seen from the plots in figure \ref{fig:determinant}. 
\begin{figure}
\includegraphics[width=0.4\textwidth]{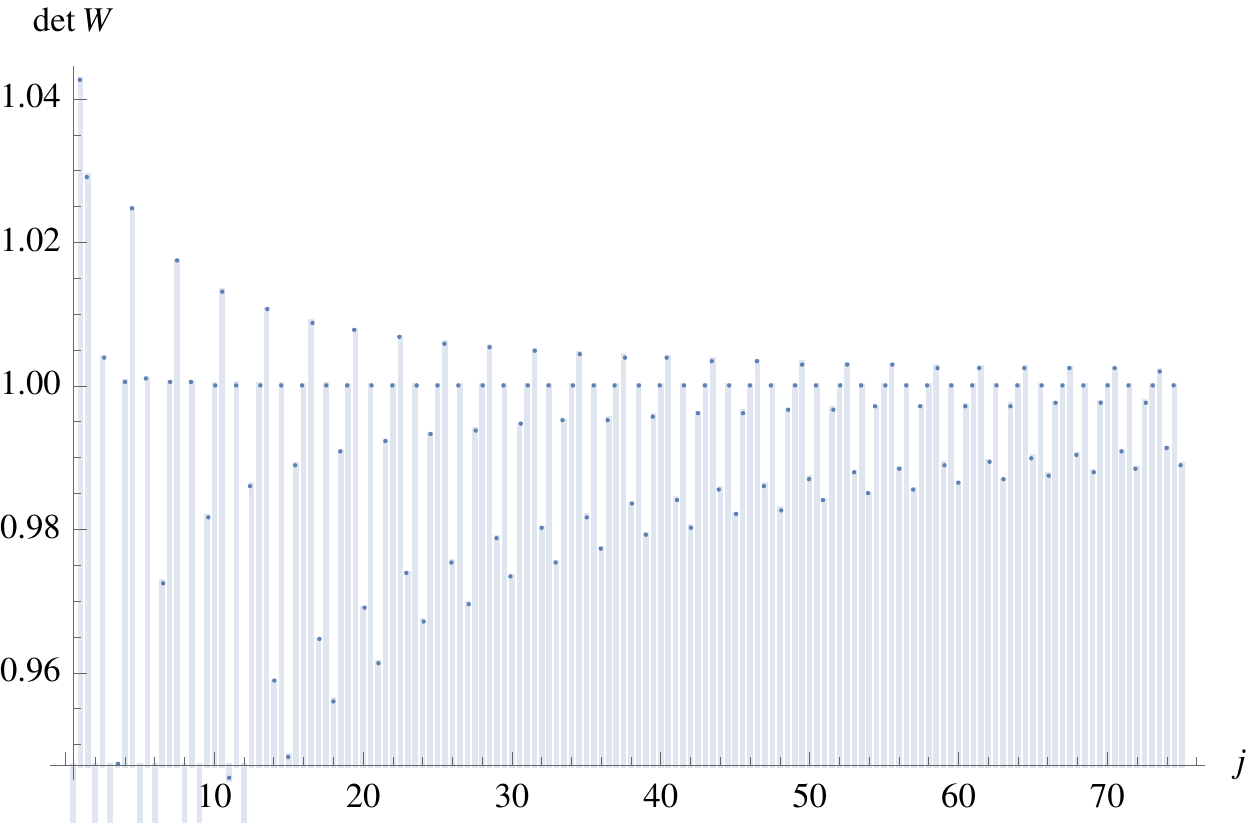}
\hspace{0.05\textwidth}
\includegraphics[width=0.4\textwidth]{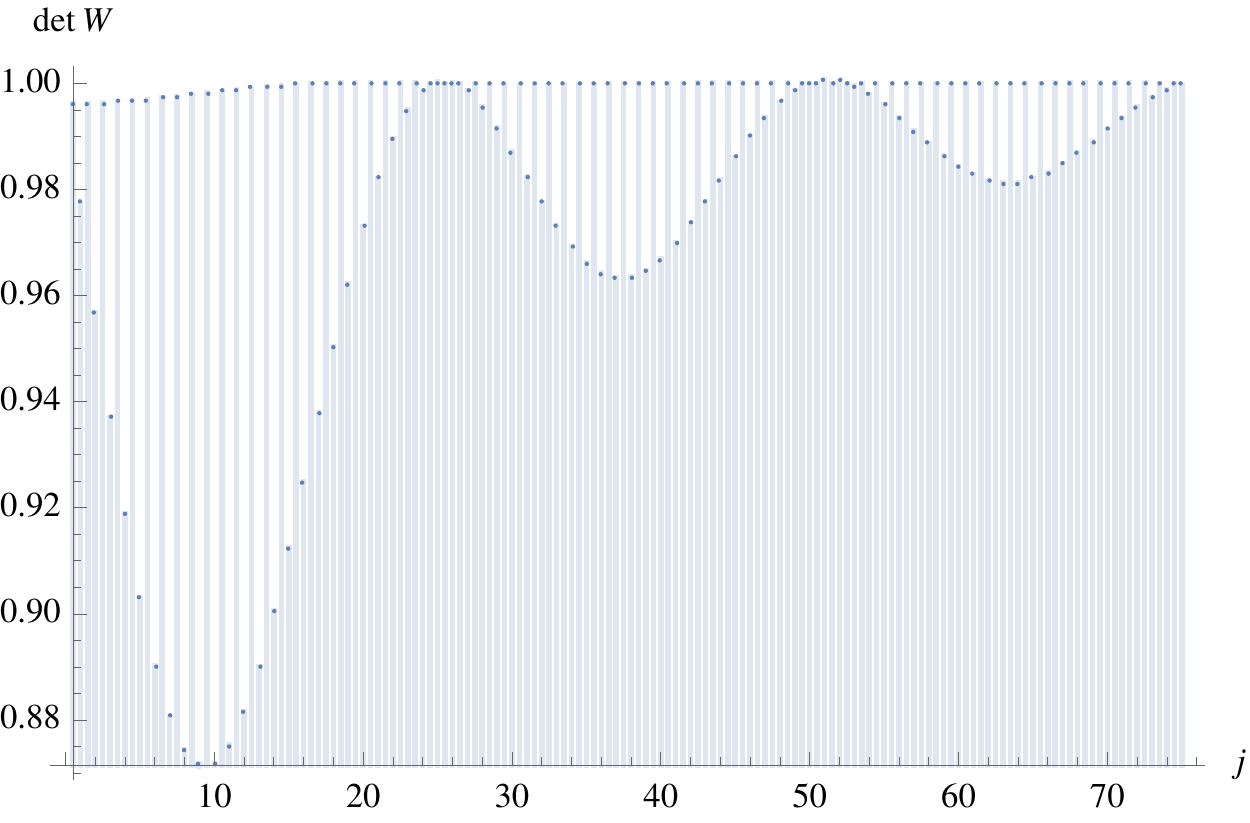}
\caption{The eigenvalues of the determinant for $c=8\pi i/k$ with $k=3$ (left)and $k=101$ (right). }
\label{fig:determinant}
\end{figure}
For fixed $c=8\pi i/k$ with $k\in \mathbb{N}$, the eigenvalues oscillate as a function of $j$ with a period set by $k$, but they tend to 1 quickly as $j$ gets larger.  Additionally, it appears that there are also certain small values of $j$ for which the eigenvalue is very close to 1. For example, in the plot for $k=3$ there is a series $\{5/2, 4, 11/2, 7, 17/2,\ldots \}$ of $j$-values with determinant close to 1. One notices a spacing of $k/2$.  For $k=101$ there is a similar series $\{1/2, 3/2, 5/2, 7/2, \ldots \}$. 

\subsection{Adjoint operator}

As we have seen in chapter \ref{sec:quant}, the quantum operator associated to a surface holonomy $\mathcal{W}_{H}$ takes the form
\begin{equation}
\widehat{\mathcal{W}}_{H} \Biggl\vert_{\Heej} = \chi_{c}(j) \, \id_{\Heej} \otimes \, \mathbb{1}_{2} + i \chi_{s}(j) \, \kappa^{mn} \, \pi^{(j)}[\widehat{E}_{m}]  \otimes \, h^{-1}_{\alpha_{p}} \tau_{n} h^{\vphantom{-1}}_{\alpha_{p}}\, ,
\end{equation}
where $p$ denotes the location of the puncture. Quantum surface holonomies can thus be regarded as two-by-two matrices whose entries are operators acting on spin network states. The adjoint $\widehat{W}_{H}^{\dagger}$ is thus given by transposing the two-by-two matrix and then taking the adjoint of each entry as an operator. As both $\chi_{c}(j)$ and $\chi_{s}(j)$ are real, this leads to
\begin{equation}
\widehat{\mathcal{W}}_{H}^{\dagger} \Biggl\vert_{\Heej} = \chi_{c}(j) \, \id_{\Heej} \otimes \, \mathbb{1}_{2} - i \chi_{s}(j) \, \kappa^{mn} \, \pi^{(j)}[\widehat{E}_{m}] \otimes \, h^{-1}_{\alpha_{p}} \tau_{n} h^{\vphantom{-1}}_{\alpha_{p}}
\label{eqn:adjoint_surf_hol}
\end{equation}
for the action of the adjoint of a quantum surface holonomy on a single puncture state. Now, recall that, classically, surface holonomies are elements of $\SU$ and, as such, their adjoint is equal to the inverse of the surface holonomy. Furthermore, the inverse surface holonomy is equal to the surface holonomy associated to the (horizontally) inverse homotopy, 
\begin{equation}
\mathcal{W}^{-1}_H = \mathcal{W}_{H^{-1}}.
\label{eqn:hom_property_class_surf_hol}
\end{equation}
It is not clear, however, whether the latter property carries over to the quantum theory, since the horizontal inverse of a homotopy is only an inverse on the level of equivalence classes with respect to thin homotopy and the quantum surface holonomy operators are not well-defined on those equivalence classes. Therefore, let us next evaluate $\widehat{W}_{H^{-1}}$ on a single-puncture state and compare the result to \eqref{eqn:adjoint_surf_hol}.

The horizontal inverse of a homotopy $H(s,t)$ is given by $H^{-1}(s,t) = H(s,1-t)$. It is immediate to see from equation \eqref{eqn:surf_hol_explicit_formula} that the inverse homotopy induces the inverse orientation on the surface $S_H$. Other than that, the integral is over the same surface and therefore, comparing with \eqref{eqn:surfhol_singlepunct}, we have
\begin{equation}
\begin{split}
\widehat{\mathcal{W}}_{H^{-1}} \Biggl\vert_{\Heej} &= h^{-1}_{\tilde{\alpha}_{p}}\, \left( Q_{\text{DK}}\left[  \exp \left( - c \, T_{i} \kappa^{ij} E_{j}(p) \right)  \right] \Biggl\vert_{\Heej} \right) h^{\vphantom{-1}}_{\tilde{\alpha}_{p}} \\
&= \chi_{c}(j) \, \id_{\Heej} \otimes \, \mathbb{1}_{2} - i \chi_{s}(j) \, \kappa^{mn} \, \pi^{(j)}[\widehat{E}_{m}] \otimes \, h^{-1}_{\tilde{\alpha}_{p}} \tau_{n} h^{\vphantom{-1}}_{\tilde{\alpha}_{p}} \, .
\end{split}
\label{eqn:inverse_quant_surf_hol}
\end{equation}
This is almost identical to the action of the adjoint operator. Note, however, that the holonomies conjugating the generators of $\su$ in the second term in \eqref{eqn:inverse_quant_surf_hol} are calculated along different paths than in \eqref{eqn:adjoint_surf_hol}. However, in the absence of further punctures, the corresponding quantum states are related by a diffeomorphism. This indicates that the quantum analogue of \eqref{eqn:hom_property_class_surf_hol} might hold on single-puncture states at the diffeomorphism-invariant level. Another class of states on which it might hold  are those where the connection is flat on the part of the surface enclosed by  $\tilde{\alpha}^{-1}_p \circ \alpha_p$. This includes in particular the single-puncture states satisfying the IH boundary condition on $S_H$.
\subsection{Products of quantum surface holonomies}
Let us now consider products of surface holonomy operators. We will again restrict to the one puncture case. For the setup and notation see again figure~\ref{fig:single_puncture_case} and the text surrounding it. We will be working in the standard basis $\{\tau_i\}$ of $\su$ in which 
\begin{equation}
\label{eq:CartanKilling}
\kappa_{ik}=-2\delta_{ik}, \qquad \kappa^{ik}=-\frac{1}{2}\delta^{ik}. 
\end{equation}
We will also use the fact that the basis can be regarded as an intertwiner,  
\begin{equation}
\label{eq:adjoint_action}
g\tau_ig^{-1}= \tau_j\pi_1(g)^{j}{}_{i}. 
\end{equation}
We recall that the action of a single surface holonomy for the case depicted in the upper part of figure \ref{fig:single_puncture_case} (one-sided puncture) is given by
\begin{equation}
\label{eq:w_action}
(\widehat{\mathcal{W}_H})^A{}_B \ket{\jstate}= \xi_{c}(j)\,\delta^A_B\,\ket{\jstate}-i\,\xi_{s}(j)\,\ket{\ABhj} \, ,
\end{equation}
where 
\begin{equation}
\ket{\ABhj}= (h^{-1}_e \tau_i h^{\vphantom{-1}}_e)^A{}_B\; \kappa^{ik}\; \pi_j(\tau_kh_{e'})
=  \tau_j{}^A{}_B\, \pi_1(h_e^{-1})^j{}_i \; \kappa^{ik}\; \pi_j(\tau_kh^{\vphantom{-1}}_{e'}) \, . 
\end{equation}
The negative sign in \eqref{eq:w_action} is due to the fact that the edge $e'$ is assumed as incoming with respect to $S(H)$, and we have used  \eqref{eq:adjoint_action} to rewrite the holonomies connecting the puncture with the source of $H$. The double application of the surface holonomy operator then gives
\begin{equation}
\label{eq:ww_action}
(\widehat{\mathcal{W}_H} \widehat{\mathcal{W}_H}){}^A{}_B\ket{\jstate}
= \xi_{c}(j)^2\,\delta^A_B\,\ket{\jstate}
-2i\,\xi_{c}(j)\,\xi_{s}(j)\,\ket{\ABhj} 
-\xi{s}(j)^2 \,\ket{\ABhhj} 
\end{equation}
with 
\begin{equation*}
\ket{\ABhhj} =  (\tau_{i'}\tau_i)^A{}_B\, \pi_1(h_e^{-1})^i{}_k  \pi_1(h_e^{-1})^{i'}{}_{k'}\; \kappa^{kl}\kappa^{k'l'}\; \pi_j(\tau_l\tau_{l'}h^{\vphantom{-1}}_{e'}). 
\end{equation*}
This state is not linearly independent from $\ket{\jstate}$ and $\ket{\ABhj}$. In fact, it decomposes into a linear combination of them due to the fact that the latter are spin networks. We will use
\begin{equation}
\label{eq:TauProduct}
\tau_i\tau_{i'}=-\frac{1}{4}\delta_{ii'}\one+\frac{1}{2} \epsilon_{ii'k'} \delta^{kk'} \tau_k 
\end{equation}
to decompose the first product of $\tau$s. The orthogonality of  the matrix $\pi_1(h^{\vphantom{-1}}_e)$ simplifies the first resulting term, while for the second we obtain from the intertwiner property of $\epsilon$ 
\begin{equation*}
\epsilon_{i'jk} \pi_1(h)^j{}_{j'}\pi_1(h)^k{}_{k'}= \epsilon_{i''j'k'}\pi_1(h^{-1})^{i''}{}_{i'} \, . 
\end{equation*}
This gives 
\begin{align*}
\ket{\ABhhj} &=  \left( -\frac{1}{4}\delta_{kk'}\delta^A_B-\frac{1}{2}\epsilon_{kk'n} \pi_1(h)^n{}_{m'}\delta^{mm'}(\tau_m)^A{}_B  \right)\; \kappa^{kl}\kappa^{k'l'}\; \pi_j(\tau_l\tau_{l'}h_{e'})\\
&=  \frac{1}{2}\left( \frac{1}{4}\kappa_{kk'}\delta^A_B-\epsilon_{kk'n} \pi_1(h)^n{}_{m'}\delta^{mm'}(\tau_m)^A{}_B  \right)\; \kappa^{kl}\kappa^{k'l'}\; \pi_j(\tau_l\tau_{l'}h_{e'})\\
&=  \frac{1}{2}\left( \frac{1}{4}\kappa^{ll'}\delta^A_B-\epsilon^{ll'}{}_n \pi_1(h)^n{}_{m'}\delta^{mm'}(\tau_m)^A{}_B  \right)\; \pi_j(\tau_l\tau_{l'}h_{e'})\\
&=\frac{1}{8} \delta^A_B \pi_j(\kappa^{ll'}\tau_l\tau_{l'}h_{e'})-\frac{1}{8}  \pi_1(h)^n{}_{m'}\delta^{mm'}(\tau_m)^A{}_B \delta^k_n\;\pi_j(\tau_kh_{e'}), 
\end{align*}
where in the last line we have used 
\begin{equation*}
\epsilon^{ll'}{}_n\tau_l\tau_{l'}=\frac{1}{4} \epsilon_{ll'n}\delta^{lm}\delta^{l'm'} \tau_m\tau_{m'}
=\frac{1}{8}\epsilon_{ll'n}\delta^{lm}\delta^{l'm'} \epsilon_{mm'k'}\delta^{kk'}\tau_k 
= \frac{1}{4}\delta^k_n\tau_k 
\end{equation*}
because of \eqref{eq:CartanKilling} and \eqref{eq:TauProduct}. We can further simplify
\begin{align*}
\ket{\ABhhj} &= \frac{1}{8} \Delta_j \delta^A_B \ket{\jstate}+\frac{1}{4}  \pi_1(h)^{nm}(\tau_m)^A{}_B \delta^k_n\;\pi_j(\tau_kh_{e'})\\
&=\frac{1}{8} \Delta_j \delta^A_B \ket{\jstate}+\frac{1}{4}  \pi_1(h^{-1})^{mn}(\tau_m)^A{}_B \delta^k_n\;\pi_j(\tau_kh_{e'})\\
&=\frac{1}{8} \Delta_j \delta^A_B \ket{\jstate}+\frac{1}{4}  \pi_1(h^{-1})^{m}{}_{n'}(\tau_m)^A{}_B\kappa^{nn'} \;\pi_j(\tau_nh_{e'})\\
&=\frac{1}{8} \Delta_j \delta^A_B \ket{\jstate}+\frac{1}{4}  \ket{\ABhj}, 
\end{align*}
with  
\begin{equation*}
\pi_j(\kappa^{ab}\tau_a\tau_b)= \frac{1}{2}\, j(j+1) \one =: \Delta_j\,\one.
\end{equation*}
Thus we find
\begin{align}
(\widehat{\mathcal{W}_H} \widehat{\mathcal{W}_H}){}^A{}_B\ket{\jstate}
&= \left(\xi_{c}(j)^2-\frac{1}{8} \Delta_j \,\xi_{s}(j)^2 \right) \,\delta^A_B\,\ket{\jstate}
-\left(2i \,\xi_{c}(j)\,\xi_{s}(j) + \frac{1}{4}\xi_{s}(j)^2\right) \,\ket{\ABhj} \\
\tr(\widehat{\mathcal{W}_H} \widehat{\mathcal{W}_H}) \ket{\jstate}
&=  \left(2\, \xi_{c}(j)^2-\frac{1}{4} \Delta_j \,\xi_{s}(j)^2 \right) \ket{\jstate}
\label{eq:tr_onesided}
\end{align}
for the product of two surface holonomies and the trace thereof. We also see from this calculation that the space spanned by the states $\ket{\jstate}$ and $\ket{\ABhj}$ is closed under the action of the surface holonomy operator. This action was already given in \eqref{eq:w_action} for the former, while on the latter state, the action is explicitly given by
\begin{equation}
\begin{split}
(\widehat{\mathcal{W}_H})^A{}_C \ket{\CBhj} &= \xi_{c}(j)\,\ket{\ABhj}-i\,\xi_{s}(j)\,\ket{\ABhhj}\\
&= -\frac{i}{8} \, \Delta_j \, \xi_{s}(j) \, \delta^A_B \, \ket{\jstate} + \left( \xi_{c}(j) - \frac{i}{4} \, \xi_{s}(j) \right) \, \ket{\ABhj} \, .
\end{split}
\end{equation}

For the case that the holonomy runs through the puncture (lower part of figure \ref{fig:single_puncture_case}), there are some changes to the above result. The action of the surface holonomy is now  
\begin{equation}\label{eq:w_action2}
(\widehat{\mathcal{W}_H})^A{}_B \ket{\jjstate}= \chi_{c}(j)\,\delta^A_B\,\ket{\jjstate}-2i \, \chi_{s}(j)\,\ket{\ABhjj} \, ,
\end{equation}
where 
\begin{equation*}
\ket{\ABhjj} =  \tau_j{}^A{}_B\, \pi_1(h_e^{-1})^j{}_i \; \kappa^{ik}\; \pi_j(h_{e''}\tau_kh_{e'}) \, . 
\end{equation*}
Acting a second time, one obtains
\begin{equation*}
(\widehat{\mathcal{W}_H} \widehat{\mathcal{W}_H}){}^A{}_B\ket{\jjstate}
=  \chi_{c}(j)^2\,\delta^A_B\,\ket{\jjstate}
-4i\,\chi_{c}(j)\,\chi_{s}(j) \,\ket{\ABhjj} 
-2 \, \chi_{s}(j)^2 \ket{\ABhhjj} \, ,
\end{equation*}
where now
\begin{align*}
\ket{\ABhhjj} &=  (\tau_{i'}\tau_i)^A{}_B\, \pi_1(h_e^{-1})^i{}_k  \pi_1(h_e^{-1})^{i'}{}_{k'}\; \kappa^{kl}\kappa^{k'l'}\; \pi_j(h_{e''}(\tau_l\tau_{l'}+\tau_{l'}\tau_{l})h_{e'})\\
&=  (\tau_{i'}\tau_i)^A{}_B\, \pi_1(h_e^{-1})^i{}_k  \pi_1(h_e^{-1})^{i'}{}_{k'}\; \kappa^{kl}\kappa^{k'l'} \left(2\pi_j(h_{e''}\tau_l\tau_{l'}h_{e'})+\epsilon_{l'ln'}\delta^{nn'} \pi_j(h_{e''}\tau_nh_{e'})\right)\\
&= \frac{1}{4} \Delta_j \delta^A_B \ket{\jjstate}+\frac{1}{2}  \ket{\ABhjj}+ 
 (\tau_{i'}\tau_i)^A{}_B\, \pi_1(h_e^{-1})^i{}_k  \pi_1(h_e^{-1})^{i'}{}_{k'}\; \kappa^{kl}\kappa^{k'l'} \;\epsilon_{l'ln'}\delta^{nn'} \pi_j(h_{e''}\tau_nh_{e'})\\
 &= \frac{1}{4} \Delta_j \delta^A_B \ket{\jjstate}+\frac{1}{2}  \ket{\ABhjj}-
 (\tau_{i'}\tau_i)^A{}_B\, \pi_1(h_e^{-1})^{ik}  \pi_1(h_e^{-1})^{i'k'}\;\epsilon_{n'kk'}\delta^{nn'} \pi_j(h_{e''}\tau_nh_{e'})\\
 &= \frac{1}{4} \Delta_j \delta^A_B \ket{\jjstate}+\frac{1}{2}  \ket{\ABhjj}-
 (\tau_{i'}\tau_i)^A{}_B\, \pi_1(h_e)^{ki}  \pi_1(h_e)^{k'i'}\;\epsilon_{n'kk'}\delta^{nn'} \pi_j(h_{e''}\tau_nh_{e'})\\
 &= \frac{1}{4} \Delta_j \delta^A_B \ket{\jjstate}+\frac{1}{2}  \ket{\ABhjj}-
 (\tau_{i'}\tau_i)^A{}_B\, \kappa^{ik}\kappa^{i'k'}\epsilon_{lkk'} \pi_1(h_e^{-1})^{l}{}_{n'} \delta^{nn'} \pi_j(h_{e''}\tau_nh_{e'})\\
 &= \frac{1}{4} \Delta_j \delta^A_B \ket{\jjstate}+\frac{1}{2}  \ket{\ABhjj}
 -\frac{1}{2}\tau_l{}^A{}_B\; \pi_1(h_e^{-1})^{l}{}_{n'} \kappa^{nn'} \pi_j(h_{e''}\tau_nh_{e'})\\
 &= \frac{1}{4} \Delta_j \delta^A_B \ket{\jjstate}. 
 \end{align*}
We thus get 
\begin{align}
(\widehat{\mathcal{W}_H} \widehat{\mathcal{W}_H}){}^A{}_B\ket{\jjstate}
&= \left( \chi_{c}(j)^2 - \frac{1}{2} \Delta_j \,\chi_{s}(j)^2 \right) \,\delta^A_B\,\ket{\jjstate}
-4i \, \chi_{c}(j)\, \chi_{s}(j) \, \ket{\ABhjj} \\
\tr (\widehat{\mathcal{W}_H} \widehat{\mathcal{W}_H})\ket{\jjstate}
&= \left(2 \, \chi_{c}(j)^2 - \Delta_j \, \chi_{s}(j)^2 \right) \,\ket{\jjstate}\, . 
\end{align}
We are also interested in products involving the (matrix and operator) adjoint defined in \eqref{eqn:adjoint_surf_hol}. We have 
\begin{align*}
(\mathcal{W}^\dagger_H)^A{}_B \ket{\jstate}&= \overline{\xi_{c}(j)}\,\delta^A_B\,\ket{\jstate} +i\,\overline{\xi_{s}(j)}\,\ket{\ABhj} \, ,\\ 
(\mathcal{W}^\dagger_H)^A{}_B \ket{\jjstate}&= \overline{\chi_{c}(j)}\,\delta^A_B\,\ket{\jjstate}+2i\,\overline{\chi_{s}(j)}\,\ket{\ABhjj} 
\end{align*}
and hence
\begin{align*}
(\widehat{\mathcal{W}_H}\widehat{\mathcal{W}_H}^\dagger)^A{}_B \ket{\jstate}
&= \left(|\xi_{c}(j)|^2+\frac{1}{8} \Delta_j \,|\xi_{s}(j)|^2 \right) \,\delta^A_B\,\ket{\jstate}
-\left(2\re((-i)\xi_{c}(j)\overline{\xi_{s}(j)}) + \frac{1}{4}|\xi_{s}(j)|^2\right) \, \ket{\ABhj}\, ,\\
(\mathcal{W}_H \mathcal{W}^\dagger_H){}^A{}_B\ket{\jjstate}
&= \left(|\chi_{c}(j)|^2+ \frac{1}{2} \Delta_j \,|\chi_{s}(j)|^2 \right) \,\delta^A_B \,\ket{\jjstate}
-4\re((-i)\chi_{c}(j)\overline{\chi_{s}(j)}) \ket{\ABhjj}.
\end{align*}
Since we have 
\begin{equation*}
\overline{\xi_{c}(j)}=\xi_{c}(j)\qquad \overline{\xi_{s}(j)}=\xi_{s}(j)
\end{equation*}
and
\begin{equation*}
\overline{\chi_{c}(j)}=\chi_{c}(j)\qquad \overline{\chi_{s}(j)}=\chi_{s}(j)\, ,
\end{equation*}
this simplifies to 
\begin{align}
(\widehat{\mathcal{W}_H}\widehat{\mathcal{W}_H})^\dagger{}^A{}_B \ket{\jstate}
&= \left(\xi_{c}(j)^2+\frac{1}{8} \Delta_j \,\xi_{s}(j)^2 \right) \,\delta^A_B\,\ket{\jstate}
 - \frac{1}{4}\xi_{s}(j)^2 \;\ket{\ABhj},\\
(\widehat{\mathcal{W}_H} \widehat{\mathcal{W}_H}^\dagger){}^A{}_B\ket{\jjstate}
&= \left(\chi_{c}(j)^2 + \frac{1}{2} \Delta_j \,\chi_{s}(j)^2 \right) \,\delta^A_B\,\ket{\jjstate}\\
&= \det{}_\delta (\widehat{\mathcal{W}_H}) \;\delta^A_B\;\ket{\jjstate} \, .
\end{align}
One can also ask about products of surface holonomies that are not contracted, and in particular, about their commutators. These questions can be answered using the results of section \ref{se:matrix_elements}. 
In particular, \eqref{eq:one_sided_commutator}-\eqref{eq:two_sided_commutator} give the commutators. We would like to point out that these commutators vanish on two-sided gauge-invariant punctures.
\begin{equation}
\left[\widehat{(\mathcal{W}_H})^A{}_B, \widehat{(\mathcal{W}_H})^C{}_D  \right] \ket{\jjstate}= 0. 
\end{equation}
This is interesting, since it shows that on these states, the surface holonomies have the same adjointness and commutation relations as normal holonomy operator in the holonomy-flux algebra of loop quantum gravity. 

\subsection{Traces, relations, other irreducible representations} 
We have already considered traces of products of surfaces holonomies. We will now discuss traces a bit more systematically. Consider the trace of a single surface holonomy,  
\begin{equation}
\tr(\widehat{\mathcal{W}_H})= \mathcal{a}+\mathcal{a}^\dagger. 
\end{equation}
There is obviously no ordering ambiguity, and the traces are automatically gauge-invariant:
\begin{equation}
\tr(g\,\widehat{\mathcal{W}_H}\,g^{-1})= \tr(\widehat{\mathcal{W}_H}) \qquad \text{ for } g\in \SU\,. 
\end{equation}
On single punctures this implies $\tr(\widehat{\mathcal{W}_H})=\tr(\widehat{W_p}) $, and hence
\begin{equation}
\tr \left. \widehat{\mathcal{W}_H} \right\rvert_\Hpju= 2\xi_{c}(j) \id_{\Hpju}, \qquad \tr \left. \widehat{\mathcal{W}_H} \right\rvert_\Hpjj= 2\chi_{c}(j) \id_{\Hpjj}\,.
\end{equation}
  
We note that there are classical relations between the objects we have considered so far. For example, the relation
\begin{equation}
\det(W)=\frac{1}{2}\left((\tr W)^2 -\tr (W^2) \right)
\end{equation}
holds for any 2x2 matrix $W$. This is a relation which is intact in the quantum theory. For example, we can show that
\begin{align*}
\frac{1}{2}\left((\tr \widehat{\mathcal{W}_H})^2 -\tr (\widehat{\mathcal{W}_H}^2) \right)\,\ket{\jjstate}
&= \frac{1}{2}\left(4\, \chi_{c}(j)^2 - 2\, \chi_{c}(j)^2 + \Delta_j \, \chi_{s}(j)^2 \right)\,\ket{\jjstate}\\
&= \frac{1}{2}\left(2 \, \chi_{c}(j)^2 + \frac{j(j+1)}{2} \, \chi_{s}(j)^2\right)\,\ket{\jjstate}\\
&=\left( \chi_{c}(j)^2 + \frac{j(j+1)}{4} \, \chi_{s}(j)^2\right)\,\ket{\jjstate}\\
&=\det (\widehat{\mathcal{W}_H}) \;\ket{\jjstate} \, . 
\end{align*}
For the two-sided puncture, we similarly have
\begin{align*}
\frac{1}{2}\left((\tr \widehat{\mathcal{W}_H})^2 -\tr (\widehat{\mathcal{W}_H}^2) \right)\,\ket{\jstate}
&= \frac{1}{2}\left(4\, \xi_{c}(j)^2 - 2\, \chi_{c}(j)^2 + \frac{1}{4}\Delta_j \, \chi_{s}(j)^2 \right)\,\ket{\jstate}\\
&=\left( \chi_{c}(j)^2 + \frac{1}{8}\Delta_j \, \chi_{s}(j)^2\right)\,\ket{\jstate}\\
&=\det{}_\frac{1}{2} (\widehat{\mathcal{W}_H}) \;\ket{\jstate}. 
\end{align*}
We can also use the traces above to find expressions for the traces of surface holonomies in different representations of $\SU$. For example, 
\begin{equation}
 \tr(\pi_1(g)):=\frac{1}{2}\left[\tr(g^2)+\tr(g)^2\right]\qquad \text{ for } g\in \SU \, .
\end{equation}
We can thus define 
\begin{equation*}
 \tr \pi_1(\widehat{\mathcal{W}_H)} =\frac{1}{2}\left[\tr(\widehat{\mathcal{W}_H}^2)+\tr(\widehat{\mathcal{W}_H})^2\right] \, ,
\end{equation*}
and we find 
\begin{equation}
\begin{split}
\tr \pi_1(\widehat{\mathcal{W}_H)}\,\ket{\jstate}
&= \frac{1}{2}\left[6\,\xi_c^2(j)-\frac{1}{4}\Delta_j \, \xi_s^2(j)\right]\,\ket{\jstate}
=\frac{1}{2}\left[8\,\xi_c^2(j)-2\det{}_\frac{1}{2}\widehat{\mathcal{W}_H}  \right]\,\ket{\jstate}\\
&=\left[4\,\xi_c^2(j)-\det{}_\frac{1}{2}\widehat{\mathcal{W}_H}  \right]\,\ket{\jstate} \, . 
\end{split}
\end{equation}
Similarly, 
\begin{equation}
\tr \pi_1(\widehat{\mathcal{W}_H)}\,\ket{\jjstate} 
=\left[4\,\chi_c^2(j)-\det{}_\delta\widehat{\mathcal{W}_H}  \right]\,\ket{\jjstate}. 
\end{equation}
Assuming $\det{}_\delta\widehat{\mathcal{W}_H}=1$, we note that both eigenvalues are of the form
\begin{equation}
\lambda_j=4\cos^2 \left((2j+1)\theta\right) -1 \qquad \text { with } \theta = -\frac{ic}{8},  -\frac{ic}{4} \, . 
\end{equation}
This is interesting because it can be rewritten as
\begin{equation}
\begin{split}
\lambda_j&=3- 4\sin^2 \left((2j+1)\theta\right)=\frac{3\sin^2 \left((2j+1)\theta\right)- 4\sin^3 \left((2j+1)\theta\right)}{\sin^2 \left((2j+1)\theta\right)}\\
&= \frac{\sin \left(3 (2j+1)\theta\right)}{\sin \left((2j+1)\theta\right)}\\
&= \frac{q^{3(2j+1)}-q^{-3(2j+1)}}{q-q^{-1}}\\
&= \frac{[3(2j+1)]_q}{[2j+1]_q} \qquad \text{ with } q=e^{i\theta}.  
\end{split}
\end{equation}
Here we have used the quantum integers 
\begin{equation}
[n]_q = \frac{q^n-q^{-n}}{q-q^{-1}}. 
\end{equation}
The eigenvalues are quotients of the Chern-Simons expectation value for holonomies around the Hopf-link and the unlink, respectively, \cite{Engle:2011vf} and thus arguably the expectation value 
of a surface holonomy around a Chern-Simons puncture.

\section{Application to black holes} 
\label{se_black_hole}

In this section, we want to come back to our original motivation for investigating quantum operators associated to surface holonomies. Namely, we want to use our quantum surface holonomies to quantize the isolated horizon boundary condition (IHBC)
\begin{equation}
\iota_{\mathcal{H}}^{*}F=C\, \iota_{\mathcal{H}}^{*} \,(*E) \, .
\end{equation}
As already stated in section \ref{sec:ClassSurfHol}, this condition is equivalent to
\begin{equation}
\mathcal{W}_{H}[A,C\, \iota_{\mathcal{H}}^{*} \,(*E)] = h_{H(1,.)}[A] \, ,
\label{eqn:QIHBC}
\end{equation}
which has to be satisfied for all homotopies $H$, s.t. the surface $S_H$ lies entirely within $\mathcal{H}$. Recall our assumption that all homotopies $H$ start from the trivial path at the point at which our surface holonomies transform. Also, note that we can evaluate these conditions for any two-dimensional surface $\mathcal{H}$. In this section, we will take $\mathcal{H}$ to be homeomorphic to a 2-sphere, but we will not assume that it is the spatial section of an isolated horizon! Ideally, we would find states in the quantum theory on which the quantum version of the IHBC is exactly satisfied. However, from our results in the previous section we can conclude that one-puncture states and two-puncture states cannot be solutions to this quantum operator equation. The reason is that holonomies are quantized as multiplication operators in LQG. Therefore, they act by multiplying the state with an element of $\SU$, which necessarily has unit determinant. The determinant of our quantum surface holonomies, however, does not equal unity for any choice of spin on such states. Nevertheless, we have seen that for some spins the determinant is very close to unity, which indicates that their behavior may be similar to real $\SU$ elements on some states. Therefore, instead of trying to implement the quantum isolated horizon boundary condition (QIHBC) exactly, we will take it as a definition for some kind of \emph{quantum holonomies} replacing the standard holonomy operators on $\mathcal{H}$.\footnote{This approach is supported by the fact that, in the original works on quantum isolated horizons in LQG, the holonomies on the horizon also did not take values in $\SU$ but rather in a quantum group deformation thereof.} We will regard states on which these quantum holonomies behave closely to classical holonomies as solutions to the QIHBC.\\ 

Since we are interested in surfaces of spherical topology here, we are faced with the topological property of such surfaces that a circle on a 2-sphere $S^2$ forms the boundary of two distinct surfaces. In terms of homotopies, this translates to the existence of two distinct equivalence classes of homotopies between any two paths sharing their endpoints. Let us denote representatives of these equivalence classes by $H_1$ and $H_2$, respectively. This now introduces an ambiguity in the definition of holonomies via the QIHBC. Consider any circular path $\gamma$ on $S^2$. Without loss of generality, we will assume $\gamma$ to coincide with the equator of $S^2$. We can then define the holonomy $h_\gamma$ using equation \eqref{eqn:QIHBC} in two different ways: either by
\begin{equation}
h_\gamma = \mathcal{W}_{H_1}
\end{equation}
or by
\begin{equation}
h_\gamma = \mathcal{W}_{H_2} \, ,
\end{equation}
where $H_1$ and $H_2$ now denote the homotopies from the constant path at the starting and end point $p$ of $\gamma$ to $\gamma$ by passing over the northern and southern hemisphere of $S^2$, respectively (see also figures \ref{fig:single_puncture_BH} and \ref{fig:double_puncture_BH}). Therefore, only states on which we have
\begin{equation}
\mathcal{W}_{H_1} = \mathcal{W}_{H_2}
\label{eqn:surf-hol-consistency-condition-on-sphere}
\end{equation}
qualify as candidates for implementing the QIHBC. Furthermore, classical holonomies satisfy the relation
\begin{equation}
h_{\gamma^{-1}} = h_{\gamma}^{-1} \, .
\end{equation}
If we want this property to hold also for our holonomies that are defined in terms of surface holonomies, then we need to restrict ourselves to states on which the relation
\begin{equation}
\mathcal{W}_{H_1} = h_{\gamma} = h_{\gamma^{-1}}^{-1} = \mathcal{W}_{H_2^{-1}}^{-1}
\end{equation}
holds. Here, $H_2^{-1}$ denotes the horizontal inverse of $H_2$ in the path 2-groupoid, i.e. it is a homotopy from the trivial path at $x_0$ to $\gamma^{-1}$. We can equivalently write this relation between surface holonomies on a 2-sphere as
\begin{equation}
\mathcal{W}_{H_1} \mathcal{W}_{H_2^{-1}} = \mathbb{1}_{2} = \mathcal{W}_{H_2^{-1}} \mathcal{W}_{H_1} \, ,
\label{eqn:two-puncture-consistency-condition}
\end{equation}
where we have now avoided the use of inverse surface holonomies. In the following, we will use the quantum version of \eqref{eqn:two-puncture-consistency-condition} as ameasures for how closely the holonomies defined via the quantized IHBC, i.e., in terms of quantum surface holonomies, resemble classical path holonomies. In the remainder of this section, we will analyze quantized versions of \eqref{eqn:surf-hol-consistency-condition-on-sphere} and \eqref{eqn:two-puncture-consistency-condition}. Throughout this section, we will only consider two-edge punctures. We deem this reasonable because, as we have seen in the previous section, the behavior of the quantum surface holonomy operators resembles that of their classical counterparts more closely when evaluated on this type of punctures.\\


\subsection{Single two-edge puncture}
Let us first consider the simple case of a single puncture as depicted in figure~\ref{fig:single_puncture_BH}. 
\begin{wrapfigure}{l}{0.3\textwidth}
\includegraphics[scale=0.55, trim={0 1.5cm 0 0},clip]{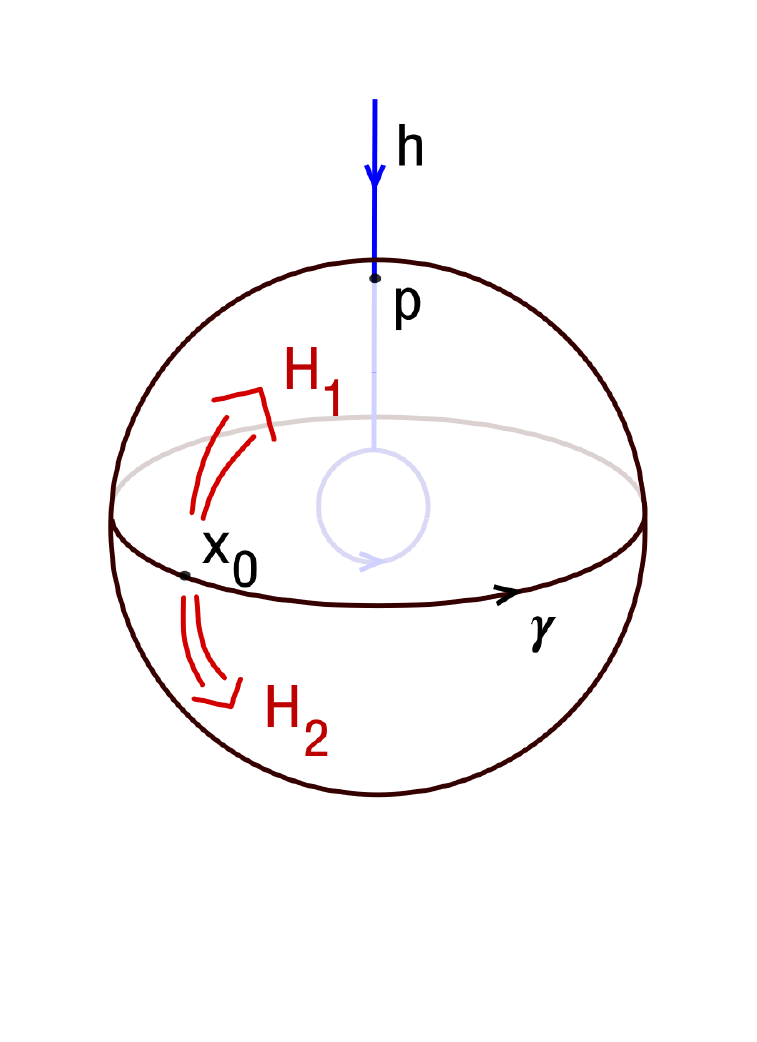}
\caption{Single puncture state}
\label{fig:single_puncture_BH}
\end{wrapfigure}
We will refer to this one-puncture state as $\Psi_{1P}$. Since there is no puncture on the southern hemisphere, we have
\begin{equation}
\widehat{\mathcal{W}}_{H_2} \, \ket{\Psi_{1P}} = \mathbb{1}_{2} \, \ket{\Psi_{1P}} \, .
\end{equation}
On the other hand, we have
\begin{equation}
\widehat{\mathcal{W}}_{H_1} \, \ket{\Psi_{1P}} = \chi_{c}(j) \, \mathbb{1}_{2} \ket{\Psi_{1P}} - 2i \, \chi_{s}(j) \, \kappa^{mn} \, \tau_{n} \, \pi^{(j)} \left[ \widehat{\mathcal{E}}_{m} \right] \ket{\Psi_{1P}} \, .
\end{equation}
Therefore, the consistency condition \eqref{eqn:surf-hol-consistency-condition-on-sphere} is only satisfied for spins $j$, such that
\begin{equation}
\chi_{c}(j) = 1 \qquad \text{and} \qquad \chi_{s}(j) = 0 \, .
\end{equation}
As there are no integer or half-integer spins satisfying this condition, we can already conclude that there are no single-puncture solutions to the quantized IHBC. Note that for $j=0$, the condition reduces to
\begin{equation}
\chi_{c}(0) = 1 \, ,
\end{equation}
since $\pi^{(j)}(\widehat{E}_{i})$ equals zero in the $j=0$ representation. This is of course trivially satisfied, as quantum surface holonomies act as the identity on punctures with $j=0$, and it implies that spin network graphs that do not puncture the sphere under consideration are solutions to the quantized IHBC. However, the area eigenvalue of the sphere vanishes on such states and thus the sphere would be unobservable.

\subsection{Double two-edge puncture}

\begin{wrapfigure}{r}{6cm}
\includegraphics[scale=0.55]{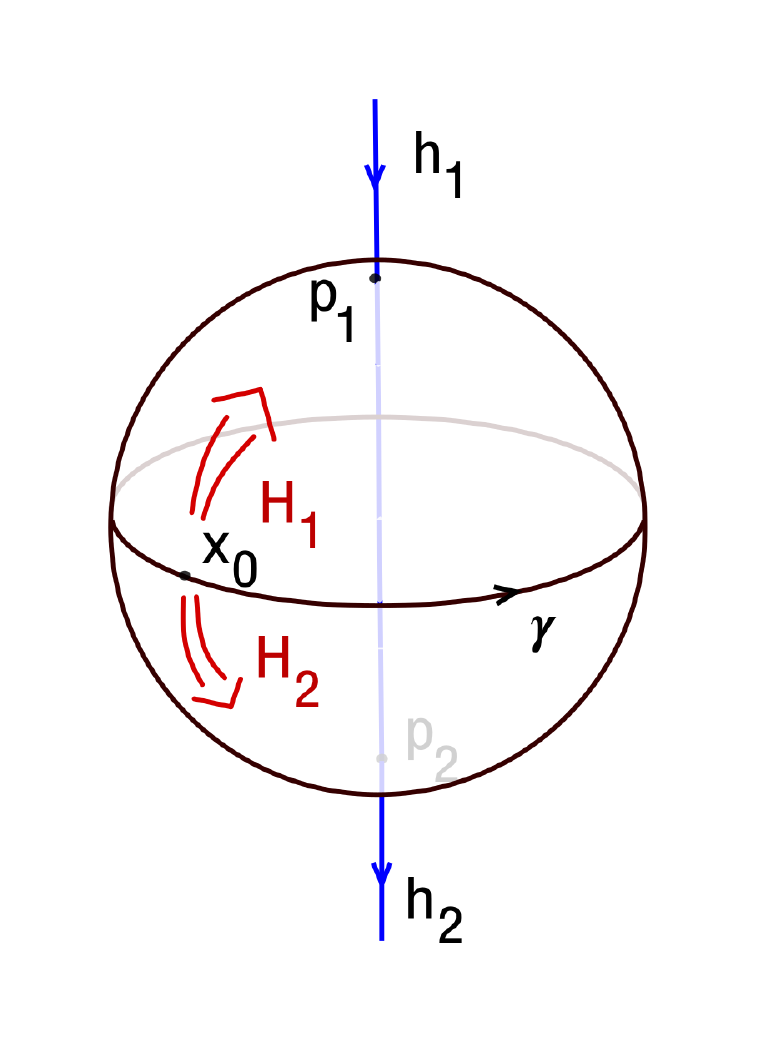}
\caption{Double puncture case}
\label{fig:double_puncture_BH}
\end{wrapfigure}
Let us go one step further and take a look at a situation with two punctures. Consider a spin network consisting of a single edge that punctures the sphere $H$ in two points labeled $p_1$ and $p_2$. Without loss of generality, we assume that $p_1$ and $p_2$ coincide with the north and south poles of the sphere (as illustrated in figure~\ref{fig:double_puncture_BH}). This state will be referred to as $\Psi_{2P}$. Let us start again with checking the consistency condition \eqref{eqn:surf-hol-consistency-condition-on-sphere}. This time we have
\begin{equation}
\widehat{\mathcal{W}}_{H_1} \ket{\Psi_{2P}} = \chi_{c}(j) \, \mathbb{1}_{2} \ket{\Psi_{2P}} - 2i \, \chi_{s}(j) \kappa^{mn} \tau_{n} \, \pi^{(j)} \left[ \widehat{\mathcal{E}}_{m} \right] \ket{\Psi_{2P}} \, .
\end{equation}
Recall that the minus sign before the second term is due to the different orientations of the spin network edge puncturing the surface at $p_1$ and the normal on the surface whose orientation is induced by using the homotopy $H_1$ as a parametrization. Noting that on the southern hemisphere the normal induced by $H_2$ again points upwards, we can immediately conclude that we get the same expression
\begin{equation}
\widehat{\mathcal{W}}_{H_2} \ket{\Psi_{2P}} = \chi_{c}(j) \, \mathbb{1}_{2} \ket{\Psi_{2P}} - 2i \, \chi_{s}(j) \kappa^{mn} \tau_{n} \, \pi^{(j)} \left[ \widehat{\mathcal{E}}_{m} \right] \ket{\Psi_{2P}}
\end{equation}
\begin{wrapfigure}{r}{6cm}
\includegraphics[scale=0.55]{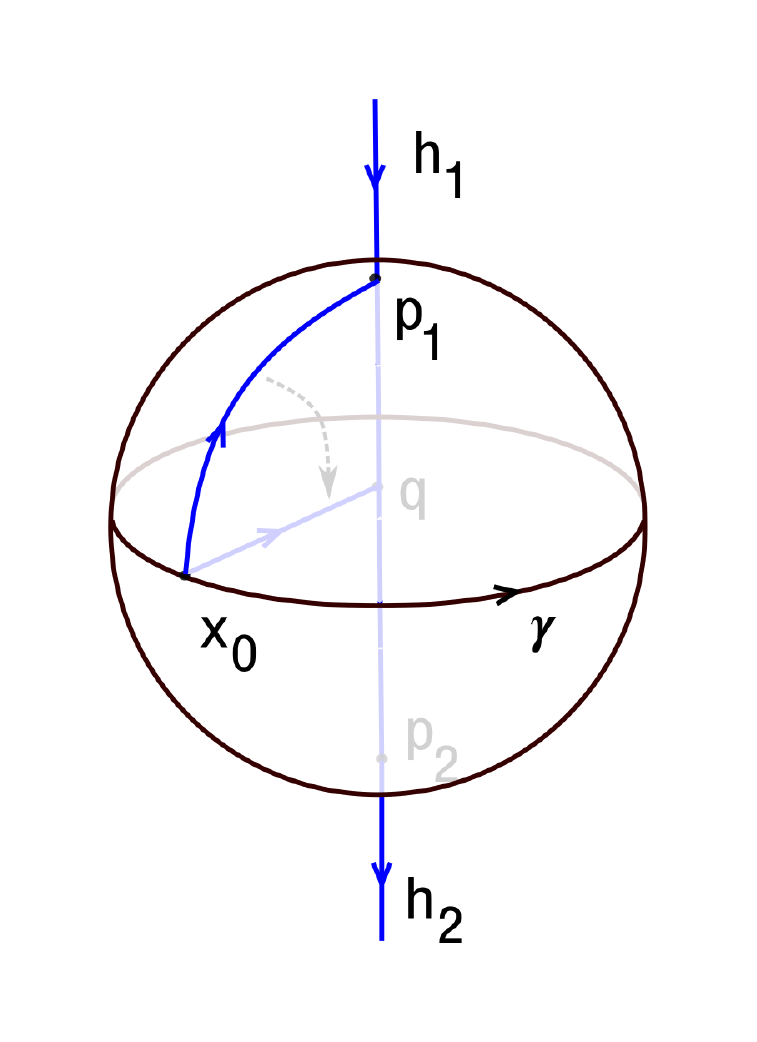}
\caption{moving of holonomies}
\label{fig:hol_move}
\end{wrapfigure}
for the surface holonomy associated to $H_2$. However, they are not exactly the same. We have hidden the conjugation with the holonomies from and to the distinguished point $x_0$ of the surface by writing $\mathcal{E}$ instead of $E$. The paths along which these holonomies are calculated are different for $H_1$ and $H_2$! One may hope that this difference disappears at the diffeomorphism invariant level. For the moment, we will assume that we can move the holonomy from $x_0$ to $p_1$ as illustrated in figure~\ref{fig:hol_move} (and accordingly for $p_2$).\\

Let us turn our attention to the other consistency condition \eqref{eqn:two-puncture-consistency-condition}. As discussed above, this condition encodes the property
\begin{equation}
h_{\gamma^{-1}} = h_{\gamma}^{-1}
\end{equation}
of classical path holonomies. In terms of surface holonomies, we now need to consider the two conditions
\begin{equation}
\widehat{\mathcal{W}}_{H_1} \widehat{\mathcal{W}}_{H_2^{-1}} \Psi_{2P} = \mathbb{1}_{2} \Psi_{2P}
\label{eqn:cond2a}
\end{equation}
and
\begin{equation}
\widehat{\mathcal{W}}_{H_2^{-1}} \widehat{\mathcal{W}}_{H_1} \Psi_{2P} = \mathbb{1}_{2} \Psi_{2P} \, .
\label{eqn:cond2b}
\end{equation}
One might be tempted now to conclude from our previous results that these two conditions are satisfied on states where the surface holonomies have unit determinant. However, this is not the case. While we have found in the previous chapter that
\begin{equation}
(\widehat{\mathcal{W}_H} \widehat{\mathcal{W}_H}^\dagger){}^A{}_B\ket{\jjstate}
= \det{}_\delta (\widehat{\mathcal{W}_H}) \;\delta^A_B\;\ket{\jjstate}
\end{equation}
and
\begin{equation}
\widehat{\mathcal{W}_H}^{-1} \ket{\jjstate} \approx \widehat{\mathcal{W}_{H^{-1}}} \ket{\jjstate} \, ,
\label{eqn:inv_homot_rel}
\end{equation}
which, when combined with our statements in the previous paragraph, would seem to imply that
\begin{equation}
\widehat{\mathcal{W}}_{H_1}^{\dagger} \ket{\jjstate} = \widehat{\mathcal{W}}_{H_2}^{\dagger} \ket{\jjstate} \quad \Longrightarrow \quad \widehat{\mathcal{W}}_{H_1} \widehat{\mathcal{W}}_{H_1}^{\dagger} \ket{\jjstate} = \widehat{\mathcal{W}}_{H_1} \widehat{\mathcal{W}}_{H_2}^{\dagger} \ket{\jjstate}
\quad \Longrightarrow \quad  \det{}_\delta (\widehat{\mathcal{W}}_{H_1}) \, \ket{\jjstate} = \widehat{\mathcal{W}}_{H_1} \widehat{\mathcal{W}}_{H_2^{-1}} \, \ket{\jjstate} \, ,
\label{eqn:false_argument}
\end{equation}
it is important to remember that the first equation in this deduction only holds if a certain diffeomorphism is applied. Therefore, it would probably be more precise to write it as
\begin{equation}
D_{2 \longleftarrow 1}\,\widehat{\mathcal{W}_{H_1}} \ket{\jjstate} = \widehat{\mathcal{W}_{H_2}} \ket{\jjstate} \, ,
\end{equation}
where the diffeomorphism $D_{2 \longleftarrow 1}$ moves the attachment point from $p_1$ to $p_2$. With this notation it is obvious that the argument \eqref{eqn:false_argument} already fails in the first step! We will therefore have to evaluate conditions \eqref{eqn:cond2a} and \eqref{eqn:cond2b} independently. Let us start with the latter.\\

In order to evaluate the left hand side, let us recall the action of the surface holonomy operator again, which was calculated in chapter~\ref{sec:quant} to be
\begin{equation}
\left( \widehat{\mathcal{W}}_{H_1} \right)^{A}_{~B} \ket{\jjstate} = \chi_{c}(j) \, \ket{\snsUnchanged} - 2i \,\chi_{s}(j) \, \ket{\snsUpper} \, .
\end{equation}
Here, we have introduced the graphical notation
\begin{align}
\ket{\snsUnchanged} &= \delta^{B}_{~A} \, h_{2} h_{p_1 \rightarrow p_2} h_{1} \, ,\\
\ket{\snsUpper} &= \left( \tilde{h}^{-1} \tau_{i} \tilde{h} \right)^{B}_{~A} \, \kappa^{il} \, h_{2} h_{p_1 \rightarrow p_2} T_{l} h_{1} \, ,
\end{align}
where $h_{p_1 \rightarrow p_2}$ denotes the holonomy along the segment of the spin network edge between $p_1$ and $p_2$, $T_{l}$ is a generator of the Lie algebra $\su$ in the spin~j representation and $\tilde{h}$ denotes a holonomy from the starting point $x_{0}$ of the surface holonomy to the point $p_1$. Analogously, we will also use
\begin{align}
\ket{\snsLower} &= \left( \tilde{\tilde{h}}^{-1} \tau_{i} \tilde{\tilde{h}} \right)^{B}_{~A} \, \kappa^{il} \, h_{2} T_{l} h_{p->q} h_{1}\, ,\\
\ket{\snsBoth} &= \left( \tilde{\tilde{h}}^{-1} \tau_{m} \tilde{\tilde{h}} \tilde{h}^{-1} \tau_{i} \tilde{h} \right)^{B}_{~A} \, \kappa^{il} \kappa^{mn} \, h_{2} T_{n} h_{p->q} T_{l} h_{1}\, .
\end{align}
We can now calculate how the operator appearing on the left hand side of condition~\eqref{eqn:cond2b} acts on the spin network state under consideration and we obtain
\begin{equation}
\begin{split}
\left( \W_{H_{2}^{-1}} \right)^{A}_{~C} \, \left( \W_{H_{1}} \right)^{C}_{~B} \, \ket{\jjstate} &= \left( \W_{H_{2}^{-1}} \right)^{A}_{~C} \left[ \chi_{c}(j) \, \ket{\snsUnchanged} - 2i \, \chi_{s}(j) \, \ket{\snsUpper} \right]\\
 &= \chi_{c}(j) \, \chi_{c}(j) \ket{\snsUnchanged} + 2i \, \chi_{c}(j) \, \chi_{s}(j) \ket{\snsLower}\\
 &\,- 2i \, \chi_{s}(j) \, c_{c}(j) \ket{\snsUpper} + 4\, \chi_{s}(j) \, \chi_{s}(j) \ket{\snsBoth}\, .
\end{split}
\end{equation}
The spin network states appearing in this result are not independent. We will assume
\begin{equation}
\alpha^{-1} \, \ket{\snsUpper} = \ket{\snsSingleDeformed} = \beta^{-1} \, \ket{\snsLower} \, ,
\end{equation}
where
\begin{equation}
\ket{\snsSingleDeformed} = \left( h^{-1} \tau_{i} h \right)^{B}_{~A} \, \kappa^{il} \, h_{2} h_{x->q} T_{l} h_{p->x} h_{1} \, ,
\end{equation}
and, consequently, we get
\begin{equation}
\ket{\snsBoth} = \alpha \beta \, \ket{\snsBothDeformed} \, .
\end{equation}
This last state can be expressed as a linear combination of the other two via
\begin{equation}
\begin{split}
\ket{\snsBothDeformed} &= \left( \tau_{b} \tau_{a} \right)^{B}_{~A} \, \pi_{1}(h^{-1})^{a}_{~i} \, \pi_{1}(h^{-1})^{b}_{~j} \, \kappa^{im} \kappa^{jn} \, h_{2} h_{x->q} T_{n} T_{m} h_{p->x} h_{1}\\
&= \left( - \frac{1}{4} \delta_{ab} \, \delta^{B}_{~A} + \frac{1}{2} \epsilon_{ba}^{~~c} \, \tau_{c~\,A}^{~B} \right) \, \pi_{1}(h^{-1})^{a}_{~i} \, \pi_{1}(h^{-1})^{b}_{~j} \, \kappa^{im} \kappa^{jn} \, h_{2} h_{x->q} T_{n} T_{m} h_{p->x} h_{1}\\
&= - \frac{1}{4} \, \delta^{B}_{~A} \, \delta_{ij} \, \kappa^{im} \kappa^{jn} \, h_{2} h_{x->q} T_{n} T_{m} h_{p->x} h_{1}\\
&\,+ \frac{1}{2} \epsilon_{ba}^{~~c} \, \tau_{c~\,A}^{~B} \, \pi_{1}(h^{-1})^{a}_{~i} \, \pi_{1}(h^{-1})^{b}_{~j} \, \kappa^{im} \kappa^{jn} \, h_{2} h_{x->q} T_{[n} T_{m]} h_{p->x} h_{1}\\
&= \frac{1}{8} \, \delta^{B}_{~A} \, \kappa^{mn} \, h_{2} h_{x->q} T_{n} T_{m} h_{p->x} h_{1}\\
&\,+ \frac{1}{4} \epsilon_{ba}^{~~c} \, \tau_{c~\,A}^{~B} \, \pi_{1}(h^{-1})^{a}_{~i} \, \pi_{1}(h^{-1})^{b}_{~j} \, \kappa^{im} \kappa^{jn} \, \epsilon_{nm}^{~~k} \, h_{2} h_{x->q} T_{k} h_{p->x} h_{1}\\
&= \frac{\Delta(j)}{8} \, \delta^{B}_{~A} \, h_{2} h_{p->q} h_{1}\\
&\,+ \frac{1}{16} \tau_{c~\,A}^{~B} \, \delta^{cd} \, \epsilon_{bad} \epsilon^{jik} \, \pi_{1}(h^{-1})^{a}_{~i} \, \pi_{1}(h^{-1})^{b}_{~j} \, h_{2} h_{x->q} T_{k} h_{p->x} h_{1}\\
&= \frac{j(j+1)}{16} \, \delta^{B}_{~A} \, h_{2} h_{p->q} h_{1} + \frac{1}{8} \tau_{c~\,A}^{~B} \, \delta^{cd} \, \pi_{1}(h)^{k}_{~d} \, h_{2} h_{x->q} T_{k} h_{p->x} h_{1}\\
&= \frac{j(j+1)}{16} \, \delta^{B}_{~A} \, h_{2} h_{p->q} h_{1} - \frac{1}{4} \tau_{c~\,A}^{~B} \, \pi_{1}(h^{-1})^{c}_{~l} \, \kappa^{kl} \, h_{2} h_{x->q} T_{k} h_{p->x} h_{1}\\
&= \frac{j(j+1)}{16} \, \ket{\snsUnchanged} - \frac{1}{4} \ket{\snsSingleDeformed} \, ,
\end{split}
\end{equation}
where we used that $\Delta(j) = \frac{j(j+1)}{2}$. Putting everything together, we end up with
\begin{equation}
\begin{split}
\left( \W_{{H_{2}^{-1}}} \right)^{A}_{~C} \, \left( \W_{H_1}^{\vphantom{-1}} \right)^{C}_{~B} \, \ket{\jjstate} &= \left[ \chi_{c}(j)^{2} + 4 \alpha \beta \frac{j(j+1)}{16} \, \chi_{s}(j)^{2} \right] \, \ket{\snsUnchanged}\\
&\,+ \left[ 2i \, \left( \beta - \alpha \right) \, \chi_{c}(j) \, \chi_{s}(j) - \alpha\beta \, \chi_{s}(j)^{2} \right] \, \ket{\snsSingleDeformed} \, .
\end{split}
\label{eqn:WWpsi}
\end{equation}
If we now want the state $\ket{\jjstate}$ to satisfy the quantized isolated horizon boundary condition, the right hand side of equation~\eqref{eqn:WWpsi} has to be equal to $\ket{\snsUnchanged} = \delta^{A}_{~B} \, \ket{\jjstate}$. We can therefore read off the equations
\begin{align}
\chi_{c}(j)^{2} + \alpha \beta \frac{j(j+1)}{4} \, \chi_{s}(j)^{2} &= 1
\label{eqn:QIHBCSolSystemEq1}\\
2i \, \left( \beta - \alpha \right) \, \chi_{c}(j) \, \chi_{s}(j) - \alpha \beta \, \chi_{s}(j)^{2} &= 0
\label{eqn:QIHBCSolSystemEq2}
\end{align}
that need to be fulfilled by $\alpha$, $\beta$ and $j$. Note that, in principle, $\alpha$ and $\beta$ are allowed to depend on $j$! We recognize that \eqref{eqn:QIHBCSolSystemEq1} will reduce to the condition of the surface holonomies having unit determinant if $\beta = \alpha^{-1}$. Since this condition has already shown up quite often during the analysis of the properties of surface holonomies, this seems like a natural condition and we will assume that $\alpha$ and $\beta$ satisfy the relation above. However, we get an additional condition from \eqref{eqn:QIHBCSolSystemEq2}. This can be solved by choosing $\alpha$ as a function of $j$ satisfying
\begin{equation}
\frac{2i}{\alpha(j)} \, \left( 1 - \alpha(j)^{2} \right) \chi_{c}(j) = \chi_{s}(j) \, ,
\end{equation}
which implies that $\alpha(j)$ has to be a solution to the quadratic equation
\begin{equation}
\alpha(j)^{2} + \frac{\chi_{s}(j)}{2i \, \chi_{c}(j)} \, \alpha(j) - 1 = 0 \, .
\end{equation}
We therefore get
\begin{equation}
\alpha(j) = \frac{\chi_{s}(j)}{4 \, \chi_{c}(j)}\, i \pm \sqrt{1 - \left[ \frac{\chi_{s}(j)}{4\, \chi_{c}(j)} \right]^{2}}
\end{equation}
and we immediately see that $\alpha(j)$ is purely imaginary if $\abs{\frac{\chi_{s}(j)}{4\, \chi_{c}(j)}} > 1$. On the other hand, if $\abs{\frac{\chi_{s}(j)}{4\, \chi_{c}(j)}} \leq 1$, we have $\abs{\alpha(j)} = 1$ and $\alpha(j)$ will therefore just be a phase. Actually, the latter is the case for most values of $j$. This can for example be seen from figure~\ref{fig:discriminant_plots}, where we have plotted the full discriminant
\begin{equation}
D := 1 - \left[ \frac{\chi_{s}(j)}{4\, \chi_{c}(j)} \right]^{2} \, .
\end{equation}
Remember that $\chi_{s}$ depends on the constant
\begin{equation}
c = -8\pi G \hbar \beta i \, \frac{4\pi (1-\beta^{2})}{a_{\mathcal{H}}} =: -\frac{8\pi i}{k} \, ,
\end{equation}
where
\begin{equation}
k = \frac{a_{\mathcal{H}}}{4\pi l_{P}^{2} \beta (1-\beta^{2})} \, ,
\end{equation}
$l_{P} = \sqrt{\hbar G}$ denotes the Planck length and $a_{\mathcal{H}}$ is the classical area of the horizon in the IHBC. We can now either keep $a_{\mathcal{H}}$ as a free classical parameter, or we can replace it with the eigenvalue of the area operator in the state under consideration. In figure~\ref{fig:discriminant_plots}, we show plots for both options\footnote{Note that in both cases we still have the Barbero-Immirzi parameter $\beta$ as a free parameter and the numerical values of the solutions will depend on its value. For all plots in this chapter we have used $\beta = 0.274$. This is the value determined from the entropy calculation for type I isolated horizons with gauge group \SU ~\citep{Ghosh:2006ph,Engle:2011vf,Agullo:2010zz}.}

\begin{figure}[!htb]
	\includegraphics[width=0.3\textwidth]{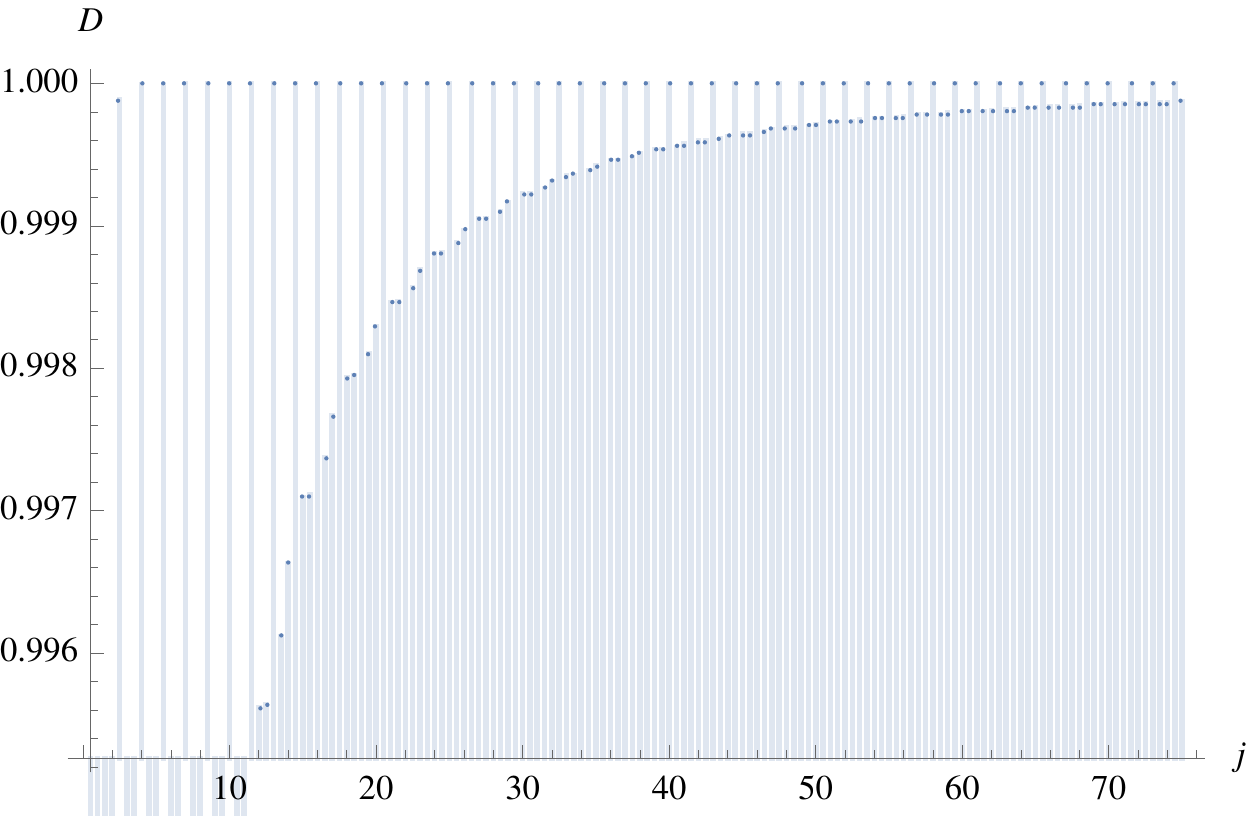}
	\includegraphics[width=0.3\textwidth]{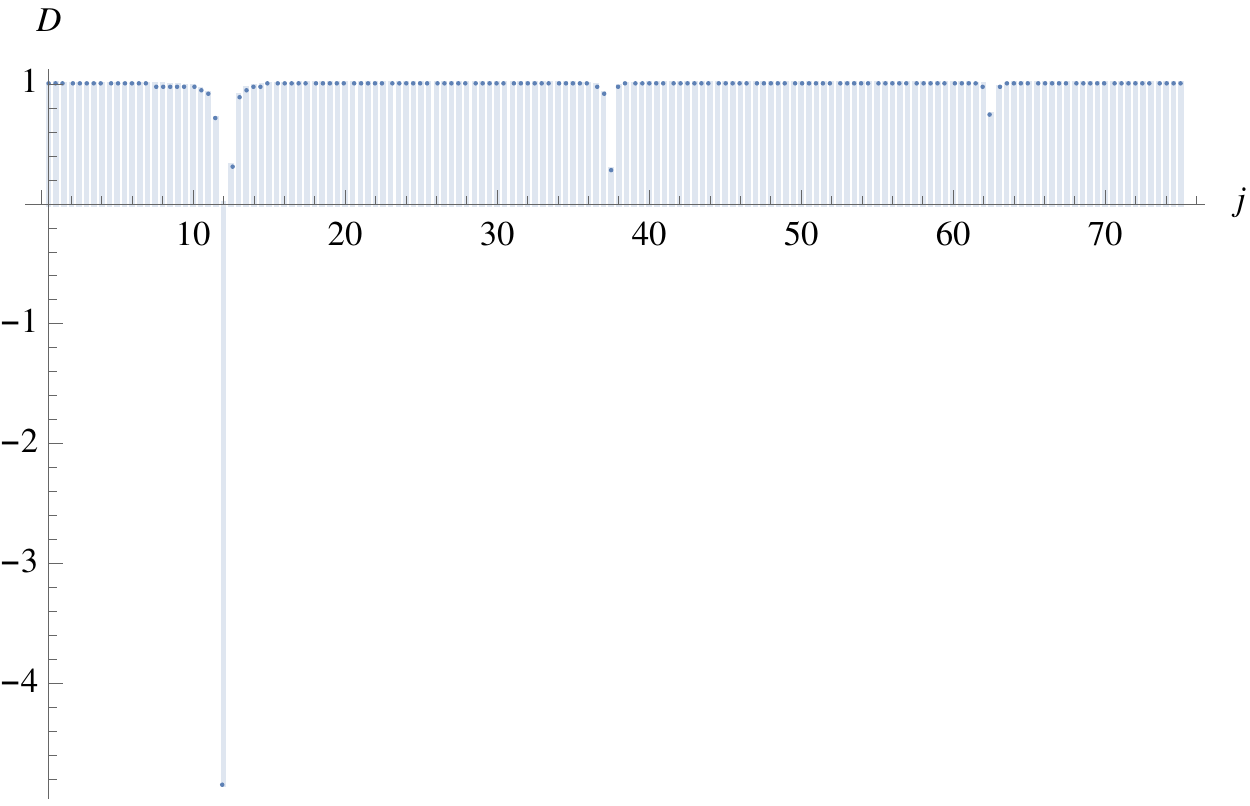}
	\includegraphics[width=0.3\textwidth]{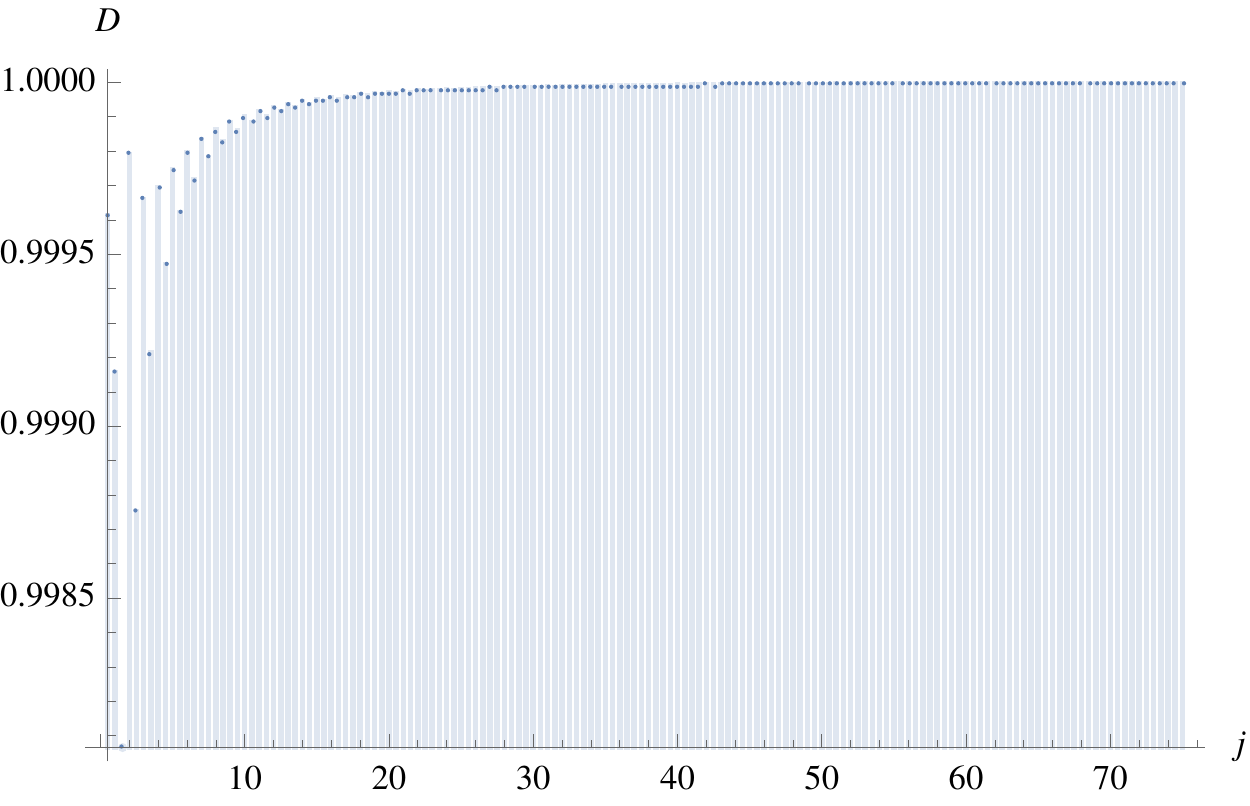}
	\caption{This figure shows the discriminant $D$ as a function of $j$. From left to right, the plots are for $k=3$, $k=101$ and $k=k(j)$, respectively.}
	\label{fig:discriminant_plots}
\end{figure}

The plots to the left and in the middle are for fixed values of $k$ ($k=3$ and $k=101$, respectively), while for the plot to the right we used
\begin{equation}
k(j) = \frac{4}{1 - \beta^2} \, \sqrt{j(j+1)}
\label{eqn:k(j)}
\end{equation}
following directly from inserting the area eigenvalue
\begin{equation}
a_{\mathcal{H}} = 16\pi\beta l_P^2 \, \sqrt{j(j+1)}
\end{equation}
into the definition of $k$. All three plots show that the discriminant $D$ tends to $1$ as $j$ increases. However, the details differ between the two choices for $k$. The plots for fixed $k$ show some periodic behavior (with period approximately $\sfrac{k}{4}$ in the plot for $k=101$). Also, the convergence of $D$ to unity seems slower in this case. The fast convergence rate in the case where $k=k(j)$ can also be seen more clearly from figure~\ref{fig:plot_1-discriminant_k=k(j)}, where we have plotted the difference $1-D$. This deviation from $1$ is less than $10^{-7}$ for all spins greater than $20$, and it seems to decrease by another order of magnitude before reaching spin $30$! This implies that if $j$ becomes large enough, the solutions for $\alpha$ will approximately become $\pm 1$. However, this way of solving \eqref{eqn:QIHBCSolSystemEq2} has a serious drawback. Recall that in order to solve the QIHBC, a state needs to satisfy not only \eqref{eqn:cond2b} but also \eqref{eqn:cond2a}. The latter condition leads to almost the same set of equations, but with the opposite sign in the second term of equation~\eqref{eqn:QIHBCSolSystemEq2}. We thus get a different solution for $\alpha$, implying that conditions~\eqref{eqn:cond2a} and~\eqref{eqn:cond2b} cannot be solved simultaneously using this approach.\\
\begin{figure}[!htb]
	\includegraphics[scale=0.4]{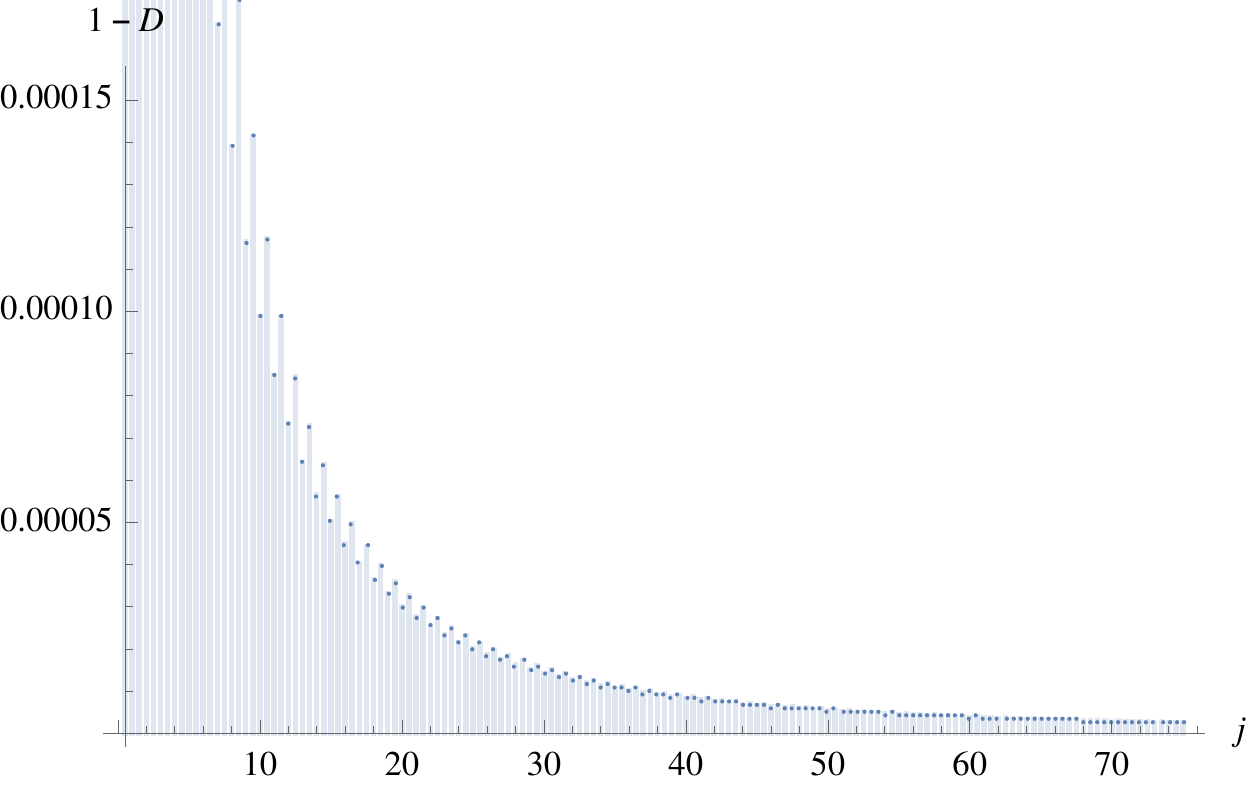}
	\caption{We plot the deviation from 1 of the discriminant $D$ for $k=k(j)$.}
	\label{fig:plot_1-discriminant_k=k(j)}
\end{figure}
Fortunately, we can also solve \eqref{eqn:QIHBCSolSystemEq2} by requiring that $\chi_{s}(j) = 0$. Although this works for any choice of $\alpha$ and $\beta$,we will still demand that $\beta = \alpha^{-1}$ in order to identify equation~\eqref{eqn:QIHBCSolSystemEq1} with the unit determinant condition. We can already see from figure~\ref{fig:plot_1-discriminant_k=k(j)}, where we have plotted
\begin{equation}
1 - D = \left[ \frac{\chi_{s}(j)}{4\, \chi_{c}(j)} \right]^{2} \, ,
\end{equation}
that $\chi_{s}(j)$ will approach $0$ as $j$ grows large. This is confirmed in figure~\ref{fig:chi_s}, where we have plotted $\chi_{s}(j)$ for the same three choices of $k$ as before. The overall tendency of converging to $0$ is again the same in all three cases. However, while the overall convergence is again faster in the case where $k$ depends on $j$, there are individual spins in the plots for fixed $k$, for which $\chi_{s}(j)$ is considerably closer to zero than for any spin less than $70$ in the $j$-dependent case!

\begin{figure}[!htb]
	\includegraphics[width=0.3\textwidth]{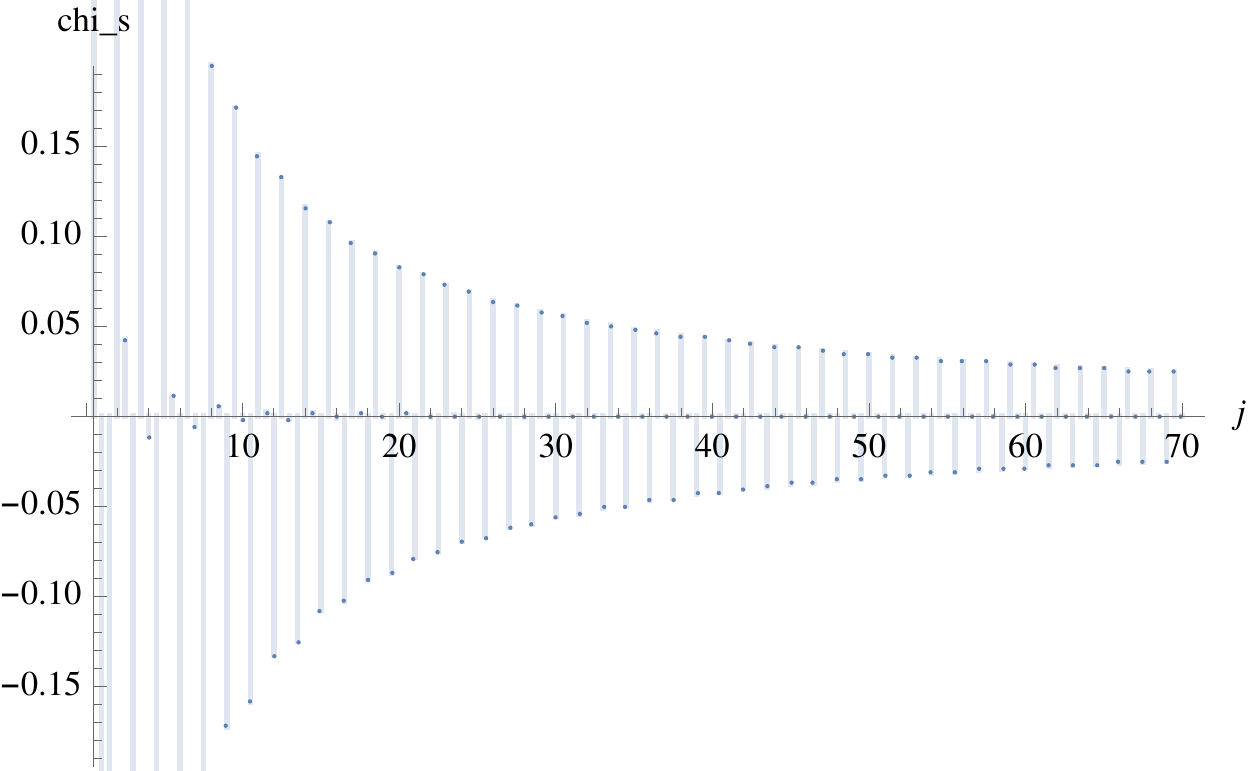}
	\includegraphics[width=0.3\textwidth]{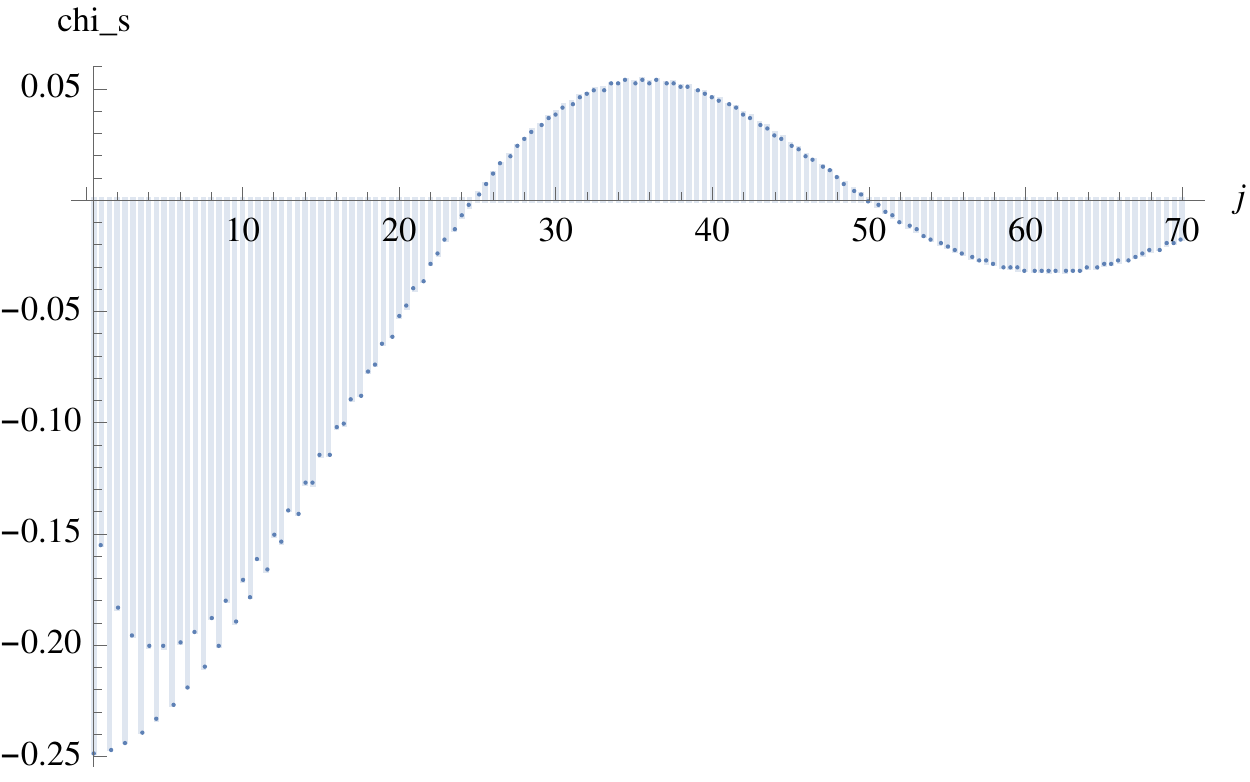}
	\includegraphics[width=0.3\textwidth]{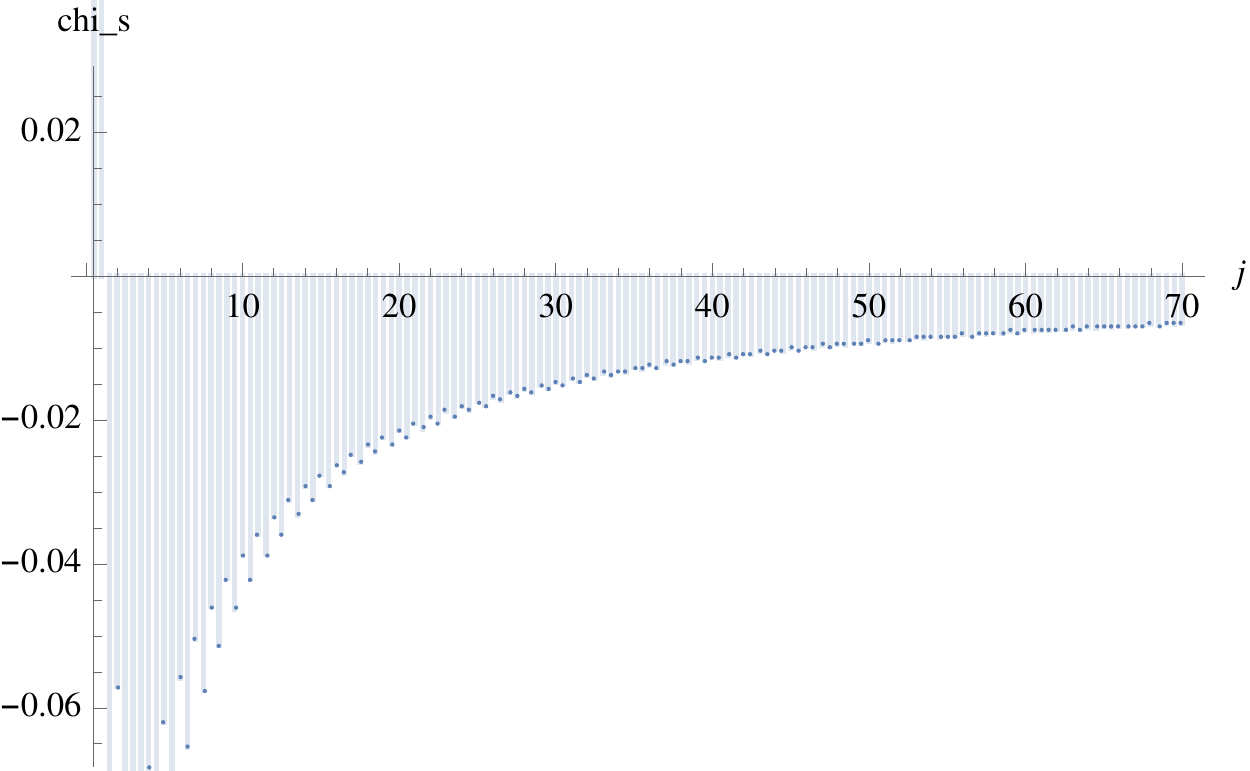}
	\caption{This figure shows plots of $\chi_{s}(j)$. As in figure~\ref{fig:discriminant_plots}, we have chosen $k=3$, $k=101$ and $k=k(j)$, respectively, from left to right.}
	\label{fig:chi_s}
\end{figure}

\section{Conclusion \& Outlook}

In the preceeding sections, we have presented three sets of results:
\begin{enumerate}
\item We have defined the surface-ordered, exponentiated fluxes $\widehat{\mathcal{W}_S}$ on a large class of states in the Hilbert space of LQG. 
\item We have explored many of their properties, such as commutation relations and the spectra of their trace and determinant. Interestingly, the $\widehat{\mathcal{W}_S}$ are in some sense close to classical group elements, but by no means in all aspects. 
\item We have started to analyze what kind of states fulfil the quantum version of the isolated horizon boundary condition. We find that a relevant operator seems to be the determinant of the $\widehat{\mathcal{W}_S}$ on the horizon. But the states we look at are too limited to make any solid statements about quantized IHs. 
\end{enumerate}
One fundamental limitation of our method is that while $\widehat{\mathcal{W}_S}$ determines the holonomy around $S$, it will create new, undetermined holonomies when acting on quantum states. We suspect that this is responsible for the problem that, although a classical surface holonomy is invariant under changes of the homotopy generating the surface, the $\widehat{\mathcal{W}_S}$ appear to depend on the parametrization in the sense that they give the punctures an ordering. This ordering is dependent on the parametization and changing it appears to change the state that results from the action of $\widehat{\mathcal{W}_S}$. This might be partially remedied if the properties of the holonomies created by $\widehat{\mathcal{W}_S}$ could be established through the use of the IHBC. This direction should be studied further. 

Another avenue for future work could be to discard the results \eqref{eq:xis}, \eqref{eq:chis} for the second coefficient in the action of $\widehat{\mathcal{W}_S}$ obtained by the Duflo map, and instead to fix it by demanding that the determinant is equal to 1 for all states, 
\begin{equation}
\xi_s^2(j)= \frac{8}{\Delta_j}\left(1 -\cosh^2\left[\frac{(2j+1)c}{8}\right]\right)\qquad \chi_s^2(j)= \frac{2}{\Delta_j}\left(1 -\cosh^2\left[\frac{(2j+1)c}{4}\right]\right).
\end{equation}
In this setting, one could continue to work with the LQG holonomies on the horizon and perhaps obtain a state described by a measure on the space $\overline{\mathcal{A}}$ of generalized connections. 

A final point that should be studied further is the quantization of the $\widehat{\mathcal{W}_S}$ without setting
\begin{equation}
||E||^{2} = 2 \, ||\Eu||^{2} + 2 \, ||\Ed||^{2}
\end{equation}
(see the discussion in section \ref{ssec:two-edge-punct-calc} for details). This might substantially change the properties of the operators $\widehat{\mathcal{W}_S}$ . 

\begin{acknowledgments}
The authors thank Lee Smolin for helpful comments, and the members of the Institute for Quantum Gravity at the Friedrich-Alexander-University Erlangen-N\"urnberg (FAU) for discussions at various stages of the work. 
\end{acknowledgments}


\appendix

\section{Action of quantum surface holonomy on one-edge puncture state (detailed calculation)}
\label{app:one-edge-calc}
We can rewrite the sum in the last line of the previous equation as

\begin{equation}
\begin{split}
\sum_{m=0}^{2(p+1)} \binom{2p+3}{m} B_{m} &\left( 2^{m} - 2 \right) \left[ 2j + 1 \right]^{2(p+1) - m} \\
= \, 2^{2(p+1)} &\sum_{m=0}^{2(p+1)} \binom{2p+3}{m} B_{m} \, \left[ \frac{2j + 1}{2} \right]^{2(p+1)-m} \\
- \, 2 \, &\sum_{m=0}^{2(p+1)} \binom{2p+3}{m} B_{m} \, \left[ 2j + 1 \right]^{2(p+1)-m}
\end{split}
\label{eqn:Bernoulli_sum_halfint_spin}
\end{equation}

and make use of the relation (known as Faulhaber's formula)

\begin{equation}
\sum_{k=0}^{n} \binom{n+1}{k} B_{k} m^{n-k} = \frac{n+1}{m} \left[ 1^{n} + 2^{n} + \dots + m^{n} \right]
\label{eqn:Bernoulli_relation_1}
\end{equation}

for the Bernoulli numbers (of second kind) $B_{k}$ and positive integers $m,n$ to obtain

\begin{equation}
\begin{split}
\sum_{m=0}^{2(p+1)} &\binom{2p+3}{m} B_{m} \left( 2^{m} - 2 \right) \left[ 2j + 1 \right]^{2(p+1) - m} \\
&= \, 2^{2p+2} \, \frac{2(2p + 3)}{2j+1}  \, \left[ 1^{2(p+1)} + 2^{2(p+1)} + \dots + \left( \frac{2j+1}{2} \right)^{2(p+1)} \right] \\
&~\quad - \, \frac{2(2p + 3)}{2j+1} \, \left[ 1^{2(p+1)} + 2^{2(p+1)} + \dots + \left( 2j + 1 \right)^{2(p+1)} \right] \\
&= \, - \, \frac{2(2p+3)}{2j+1} \, \left[  1^{2(p+1)} + 3^{2(p+1)} + \dots + \left( 2j \right)^{2(p+1)} \right] \, .
\end{split}
\end{equation}

Note that when applying equation \eqref{eqn:Bernoulli_relation_1} to the middle line of equation \eqref{eqn:Bernoulli_sum_halfint_spin} we assumed that $\frac{1}{2} \left( 2j+1 \right)$ is an integer, i.e. that the spin j is a half-integer. Inserting this result back into equation \eqref{eqn:Duflo_on_odd_powers_j-rep} we are left with

\begin{equation}
\begin{split}
Q_{DK} \left[ ||E||^{2k} E_{i} \right] \Biggl\vert_{\Hej}
&= \frac{2}{8^{k}} \, \frac{1}{2j(2j+1)(2j+2)} \, \pi^{(j)} \left[ \widehat{E}_{i} \right] \, \times \\ 
&~\quad \sum_{p=0}^{k} \binom{2k+4}{2p+3} \, \frac{(2p+2)(2p+3)}{(2k+2)(2k+4)} \, \left[ 1^{2(p+1)} + 3^{2(p+1)} + \dots + \left( 2j \right)^{2(p+1)} \right] \\
&= \frac{2}{8^{k}} \, \frac{2k+3}{2k+2} \, \frac{1}{2j(2j+1)(2j+2)} \, \pi^{(j)} \left[ \widehat{E}_{i} \right] \, \times \\ 
&~\quad \sum_{p=0}^{k} \binom{2k+2}{2p+1} \, \left[ 1^{2(p+1)} + 3^{2(p+1)} + \dots + \left( 2j \right)^{2(p+1)} \right] \, .
\end{split}
\label{eqn:Duflo_on_odd_powers_j-rep_2nd_formula}
\end{equation}

Let us focus on the last line to further simplify this expression. We can make use of the relation

\begin{equation}
\sum_{p=0}^{k} \binom{2k+2}{2p+1} n^{2p+1} = \frac{1}{2} \left[ \, \left( n+1 \right)^{2k+2} -  \left( n-1 \right)^{2k+2} \, \right]
\end{equation}

to obtain

\begin{equation}
\begin{split}
\sum_{p=0}^{k} \binom{2k+2}{2p+1} \, &\left[ 1^{2(p+1)} + 3^{2(p+1)} + \dots + \left( 2j \right)^{2(p+1)} \right] = \sum_{p=0}^{k} \binom{2k+2}{2p+1} \, \sum_{l=0}^{\frac{2j-1}{2}} \left( 2 l + 1 \right)^{2p+2}\\
&= \frac{1}{2} \sum_{l=0}^{\frac{2j-1}{2}} \left[ 2l+2 - 1 \right] \left[ \left( 2l+2 \right)^{2k+2} - \left( 2l \right)^{2k+2} \right] \\
&= \frac{1}{2} \sum_{l=0}^{\frac{2j-1}{2}} \left[ \left( 2l+2 \right)^{2k+3} - \left( 2l+2 \right)^{2k+2} - \left( 2l \right)^{2k+3} - \left( 2l \right)^{2k+2} \right] \\
&= \frac{1}{2} \sum_{l=0}^{\frac{2j-1}{2}} \left[ \left( 2l+2 \right)^{2k+3} - \left( 2l \right)^{2k+3} \right] - \frac{1}{2} \sum_{l=0}^{\frac{2j-1}{2}} \left[ \left( 2l+2 \right)^{2k+2} + \left( 2l \right)^{2k+2} \right] \\
&= \frac{1}{2} \left( 2j+1 \right)^{2k+3} - \frac{1}{2} \sum_{l=0}^{\frac{2j-1}{2}} \left[ \left( 2l+2 \right)^{2k+2} + \left( 2l \right)^{2k+2} \right] \\
&= \frac{1}{2} \left( 2j+1 \right)^{2k+3} - \frac{1}{2} \left( 2j+1 \right)^{2k+2} - \sum_{l=1}^{\frac{2j-1}{2}} \left( 2l \right)^{2k+2} \\
&= j \left( 2j+1 \right)^{2k+2} - 2^{2k+2} \sum_{l=1}^{\frac{2j-1}{2}} l^{2k+2} \\
&= 4^{k+1} \left[ j \left( \frac{2j+1}{2} \right)^{2k+2} - \sum_{l=1}^{\frac{2j-1}{2}} l^{2k+2} \right] \, .
\end{split}
\end{equation}

Reinserting this into \eqref{eqn:Duflo_on_odd_powers_j-rep_2nd_formula} we end up with

\begin{equation}
\begin{split}
Q_{DK} &\left[ ||E||^{2k} E_{i} \right] \Biggl\vert_{\Hej} \\
&= \frac{8}{2^{k}} \, \frac{2k+3}{2k+2} \, \frac{1}{2j(2j+1)(2j+2)} \, \left[ j \left( \frac{2j+1}{2} \right)^{2k+2} - \sum_{l=1}^{\frac{2j-1}{2}} l^{2k+2} \right] \, \pi^{(j)} \left[ \widehat{E}_{i} \right] \, .
\end{split}
\end{equation}

Using this result we can now use equation \eqref{eqn:surf_hol_expansion} to calculate

\begin{equation}
\begin{split}
Q_{DK} \left[ W_p \right] \Biggl\vert_{\Hej} &= \sum_{n=0}^{\infty} \frac{1}{(2n)!} \left( \frac{c}{2\sqrt{2}} \right)^{2n} Q_{DK} \left[ ||E||^{2n} \right] \Biggl\vert_{\Hej} \otimes \,\mathbb{1}_{2} \\
&\quad + \sum_{n=0}^{\infty} \frac{1}{(2n+1)!} \left( \frac{c}{2\sqrt{2}} \right)^{2n} c \, \kappa^{il} Q_{DK} \left[ ||E||^{2n} E_{i} \right] \Biggl\vert_{\Hej} \otimes \,\tau_{l} \\
&= \sum_{n=0}^{\infty} \frac{1}{(2n)!} \left( \frac{c}{2\sqrt{2}} \right)^{2n} \frac{1}{8^{n}} \left( 2j+1 \right)^{2n} \id_{\Hej} \otimes \mathbb{1}_{2} \\ 
&\quad + \sum_{n=0}^{\infty} \frac{1}{(2n+1)!} \left( \frac{c}{2\sqrt{2}} \right)^{2n} c \, \frac{8}{2^{n}} \, \frac{2n+3}{2n+2} \, \frac{1}{2j(2j+1)(2j+2)} \, \times \\
&\qquad \left[ j \left( \frac{2j+1}{2} \right)^{2n+2} - \sum_{l=1}^{\frac{2j-1}{2}} l^{2n+2} \right] \, \kappa^{il} \pi^{(j)} \left[ \widehat{E}_{i} \right] \otimes \tau_{l} \\
&= \operatorname{cosh} \left( \frac{(2j+1)c}{8} \right) \, \id_{\Hej} \otimes \mathbb{1}_{2} + \frac{128}{c} \, \frac{\kappa^{il} \pi^{(j)} \left[ \widehat{E}_{i} \right] \otimes \tau_{l}}{2j(2j+1)(2j+2)} \times \\
& \qquad \sum_{n=0}^{\infty} \frac{2n+3}{(2n+2)!} \left( \frac{c}{4} \right)^{2n+2} \, \left[ j \left( \frac{2j+1}{2} \right)^{2n+2} - \sum_{l=1}^{\frac{2j-1}{2}} l^{2n+2} \right] \, .
\end{split}
\end{equation}

Let us look at the term
\begin{equation}
\frac{128}{c} \, \frac{\kappa^{il} \pi^{(j)} \left[ \widehat{E}_{i} \right] \otimes \tau_{l}}{2j(2j+1)(2j+2)} \, \sum_{n=0}^{\infty} \frac{2n+3}{(2n+2)!} \left( \frac{c}{4} \right)^{2n+2} \, \left[ j \left( \frac{2j+1}{2} \right)^{2n+2} - \sum_{l=1}^{\frac{2j-1}{2}} l^{2n+2} \right]
\label{eqn:intermediate1}
\end{equation}
in more detail. The sum over the first term inside the square brackets can be calculated as
\begin{equation}
\begin{split}
\sum_{n=0}^{\infty} \frac{2n+3}{(2n+2)!} \left( \frac{c}{4} \right)^{2n+2} \, j \left( \frac{2j+1}{2} \right)^{2n+2} &= \frac{8j}{2j+1} \, \frac{\operatorname{d}}{\operatorname{d}c} \sum_{n=0}^{\infty} \frac{1}{(2n+2)!} \left[ \frac{(2j+1)c}{8} \right]^{2n+3} \\
&= \frac{8j}{2j+1} \, \frac{\operatorname{d}}{\operatorname{d}c} \left[  \frac{(2j+1)c}{8} \left[ \operatorname{cosh} \left( \frac{(2j+1)c}{8} \right) - 1 \right] \right] \\ 
&= \frac{\operatorname{d}}{\operatorname{d}c} \left[  jc \left[ \operatorname{cosh} \left( \frac{(2j+1)c}{8} \right) - 1 \right] \right] \, .
\end{split}
\end{equation}
The second term can also be simplified via
\begin{equation}
\begin{split}
\sum_{n=0}^{\infty} \frac{2n+3}{(2n+2)!} \left( \frac{c}{4} \right)^{2n+2} \, \sum_{l=1}^{\frac{2j-1}{2}} l^{2n+2} &= \sum_{l=1}^{\frac{2j-1}{2}} \sum_{n=0}^{\infty} \frac{2n+3}{(2n+2)!} \left( \frac{cl}{4} \right)^{2n+2} \\
&=\sum_{l=1}^{\frac{2j-1}{2}} \frac{4}{l} \frac{\operatorname{d}}{\operatorname{d}c} \sum_{n=0}^{\infty} \frac{1}{(2n+2)!} \left( \frac{cl}{4} \right)^{2n+3} \\
&= \sum_{l=1}^{\frac{2j-1}{2}} \frac{4}{l} \frac{\operatorname{d}}{\operatorname{d}c} \frac{cl}{4} \sum_{n=0}^{\infty} \frac{1}{(2n+2)!} \left( \frac{cl}{4} \right)^{2n+2} \\
&= \sum_{l=1}^{\frac{2j-1}{2}} \frac{4}{l} \frac{\operatorname{d}}{\operatorname{d}c} \frac{cl}{4} \left[ \operatorname{cosh} \left( \frac{cl}{4} \right) - 1 \right] \\
&= \frac{\operatorname{d}}{\operatorname{d}c} \, c \sum_{l=1}^{\frac{2j-1}{2}} \left[ \operatorname{cosh} \left( \frac{cl}{4} \right) - 1 \right] \\
&= \frac{\operatorname{d}}{\operatorname{d}c} \, c \left[ \frac{\sinh{\frac{(2j-1)c}{16}}}{\sinh{\frac{c}{8}}} \cosh{\frac{(2j+1)c}{16}} - \left( j - \frac{1}{2} \right) \right] 
\end{split}
\end{equation}
where we used
\begin{equation}
\sum_{m=1}^{n} \operatorname{cosh}(mx) = \frac{\operatorname{sinh}(\frac{nx}{2})}{\operatorname{sinh}(\frac{x}{2})} \operatorname{cosh} \left( \frac{(n+1)x}{2} \right)
\end{equation}
in the last equality. We can thus rewrite the sum in expression \eqref{eqn:intermediate1} as
\begin{equation}
\frac{\operatorname{d}}{\operatorname{d}c} \left[  jc \cosh{\left( \frac{(2j+1)c}{8} \right)} - \frac{c}{2} - c \, \frac{\sinh{\frac{(2j-1)c}{16}}}{\sinh{\frac{c}{8}}} \cosh{\left( \frac{(2j+1)c}{16} \right)} \right] \, .
\end{equation}
Defining
\begin{equation}
Q_{D} \left[ W_p \right] \Biggl\vert_{\Hej} =: \xi_c(j) \, \id_{\Hej} \otimes \mathbb{1}_{2} + i\,\xi_s(j) \, \kappa^{im} \pi^{(j)} \left[ \widehat{E_{i}} \right] \otimes \tau_{m}
\end{equation}
we arrive at
\begin{equation}
\xi_c(j) = \operatorname{cosh} \left( \frac{(2j+1)c}{8} \right)
\end{equation}
and
\begin{equation}
\begin{split}
\xi_s(j) &= \frac{-128i}{2j(2j+1)(2j+2)}\\ 
& \quad \times \frac{1}{c} \, \frac{\operatorname{d}}{\operatorname{d}c} \left[  jc \cosh{\left( \frac{(2j+1)c}{8} \right)} - \frac{c}{2} - c \, \frac{\sinh{\frac{(2j-1)c}{16}}}{\sinh{\frac{c}{8}}} \cosh{\left( \frac{(2j+1)c}{16} \right)} \right]
\end{split}
\end{equation}
for the function $\xi_c(j)$ and $c\xi_s(j)$. In the expression for $\xi_s(j)$ the derivative with respect to $c$ can still be carried out, leading to

\begin{equation}
\begin{split}
\xi_s(j) &= \frac{-8i}{2j(2j+1)(2j+2)} \times \\
&\qquad \left[ 2j(2j+1) \frac{\cosh{\left( \frac{(2j+1)c}{8} \right)}}{\frac{(2j+1)c}{8}} + 2j(2j+1) \sinh{\left( \frac{(2j+1)c}{8} \right)} \right. \\
& \qquad \left. - \frac{1}{\sinh{\left( \frac{c}{8} \right)}} \left( 2j \cosh{\left( \frac{2jc}{8} \right)} + 2j \frac{\sinh{\left( \frac{2jc}{8} \right)}}{\frac{2jc}{8}} - \sinh{\left( \frac{2jc}{8} \right)} \coth{\left( \frac{c}{8} \right)} \right) \right] \, .
\end{split}
\end{equation}

A similar calculation shows that the same results holds for integer values of $j$. The main difference between the two calculations is that one cannot simply apply Faulhaber's formula \eqref{eqn:Bernoulli_relation_1} in the case of integer spins. Instead, one has to use the recently discovered extended version of Faulhaber's formula \citep{Schumacher:2016} in order to simplify the complicated formula we started with.

\section{Action of quantum surface holonomy on two-edge puncture state (detailed calculation)}
\label{app:two_edge_calc}

Here, we will perform the calculation from appendix A again for the two-edge puncture. We will start by inserting  expressions \eqref{eqn:DK-map_gauge-inv_terms} and \eqref{eqn:DK-map_non-gauge-inv_terms} into each line of eq. \eqref{eqn:quantum_surf_hol_expanded_before_Duflo_evaluation} separately in order to keep the calculations legible. Starting with the first line, we have
\begin{equation}
\begin{split}
\sum_{k=0}^{\infty} &\frac{1}{(2k)!} \left( \frac{c}{2} \right)^{2k} \, \sum_{m=0}^{k} \binom{k}{m} \, Q_{DK}[||\Eu||^{2m}] \Biggl\vert_{\Heju} \, Q_{DK}[||\Ed||^{2(k-m)}] \Biggl\vert_{\Hejd} \; \otimes \, \mathbb{1}_{2} \\
&= \sum_{k=0}^{\infty} \frac{1}{(2k)!} \left( \frac{c}{2} \right)^{2k} \, \sum_{m=0}^{k} \binom{k}{m} \, \left[ \frac{(2j^{(u)}+1)^{2}}{8} \right]^{m} \, \left[ \frac{(2j^{(d)}+1)^{2}}{8} \right]^{k-m} \; \id_{\Hej} \otimes \, \mathbb{1}_{2} \\
&= \sum_{k=0}^{\infty} \frac{1}{(2k)!} \left( \frac{c}{2} \right)^{2k} \, \left[ \frac{(2j^{(u)}+1)^{2}}{8} + \frac{(2j^{(d)}+1)^{2}}{8} \right]^{k} \; \id_{\Hejej} \otimes \,\mathbb{1}_{2} \\
&= \cosh \left( \frac{c}{2} \sqrt{\frac{(2j^{(u)}+1)^{2}}{8} + \frac{(2j^{(d)}+1)^{2}}{8}} \right) \; \id_{\Hejej} \otimes \, \mathbb{1}_{2} \, .
\end{split}
\end{equation}
Using the fact that we are considering only gauge-invariant states, we know that we need to have $j^{(u)} = j^{(d)} = j$. The above expression therefore simplifies to
\begin{equation}
\cosh \left( \frac{(2j+1)c}{4} \right) \; \id_{\Heej} \otimes \, \mathbb{1}_{2} \, .
\end{equation}

The second line is considerably more involved and we will split it into two parts corresponding to the two summands in \eqref{eqn:DK-map_non-gauge-inv_terms}. The first part of the second line therefore reads
\begin{equation}
\begin{split}
c \, &\sum_{k=0}^{\infty} \frac{1}{(2k+1)!} \left( \frac{c}{2} \right)^{2k} \kappa^{ij} \, \sum_{m=0}^{k} \binom{k}{m} \, Q_{DK}[||\Eu||^{2m} \Eu_i] \Biggl\vert_{\text{summand } \# 1} \, Q_{DK}[||\Ed||^{2(k-m)}] \; \otimes \tau_{j} \\
&= c \, \sum_{k=0}^{\infty} \frac{1}{(2k+1)!} \left( \frac{c}{2} \right)^{2k} \, \sum_{m=0}^{k} \binom{k}{m} \left[ \frac{(2j^{(d)}+1)^{2}}{8} \right]^{k-m} \\
&\quad \times \frac{2}{8^{m}} \frac{1}{2j^{(u)}(2j^{(u)}+1)(2j^{(u)}+2)} \frac{2m+3}{2m+2} \, j^{(u)} \left( 2j^{(u)}+1 \right)^{2m+2} \, \kappa^{ij} \, \pi^{(j^{(u)})}(\hatEu_{i}) \otimes \,  \tau_{j} \\
&= \frac{c}{2j^{(u)}(2j^{(u)}+1)(2j^{(u)}+2)} \, \sum_{k=0}^{\infty} \frac{1}{(2k+2)!} \left( \frac{c}{2} \right)^{2k} \\
&\quad \times j^{(u)} \frac{\d}{\d j^{(u)}} \sum_{m=0}^{k} \binom{k+1}{m+1} \left[ \frac{(2j^{(d)}+1)^{2}}{8} \right]^{k+1-(m+1)} \frac{1}{8^{m}} \left( 2j^{(u)}+1 \right)^{2m+3} \, \kappa^{ij} \, \pi^{(j^{(u)})}(\hatEu_{i}) \otimes \,  \tau_{j} \\
&= \frac{c}{2j^{(u)}(2j^{(u)}+1)(2j^{(u)}+2)} \, \sum_{k=0}^{\infty} \frac{1}{(2k+2)!} \left( \frac{c}{2} \right)^{2k} \, \kappa^{ij} \, \pi^{(j^{(u)})}(\hatEu_{i}) \otimes \, \tau_{j}\\
&\quad \times j^{(u)} \frac{\d}{\d j^{(u)}} \left\{ 8 \left( 2j^{(u)}+1 \right) \left[ \left( \frac{(2j^{(u)}+1)^{2}}{8} + \frac{(2j^{(d)}+1)^{2}}{8} \right)^{k+1} - \left( \frac{(2j^{(d)}+1)^{2}}{8} \right)^{k+1} \right] \right\} \\
&= \frac{c}{2j^{(u)}(2j^{(u)}+1)(2j^{(u)}+2)} \, \sum_{k=0}^{\infty} \frac{1}{(2k+2)!} \left( \frac{c}{2} \right)^{2k} \, \kappa^{ij} \, \pi^{(j^{(u)})}(\hatEu_{i}) \otimes \,  \tau_{j}\\
&\quad \times j^{(u)} \left\{ 16 \, \left[ \left( \frac{(2j^{(u)}+1)^{2}}{8} + \frac{(2j^{(d)}+1)^{2}}{8} \right)^{k+1} - \left( \frac{(2j^{(d)}+1)^{2}}{8} \right)^{k+1} \right] \right. \\
&\qquad + \left. 8\, \left( 2j^{(u)}+1 \right) \, \left( k+1 \right) \, \left( \frac{(2j^{(u)}+1)^{2}}{8} + \frac{(2j^{(d)}+1)^{2}}{8} \right)^{k} \, \frac{(2j^{(u)}+1)}{2} \right\}
\end{split}
\end{equation}
We have
\begin{equation}
\begin{split}
c \, &\sum_{k=0}^{\infty} \frac{1}{(2k+1)!} \left( \frac{c}{2} \right)^{2k} \kappa^{ij} \, \sum_{m=0}^{k} \binom{k}{m} \, Q_{DK}[||\Eu||^{2m} \Eu_i] \Biggl\vert_{\text{summand } \# 1} \, Q_{DK}[||\Ed||^{2(k-m)}] \; \otimes \tau_{j} \\
&= \frac{2}{\ju + 1} \, \kappa^{mn} \, \pi^{(j^{(u)})}(\hatEu_{m}) \otimes \,  \tau_{n}\\
&\qquad \times \left[ \frac{\cosh \left( \frac{c}{2} \sqrt{\frac{(2\ju + 1)^2}{8} + \frac{(2\jd + 1)^2}{8}} \right) - \cosh \left( \frac{(2\jd + 1)c}{4\sqrt{2}} \right) }{\frac{(2\ju+1)c}{8}} + \frac{2\ju + 1}{2} \frac{\sinh \left( \frac{c}{2} \sqrt{\frac{(2\ju + 1)^2}{8} + \frac{(2\jd + 1)^2}{8}} \right)}{\sqrt{\frac{(2\ju + 1)^2}{8} + \frac{(2\jd + 1)^2}{8}}} \right]
\end{split}
\end{equation}
and, using again that $j^{(u)} = j^{(d)} = j$, we obtain
\begin{equation}
\begin{split}
\frac{8c}{2j(2j+1)(2j+2)} \, &\sum_{k=0}^{\infty} \frac{1}{(2k+2)!} \left( \frac{c}{2} \right)^{2k} \, \kappa^{ij} \, \pi^{(j)}(\hatEu_{i}) \otimes \, \tau_{j}\\
&\quad \times j \left\{ 2 \, \left( \frac{(2j+1)^{2}}{4} \right)^{k+1} - 2 \, \left( \frac{(2j+1)^{2}}{8} \right)^{k+1} + \left( k+1 \right) \frac{(2j+1)^{2}}{2} \left( \frac{(2j+1)^{2}}{4} \right)^{k} \right\} \\
= &\frac{8c}{(2j+1)(2j+2)} \, \sum_{k=0}^{\infty} \frac{1}{(2k+2)!} \left( \frac{c}{2} \right)^{2k} \left[ \frac{(2j+1)}{2} \right]^{2k+2} \, \kappa^{ij} \, \pi^{(j)}(\hatEu_{i}) \otimes \, \tau_{j}\\
&~- \frac{8c}{(2j+1)(2j+2)} \, \sum_{k=0}^{\infty} \frac{1}{(2k+2)!} \left( \frac{c}{2} \right)^{2k} \left[ \frac{(2j+1)}{2\sqrt{2}} \right]^{2k+2} \, \kappa^{ij} \, \pi^{(j)}(\hatEu_{i}) \otimes \, \tau_{j}\\
&~+ \frac{4c}{(2j+1)(2j+2)} \, \sum_{k=0}^{\infty} \frac{1}{(2k+1)!} \left( \frac{c}{2} \right)^{2k} \left[ \frac{(2j+1)}{2} \right]^{2k+2} \, \kappa^{ij} \, \pi^{(j)}(\hatEu_{i}) \otimes \,  \tau_{j}\\
= &\frac{2}{j+1} \left\{ 2 \, \frac{\cosh \left( \frac{(2j+1)c}{4} \right) - 1}{\frac{(2j+1)c}{4}} - \sqrt{2} \, \frac{\cosh \left( \frac{(2j+1)c}{4\sqrt{2}} \right) - 1}{\frac{(2j+1)c}{4\sqrt{2}}} + \sinh \left( \frac{(2j+1)c}{4} \right) \right\} \, \kappa^{ij} \, \pi^{(j)}(\hatEu_{i}) \otimes \,  \tau_{j} \, .
\end{split} 
\end{equation}
for the first contribution. Turning our attention to the second term in \eqref{eqn:DK-map_non-gauge-inv_terms}, we immediately note that it vanishes if either $\ju = 0$ or $\ju = \sfrac{1}{2}$. For higher spins, we get
\begin{equation}
\begin{split}
c \, \sum_{k=0}^{\infty} \frac{1}{(2k+1)!} \left( \frac{c}{2} \right)^{2k} &\kappa^{ij} \, \sum_{m=0}^{k} \binom{k}{m} \, Q_{DK}(||E^{(u)}||^{2m} E_{i}^{(u)}) \Biggl\vert_{\text{summand } \# 2} \, Q_{DK}(||E^{(d)}||^{2(k-m)}) \; \tau_{j} \\
&= - c \, \sum_{k=0}^{\infty} \frac{1}{(2k+1)!} \left( \frac{c}{2} \right)^{2k} \, \sum_{m=0}^{k} \binom{k}{m} \left[ \frac{(2j^{(d)}+1)^{2}}{8} \right]^{k-m} \, \kappa^{ij} \, \pi^{(j^{(u)})}(\widehat{E^{(u)}}_{i}) \otimes \,  \tau_{j} \\
&\quad \times \frac{2}{8^{m}} \frac{1}{2j^{(u)}(2j^{(u)}+1)(2j^{(u)}+2)} \frac{2m+3}{2m+2} \, \sum_{l=1}^{\lfloor j^{(u)} \rfloor} \left( 2l \right)^{2m+2} \, \kappa^{ij} \, \pi^{(j^{(u)})}(\widehat{E^{(u)}}_{i}) \otimes \,  \tau_{j} \\
&= - \, \frac{2c}{2j^{(u)}(2j^{(u)}+1)(2j^{(u)}+2)} \, \sum_{l=1}^{\lfloor j^{(u)} \rfloor} \, \sum_{k=0}^{\infty} \frac{1}{(2k+1)!} \left( \frac{c}{2} \right)^{2k} \, \kappa^{ij} \, \pi^{(j^{(u)})}(\widehat{E^{(u)}}_{i}) \otimes \,  \tau_{j} \\
&\quad \times \sum_{m=0}^{k} \binom{k}{m} \left[ \frac{(2j^{(d)}+1)^{2}}{8} \right]^{k-m} \, \frac{2m+3}{2m+2} \, \frac{1}{8^{m}} \, \left( 2l \right)^{2m+2} \\
&= - \, \frac{c}{2j^{(u)}(2j^{(u)}+1)(2j^{(u)}+2)} \, \sum_{l=1}^{\lfloor j^{(u)} \rfloor} \, \sum_{k=0}^{\infty} \frac{1}{(2k+2)!} \left( \frac{c}{2} \right)^{2k} \, \kappa^{ij} \, \pi^{(j^{(u)})}(\widehat{E^{(u)}}_{i}) \otimes \,  \tau_{j} \\
&\quad \times \sum_{m=0}^{k} \binom{k+1}{m+1} \left[ \frac{(2j^{(d)}+1)^{2}}{8} \right]^{k+1-(m+1)} \, \frac{8}{8^{m+1}} \frac{\d}{\d l} \left( 2l \right)^{2m+3} \\
&= - \, \frac{8c}{2j^{(u)}(2j^{(u)}+1)(2j^{(u)}+2)} \, \sum_{l=1}^{\lfloor j^{(u)} \rfloor} \, \sum_{k=0}^{\infty} \frac{1}{(2k+2)!} \left( \frac{c}{2} \right)^{2k} \, \kappa^{ij} \, \pi^{(j^{(u)})}(\widehat{E^{(u)}}_{i}) \otimes \,  \tau_{j} \\
&\quad \times \frac{\d}{\d l} \left( 2l \right) \sum_{m=1}^{k+1} \binom{k+1}{m} \left[ \frac{(2j^{(d)}+1)^{2}}{8} \right]^{k+1-m} \, \frac{1}{8^{m}} \left( 2l \right)^{2m} \\
&= - \, \frac{8c}{2j^{(u)}(2j^{(u)}+1)(2j^{(u)}+2)} \, \sum_{l=1}^{\lfloor j^{(u)} \rfloor} \, \sum_{k=0}^{\infty} \frac{1}{(2k+2)!} \left( \frac{c}{2} \right)^{2k} \, \kappa^{ij} \, \pi^{(j^{(u)})}(\widehat{E^{(u)}}_{i}) \otimes \,  \tau_{j} \\
&\quad \times \frac{\d}{\d l} \left( 2l \right) \left\{ \left[ \frac{(2j^{(d)}+1)^{2}}{8} + \frac{l^{2}}{2} \right]^{k+1} - \left[ \frac{(2j^{(d)}+1)^{2}}{8} \right]^{k+1} \right\} \, .
\end{split}
\end{equation}

Using once again that $j^{(u)} = j^{(d)} = j$, the second contribution simplifies to
\begin{equation}
\begin{split}
- \frac{16}{2j(2j+1)(2j+2)} \, &\kappa^{ij} \, \pi^{(j)}(\widehat{E^{(u)}}_{i}) \otimes \, \tau_{j} \\
&\quad \times \sum_{l=1}^{\lfloor j \rfloor} \frac{\d}{\d l} \left\{ 2l \, \left[ \frac{\cosh \left( \sqrt{\frac{(2j+1)^{2}}{8} + \frac{l^{2}}{2}} \frac{c}{2} \right) - 1}{\frac{c}{2}} - \frac{\cosh \left( \frac{(2j+1)c}{4\sqrt{2}} \right) - 1}{\frac{c}{2}} \right] \right\}\\
&= - \frac{32}{2j(2j+1)(2j+2)} \, \kappa^{ij} \, \pi^{(j)}(\widehat{E^{(u)}}_{i}) \otimes \, \tau_{j} \\
&\quad \times \sum_{l=1}^{\lfloor j \rfloor} \, \left[ \frac{\cosh \left( \sqrt{\frac{(2j+1)^{2}}{8} + \frac{l^{2}}{2}} \frac{c}{2} \right) - \cosh \left( \frac{(2j+1)c}{4\sqrt{2}} \right)}{\frac{c}{2}} + \frac{\frac{l^{2}}{2} \sinh \left( \sqrt{\frac{(2j+1)^{2}}{8} + \frac{l^{2}}{2}} \frac{c}{2} \right)}{\sqrt{\frac{(2j+1)^{2}}{8} + \frac{l^{2}}{2}}} \right]
\end{split}
\end{equation}

\bibliography{bibliography}

\end{document}